\renewcommand{\ALG@beginalgorithmic}{\ttfamily\footnotesize}
\newcommand{\N}{\mathbb{N}}
\newcommand{\set}[1]{\{#1\}}
\newcommand{\ps}{\mathcal{P}} % for powerset
\newcommand{\emptyseq}{\emptyset} % or using <>?
\newcommand{\tuple}[1]{\langle#1\rangle} % or using (#1)?
\newcommand{\bfit}[1]{\emph{{#1}}} % bf + it: \emph{\textbf{#1}}
\newcommand{\rsstyle}[1]{\underline{#1}} % replica set
\newcommand{\scstyle}[1]{\colorbox{lightgray}{$#1$}} % sharded cluster
\newcommand{\untilstyle}[1]{\textcolor{cyan}{#1}} % wait until
\newcommand{\atomicbegin}{{\bf \textcolor{teal}{atomic\big\{}}}
\newcommand{\atomicend}{{\bf \textcolor{teal}{\big\}}}}
\newcommand{\code}[2]{#1:#2}
\newcommand{\send}{{\bf send\;}}
\newcommand{\sendto}{{\bf \;to\;}}
\newcommand{\wait}{{\bf wait\;}}
\newcommand{\until}{{\bf until\;}}
\newcommand{\receive}{{\bf receive\;}}
\newcommand{\from}{{\bf \;from\;}}
\newcommand{\wtalg}{\textsc{WiredTiger}}
\newcommand{\rsalg}{\textsc{ReplicaSet}}
\newcommand{\scalg}{\textsc{ShardedCluster}}
\newcommand{\rollback}{{\sf rollback}}
\newcommand{\ok}{{\sf ok}}
\newcommand{\rc}{\emph{readConcern}\;}
\newcommand{\wc}{\emph{writeConcern}\;}
\newcommand{\majority}{\emph{majority}}
\newcommand{\snapshotrc}{\emph{snapshot}}
\newcommand{\noop}{{\sf noop}}
\newcommand{\T}{\mathbb{T}}
\newcommand{\Key}{{\sf Key}}
\newcommand{\Val}{{\sf Val}}
\newcommand{\wtdep}{\textsc{wt}}
\newcommand{\rsdep}{\textsc{rs}}
\newcommand{\scdep}{\textsc{sc}}
\newcommand{\WTSID}{{\sf WT\_SID}}
\newcommand{\wtsid}{{\sf wt\_sid}}
\newcommand{\wtsidvar}{\mathit{wt\_sid}}
\newcommand{\txn}{{\sf txn}}
\newcommand{\txnvar}{\mathit{txn}}
\newcommand{\wtsession}{{\sf wt\_session}}
\newcommand{\wttsnone}{\bot_{{\sf ts}}}
\newcommand{\wttidnone}{\bot_{{\sf tid}}}
\newcommand{\deployments}{{\sf Dep}}
\newcommand{\deployment}{\mathit{d}}
\newcommand{\id}{{\sf id}}
\newcommand{\TID}{{\sf TID}}
\newcommand{\tid}{{\sf tid}}
\newcommand{\tidvar}{\mathit{tid}}
\newcommand{\wttxnglobal}{{\sf wt\_global}}
\newcommand{\currenttid}{{\sf tid}}
\newcommand{\WTTXN}{{\sf WT\_TXN}}
\newcommand{\WTTXNUPDATE}{\sf WT\_TXN\_UPDATE}
\newcommand{\upperlimit}{{\sf limit}}
\newcommand{\concur}{{\sf concur}}
\newcommand{\mods}{{\sf mods}}
\newcommand{\store}{{\sf store}}
\newcommand{\kvar}{\mathit{k}}
\newcommand{\keyvar}{\mathit{key}}
\newcommand{\val}{{\sf val}}
\newcommand{\vvar}{\mathit{v}}
\newcommand{\valvar}{\mathit{val}}
\newcommand{\wtstart}{\textsc{wt\_start}}
\newcommand{\wtread}{\textsc{wt\_read}}
\newcommand{\wtupdate}{\textsc{wt\_update}}
\newcommand{\wtcommit}{\textsc{wt\_commit}}
\newcommand{\wtrollback}{\textsc{wt\_rollback}}
\newcommand{\txnvis}{\textsc{visible}}
\newcommand{\allcommitted}{\textsc{all\_committed}}
\newcommand{\wtsetcommitts}{\textsc{wt\_set\_commit\_ts}}
\newcommand{\RSTXN}{{\sf RS\_TXN}}
\newcommand{\RSTXNUPDATE}{\sf RS\_TXN\_UPDATE}
\newcommand{\RSSID}{{\sf RS\_SID}}
\newcommand{\rssidvar}{\mathit{rs\_sid}}
\newcommand{\OPLOG}{\textsc{oplog}}
\newcommand{\TS}{{\sf TS}}
\newcommand{\ts}{{\sf ts}}
\newcommand{\tsvar}{\mathit{ts}}
\newcommand{\readts}{{\sf read\_ts}}
\newcommand{\readtsvar}{\mathit{read\_ts}}
\newcommand{\maxcommitts}{{\sf max\_commit\_ts}}
\newcommand{\committs}{{\sf commit\_ts}}
\newcommand{\committsvar}{\mathit{commit\_ts}}
\newcommand{\pinnedcommittedts}{{\sf commit\_ts}}
\newcommand{\pinnedcommittedtsvar}{\mathit{commit\_ts}}
\newcommand{\ops}{{\sf ops}}
\newcommand{\opsvar}{\mathit{ops}}
\newcommand{\oplog}{{\sf oplog}}
\newcommand{\oplogvar}{\mathit{oplog}}
\newcommand{\rswtconns}{{\sf rs\_wt}}
\newcommand{\txnops}{{\sf txn\_mods}}
\newcommand{\rsstart}{\textsc{rs\_start}}
\newcommand{\rsread}{\textsc{rs\_read}}
\newcommand{\rsupdate}{\textsc{rs\_update}}
\newcommand{\rscommit}{\textsc{rs\_commit}}
\newcommand{\rsrollback}{\textsc{rs\_rollback}}
\newcommand{\openwtsession}{\textsc{open\_wt\_session}}
\newcommand{\ct}{\sf ct} % cluster\_time
\newcommand{\ctvar}{\mathit{ct}}
\newcommand{\tick}{\textsc{tick}}
\newcommand{\replicate}{\textsc{replicate}}
\newcommand{\pulloplog}{\textsc{pull\_oplog}}
\newcommand{\pushoplog}{\textsc{push\_oplog}}
\newcommand{\Primary}{\mathit{P}}
\newcommand{\primaryvar}{\mathit{p}}
\newcommand{\secondaryvar}{\mathit{s}}
\newcommand{\replicateack}{\textsc{replicate\_ack}}
\newcommand{\lastpulled}{{\sf last\_pulled}}
\newcommand{\lastpulledacks}{{\sf last\_pulled\_ack}}
\newcommand{\lastpulledvar}{\mathit{last\_pulled}}
\newcommand{\lastmajoritycommitted}{{\sf last\_majority\_committed}}
\newcommand{\SCTXN}{{\sf SC\_TXN}}
\newcommand{\SCTXNRO}{{\sf SC\_RO\_TXN}}
\newcommand{\SCTXNUPDATE}{{\sf SC\_UPDATE\_TXN}}
\newcommand{\ShardID}{{\sf SHARD\_ID}}
\newcommand{\statusvar}{\mathit{status}}
\newcommand{\phase}{{\sf phase}}
\newcommand{\phasevar}{\mathit{phase}}
\newcommand{\scrsconns}{{\sf sc\_rs}}
\newcommand{\shards}{{\sf shards}}
\newcommand{\primaryof}{\textsc{primaries\_of}}
\newcommand{\SCSID}{{\sf SC\_SID}}
\newcommand{\scsidvar}{\mathit{sc\_sid}}
\newcommand{\twopc}{\textsc{2pc}}
\newcommand{\twopcack}{\textsc{2pc\_ack}}
\newcommand{\mongosvar}{\mathit{m}}
\newcommand{\scread}{\textsc{sc\_read}}
\newcommand{\waitforreadconcern}{\textsc{wait\_for\_read\_concern}}
\newcommand{\scupdate}{\textsc{sc\_update}}
\newcommand{\scstart}{\textsc{sc\_start}}
\newcommand{\prepare}{\textsc{prepare}}
\newcommand{\prepareack}{\textsc{prepare\_ack}}
\newcommand{\commit}{\textsc{commit}}
\newcommand{\abort}{\textsc{abort}}
\newcommand{\decack}{\textsc{dec\_ack}}
\newcommand{\wtprepare}{\textsc{wt\_prepare}}
\newcommand{\wtprepareinprogress}{\textsc{prepared}}
\newcommand{\wtprepareresolved}{\textsc{committed}}
\newcommand{\wtsetreadts}{\textsc{wt\_set\_read\_ts}}
\newcommand{\preparets}{{\sf prepare\_ts}}
\newcommand{\preparetsvar}{\mathit{prepare\_ts}}
\newcommand{\wtcommitprepare}{\textsc{wt\_commit\_prepare}}
\newcommand{\wtcommitpreparets}{\textsc{wt\_commit\_prepare\_ts}}
\newcommand{\caseclockskew}{\case-\textsc{Clock-Skew}}
\newcommand{\casependingcommitread}{\case-\textsc{Pending-Commit-Read}}
\newcommand{\casependingcommitupdate}{\case-\textsc{Pending-Commit-Update}}
\newcommand{\caseholes}{\case-\textsc{Holes}}
\newcommand{\lclock}{{\sf lc}}
\newcommand{\E}{E}
\newcommand{\evar}{\mathit{e}}
\newcommand{\fvar}{\mathit{f}}
\newcommand{\Event}{{\sf Event}}
\newcommand{\HEvent}{{\sf HEvent}}
\newcommand{\op}{{\sf op}}
\newcommand{\Op}{{\sf Op}}
\newcommand{\readevent}{{\sf read}}
\newcommand{\writeevent}{{\sf write}}
\newcommand{\Write}{{\sf Write}\;}
\newcommand{\WriteTx}{{\sf WriteTx}}
\newcommand{\starttime}{\textsl{start}}
\newcommand{\starttimewt}{\textsl{start}_\textsl{wt}}
\newcommand{\committime}{\textsl{commit}}
\newcommand{\committimewt}{\textsl{commit}_\textsl{wt}}
\newcommand{\h}{\mathcal{H}}
\renewcommand{\ae}{\mathcal{A}}
\newcommand{\axiom}{\Phi}
\newcommand{\rel}[1]{\xrightarrow{#1}}
\newcommand{\comp}{\;;}
\newcommand{\po}{{\sf po}}
\newcommand{\sorel}{\textsc{so}}
\newcommand{\sowt}{\textsc{so}_{\textsc{wt}}}
\newcommand{\sors}{\textsc{so}_{\textsc{rs}}}
\newcommand{\sosc}{\textsc{so}_{\textsc{sc}}}
\newcommand{\rb}{\textsc{rb}}
\newcommand{\rbwt}{\textsc{rb}_{\textsc{wt}}}
\newcommand{\rbrs}{\textsc{rb}_{\textsc{rs}}}
\newcommand{\cb}{\textsc{cb}}
\newcommand{\cbwt}{\textsc{cb}_\textsc{wt}}
\newcommand{\cbrs}{\textsc{cb}_\textsc{rs}}
\newcommand{\vis}{\textsc{vis}}
\newcommand{\viswt}{\textsc{vis}_{\textsc{wt}}}
\newcommand{\visrs}{\textsc{vis}_{\textsc{rs}}}
\newcommand{\vissc}{\textsc{vis}_{\textsc{sc}}}
\newcommand{\ar}{\textsc{ar}}
\newcommand{\arwt}{\textsc{ar}_{\textsc{wt}}}
\newcommand{\arrs}{\textsc{ar}_{\textsc{rs}}}
\newcommand{\arsc}{\textsc{ar}_{\textsc{sc}}}
\newcommand{\hist}{\text{Hist}}
\newcommand{\intaxiom}{\textsc{Int}}
\newcommand{\extaxiom}{\textsc{Ext}}
\newcommand{\sessionaxiom}{\textsc{Session}}
\newcommand{\prefixaxiom}{\textsc{Prefix}}
\newcommand{\rbaxiom}{\textsc{ReturnBefore}}
\newcommand{\cbaxiom}{\textsc{CommitBefore}}
\newcommand{\conflict}{\bowtie}
\newcommand{\noconflictaxiom}{\textsc{NoConflict}}
\newcommand{\inrbaxiom}{\textsc{InReturnBefore}}
\newcommand{\realtimesnapshotaxiom}{\textsc{RealTimeSnapshot}}
\newcommand{\si}{\textsc{SI}}
\newcommand{\ansisi}{\textsc{ANSI-SI}}
\newcommand{\rtsi}{\textsc{RealtimeSI}}
\newcommand{\gsi}{\textsc{GSI}}
\newcommand{\parallelsi}{\textsc{PSI}}
\newcommand{\strongsi}{\textsc{StrongSI}}
\newcommand{\strongsessionsi}{\textsc{StrongSessionSI}}
\newcommand{\sessionsi}{\textsc{SessionSI}}
\newcommand{\nmsi}{\textsc{NMSI}}
\newcommand{\timepoint}{\tau}
\newcommand{\case}{\textsc{Case}}
\newcommand{\casei}{\case\; I}
\newcommand{\caseii}{\case\; II}
\newcommand{\wtvis}{\textsl{wt\_vis}}
\newcommand{\rsvis}{\textsl{rs\_vis}}
\newcommand{\scvis}{\textsl{sc\_vis}}
\newcommand{\param}[1]{\textbf{\##1}}
\newcommand{\cyan}[1]{\textcolor{cyan}{#1}}
\title{Verifying Transactional Consistency of MongoDB}
\author{Hongrong Ouyang}
    {State Key Laboratory for Novel Software Technology, Nanjing, China 
    \and Nanjing University, China 
    \and \url{https://tsunaou.github.io/}}
    {mf20330056@smail.nju.edu.cn}
    {}{}
\author{Hengfeng Wei~\footnote{Corresponding author.}}
    {State Key Laboratory for Novel Software Technology, Nanjing, China 
    \and Nanjing University, China
    \and Software Institute at Nanjing University, China
    \and \url{https://hengxin.github.io/}}
    {hfwei@nju.edu.cn}
    {}{}
\author{Yu Huang}
    {State Key Laboratory for Novel Software Technology, Nanjing, China 
    \and Nanjing University, China 
    \and \url{https://cs.nju.edu.cn/yuhuang/}}
    {yuhuang@nju.edu.cn}
    {}{}
\author{Haixiang Li}
    {TDSQL Team of Technology and Engineering Group of Tencent, Tencent Inc., Shenzhen, China}
    {blueseali@tencent.com}
    {}{}
\author{Anqun Pan}
    {TDSQL Team of Technology and Engineering Group of Tencent, Tencent Inc., Shenzhen, China}
    {aaronpan@tencent.com}
    {}{}
\keywords{MongoDB, Snapshot isolation, Verification, Jepsen testing}
\begin{document}
\maketitle

% abstract.tex

\begin{abstract}
  MongoDB is a popular general-purpose, document-oriented,
  distributed NoSQL database.
  It supports transactions in three different deployments:
  single-document transactions utilizing the WiredTiger storage engine in a standalone node,
  multi-document transactions in a replica set
  which consists of a primary node and several secondary nodes,
  and distributed transactions in a sharded cluster
  which is a group of multiple replica sets, among which data is sharded.
  A natural and fundamental question about MongoDB transactions is:
  \emph{What transactional consistency guarantee do MongoDB transactions
  in each deployment provide?}
  However, it lacks both concise pseudocode of MongoDB transactions
  in each deployment and formal specification of the consistency guarantees
  which MongoDB claimed to provide.
  In this work, we formally specify and verify
  the transactional consistency protocols of MongoDB.
  Specifically, we provide a concise pseudocode for the transactional
  consistency protocols in each MongoDB deployment,
  namely \wtalg, \rsalg, and \scalg{},
  based on the official documents and source code.
  We then prove that \wtalg, \rsalg, and \scalg{}
  satisfy different variants of snapshot isolation,
  namely \strongsi, \rtsi, and \sessionsi, respectively.
  We also propose and evaluate efficient white-box checking algorithms
  for MongoDB transaction protocols against their consistency guarantees,
  effectively circumventing the \textsf{NP-hard} obstacle in theory.
\end{abstract}

% intro.tex

%%%%%%%%%%%%%%%%%%%%%%%%%%%%%%
\section{Introduction} \label{section:intro}

MongoDB is a popular general-purpose, document-oriented,
distributed NoSQL database~\footnote{MongoDB. \url{https://www.mongodb.com/}}.
A MongoDB database consists of a set of collections,
a collection is a set of documents,
and a document is an ordered set of
keys with associated values~\cite{VLDB2019:TunableConsistency}~\footnote{
  Roughly speaking, a document is an analog to a row in a relational database,
  and a collection is to a table.
}.
MongoDB achieves scalability by partitioning data into shards and fault-tolerance
by replicating each shard across a set of nodes~\cite{VLDB2019:TunableConsistency}.

\begin{figure}[t]
  \centering
  \includegraphics[width = 0.60\textwidth]{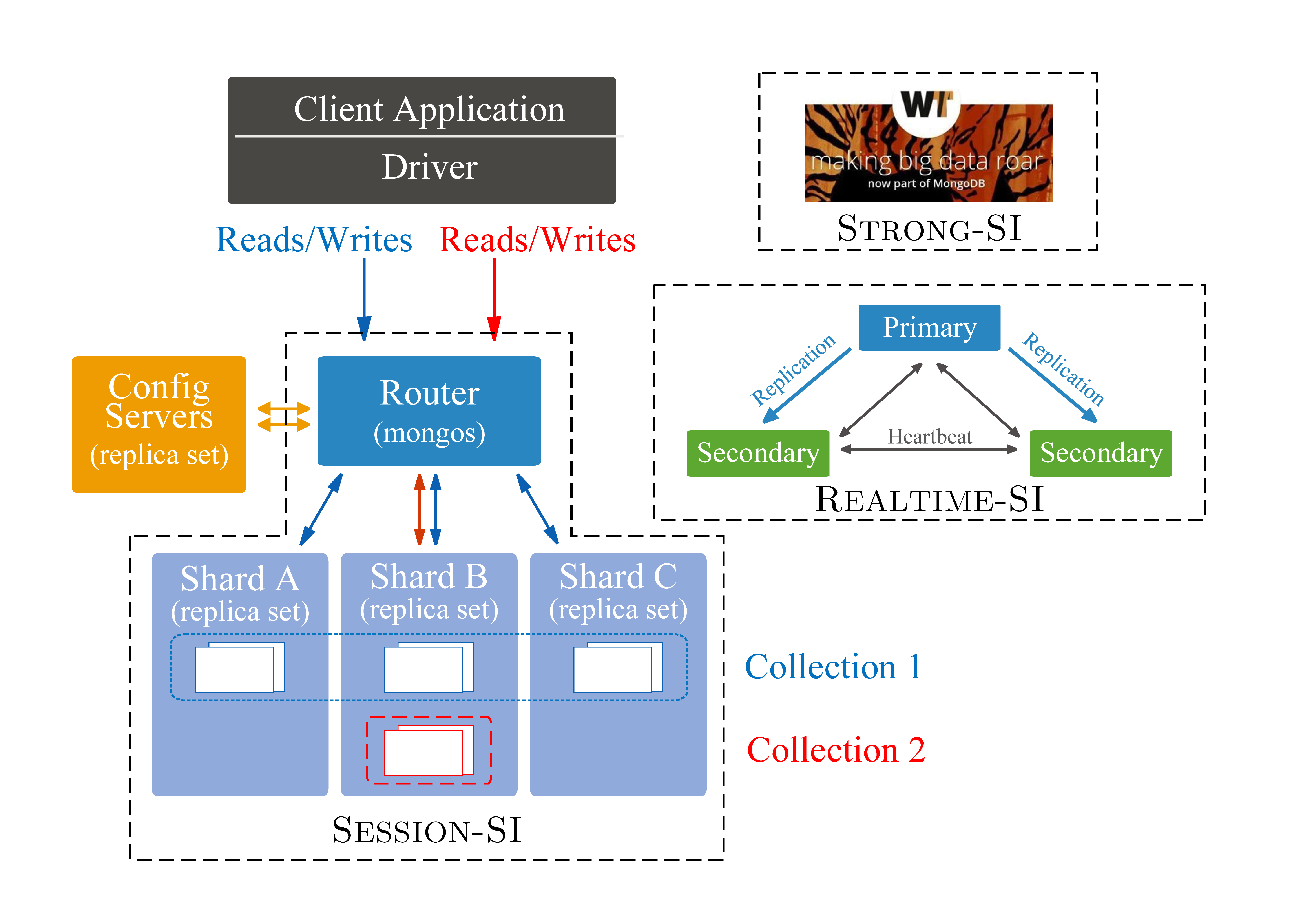}
  \caption{The MongoDB deployment.}
  \label{fig:mongodb-sharded-cluster}
\end{figure}

MongoDB deployment is a sharded cluster, replica set,
or standalone~\cite{SIGMOD2019:MongoDB-CC};
see Figure~\ref{fig:mongodb-sharded-cluster}.
A standalone is a storage node that represents a single instance of a data store.
A replica set consists of a primary node and several secondary nodes.
A sharded cluster is a group of multiple replica sets,
among which data is sharded.

MongoDB transactions have evolved in three stages so far (Figure~\ref{fig:mongodb-sharded-cluster}):
In version 3.2, MongoDB used the WiredTiger storage engine as the default storage engine.
Utilizing the Multi-Version Concurrency Control (MVCC)
architecture of WiredTiger storage engine~\footnote{
  Snapshots and Checkpoints.
  \url{https://docs.mongodb.com/manual/core/wiredtiger/\#snapshots-and-checkpoints}
},
MongoDB was able to support single-document transactions
(called \wtalg) in the standalone deployment.
In version 4.0, MongoDB supported multi-document transactions (called \rsalg) in replica sets.
In version 4.2, MongoDB further introduced distributed (multi-document) transactions (called \scalg) in sharded clusters.
Each kind of MongoDB transactions has advantages and disadvantages.
In particular, distributed transactions
should not be a replacement for multi-document transactions
or single-document transactions,
since ``in most cases, (they) incur a greater performance cost
over single document writes.''~\footnote{
  Transactions and Atomicity.
  \url{https://docs.mongodb.com/manual/core/transactions/\#transactions-and-atomicity}}

A natural and fundamental question about MongoDB transactions is:
\emph{What transactional consistency guarantee do MongoDB transactions
in each deployment provide?}
This question poses three main challenges,
in terms of specification, protocols, and checking algorithms:
\begin{itemize}
  \item There are many variants of snapshot isolation
    in the literature~\cite{PODC2017:ClientCentric}.
    Though it was officially claimed that MongoDB implements a so-called
    speculative snapshot isolation protocol~\cite{VLDB2019:TunableConsistency},
    it is unclear which kind of MongoDB transactions in different deployments
    satisfies which specific variant of snapshot isolation.
  \item It lacks concise pseudocode of
    the transactional consistency protocols of MongoDB in different deployments,
    let alone the rigorous correctness proofs for them.
  \item Recently Biswas \emph{et al.} proved that the problem of
    checking whether a given history without the version order
    satisfies (Adya) snapshot isolation~\cite{Adya:PhDThesis1999}
    is \textsf{NP-complete}~\cite{OOPSLA2019:Complexity}.
    Therefore, it is challenging to efficiently check
    whether MongoDB in production satisfies
    some variant of snapshot isolation~\cite{VLDB2020:Elle}.
\end{itemize}

% \bfit{Contributions.}
To answer the question above, we formally specify and verify
the transactional consistency protocols of MongoDB. Specifically,
\begin{itemize}
  \item We formally specify several variants of snapshot isolation
    in the well-known $(\vis, \ar)$ specification framework for transactional consistency models
    proposed by Cerone \emph{et al}.~\cite{CONCUR2015:Framework}.
  \item We provide a concise pseudocode
    for the transactional consistency protocols in each MongoDB deployment,
    based on the official documents and source code.
  \item We prove that \wtalg, \rsalg, and \scalg{} satisfy
    \strongsi, \rtsi, and \sessionsi, respectively (Figure~\ref{fig:mongodb-sharded-cluster}).
    In particular, \rtsi{} and \sessionsi{} are natural variants of snapshot isolation
    that we introduce for characterizing \rsalg{} and \scalg{}, respectively.
  \item We design white-box polynomial-time checking algorithms for
    the transactional protocols of MongoDB
    against \strongsi, \rtsi, and \sessionsi.
    These checking algorithms make use of the properties of the transactional protocols
    to infer the version order of histories,
    effectively circumventing the \textsf{NP-hard}
    obstacle in theory~\cite{OOPSLA2019:Complexity}.
    We then intensively test the transactional consistency protocols
    of MongoDB using Jepsen~\footnote{
      Distributed Systems Safety Research. \url{https://jepsen.io/}}.
    The results show that our checking algorithms are effective and efficiency.
\end{itemize}

The rest of the paper is organized as follows.
Section~\ref{section:si} formally specifies
several variants of snapshot isolation in the $(\vis, \ar)$
specification framework for transactional consistency models.
Sections~\ref{section:wt-tx}, \ref{section:rs-tx},
and \ref{section:sc-tx} describe the transactional consistency protocols
for MongoDB transactions in each deployment, respectively.
We also prove the correctness of them in the $(\vis, \ar)$ framework
in Section~\ref{section:app-proofs} of the Appendix.
Section~\ref{section:checking-si} proposes and evaluates
the efficient white-box checking algorithms
for MongoDB transactions in each deployment.
Section~\ref{section:related-work} discusses related work.
Section~\ref{section:conclusion} concludes the paper with possible future work.
%%%%%%%%%%%%%%%%%%%%%%%%%%%%%%

% si.tex

%%%%%%%%%%%%%%%%%%%%%%%%%%%%%%
\section{Snapshot Isolation} \label{section:si}

We consider a MongoDB deployment
$\deployment \in \deployments \triangleq \set{\wtdep, \rsdep, \scdep}$ managing a set of keys $\Key$, ranged over by $\keyvar$,
which take on a set of values $\Val$, ranged over by $\valvar$.
We denote by $\Op$ the set of possible read or write operations on keys:
$\Op = \set{\readevent(\keyvar, \valvar), \writeevent(\keyvar, \valvar)
  \mid \keyvar \in \Key, \valvar \in \Val}$.
We assume that each key $\keyvar$ has an dedicated initial value $\keyvar_{0} \in \Val$.
Each invocation of an operation is denoted by an event from a set $\Event$,
ranged over by $\evar$ and $\fvar$.
A function $\op : \Event \to \Op$ determines the operation a given event denotes.
Below we follow the $(\vis, \ar)$ specification framework proposed by~\cite{CONCUR2015:Framework, JACM2018:AnalysingSI}.
%%%%%%%%%%%%%%%%%%%%%%%%%
\subsection{Relations and Orderings} \label{ss:relations}

A binary relation $R$ over a given set $A$ is a subset of $A \times A$, i.e., $R \subseteq A \times A$.
For $a, b \in A$, we use $(a, b) \in R$ and $a \rel{R} b$ interchangeably.
The inverse relation of $R$ is denoted by $R^{-1}$,
i.e., $(a, b) \in R \iff (b, a) \in R^{-1}$.
We use $R^{-1}(b)$ to denote the set $\{ a \in A \mid (a, b) \in R \}$.
For some subset $A' \subseteq A$, the restriction of $R$ to $A'$ is
$R|_{A'} \triangleq R \cap (A' \times A)$.
For a relation $R$ and a deployment $\deployment \in \deployments$,
we use $R_{\deployment}$ to denote an instantiation of $R$ specific to this deployment $\deployment$.
Given two binary relations $R$ and $S$ over set $A$,
we define the composition of them as
$R \comp S = \{ (a, c) \mid \exists b \in A: a \rel{R} b \rel{S} c\}$.
% If $f: A \to B$ is a function (also a relation) from $A$ to $B$,
% the restriction of $f$ to $A' \subseteq A$
% is $f|_{A'} \triangleq f \cap (A' \times B) = \set{(a, f(a)) \mid a \in A'}$.
A strict partial order is an irreflexive and transitive relation.
A total order is a relation which is a partial order and total.
%%%%%%%%%%%%%%%%%%%%
\subsection{Histories and Abstract Executions} \label{ss:execution}

\begin{definition}[Transactions] \label{def:transaction}
  A \bfit{transaction} is a pair $(\E, \po)$,
  where $\E \subseteq \Event$ is a finite, non-empty set of events
  and $\po \subseteq \E \times \E$ is a strict total order called \bfit{program order}.
\end{definition}

For simplicity, we assume a dedicated transaction that writes initial values of all keys.
We also assume the existence of a time oracle that
assigns distinct real-time \emph{start} and \emph{commit} timestamps to each transaction $T$,
and access them by $\starttime(T)$ and $\committime(T)$, respectively.
For an transaction $T$ in deployment $\deployment \in \deployments$,
we use $\starttime_{\deployment}(T)$ (resp. $\committime_{\deployment}(T)$)
to denote the instantiation of $\starttime(T)$ (resp. $\committime(T)$) specific to the deployment $\deployment$.
We define two strict partial orders involving real time on transactions.

\begin{definition}[Returns Before] \label{def:rb}
  A transaction $S$ \bfit{returns before} $T$ in real time,
  denoted $S \rel{\rb} T$, if $\committime(S) < \starttime(T)$.
\end{definition}

\begin{definition}[Commits Before] \label{def:cb}
  A transaction $S$ \bfit{commits before} $T$ in real time,
  denoted $S \rel{\cb} T$, if $\committime(S) < \committime(T)$.
\end{definition}

Clients interact with MongoDB by issuing transactions via \emph{sessions}.
We use a \emph{history} to record the client-visible results of such interactions.
\begin{definition}[Histories] \label{def:history}
  A \bfit{history} is a pair $\h = (\T, \sorel)$,
  where $\T$ is a set of transactions with disjoint sets of events
  and the \bfit{session order} $\sorel \subseteq \T \times \T$
  is a union of strict total orders defined on disjoint sets of $\T$,
  which correspond to transactions in different sessions.
\end{definition}

To justify each transaction in a history,
we need to know how these transactions are related to each other.
This is captured declaratively
by the \emph{visibility} and \emph{arbitration} relations.

\begin{definition}[Abstract Executions] \label{def:ae}
  An \bfit{abstract execution} is a tuple $\ae = (\T, \sorel, \vis, \ar)$,
  where $(\T, \sorel)$ is a history,
  \bfit{visibility} $\vis \subseteq \T \times \T$ is a strict partial order,
  and \bfit{arbitration} $\ar \subseteq \T \times \T$
  is a strict total order such that $\vis \subseteq \ar$.
\end{definition}

For $\h = (\T, \sorel)$,
we often shorten $(\T, \sorel, \vis, \ar)$ to $(\h, \vis, \ar)$.

% table-axioms.tex

\begin{table}[tbp]
  \centering
  \caption{Consistency axioms, constraining an abstract execution $(\h, \vis, \ar)$.
    (Adapted from~\protect\cite{JACM2018:AnalysingSI})}
  \label{table:axioms}
  \renewcommand{\arraystretch}{1.8}
  \resizebox{\textwidth}{!}{%
    \begin{tabular}{|c|c|c|c|}
    \hline
    \multicolumn{4}{|c|}{
	  $\begin{aligned}[c]
		&\forall (\E, \po) \in \h.\;
		  \forall \evar \in \Event.\;
		    \forall \keyvar, \valvar.\;
			  \big(\op(\evar) = \readevent(\keyvar, \valvar)
			  \land \set{\fvar \mid (\op(\fvar) = \_(\keyvar, \_) \land \fvar \rel{\po} \evar} \neq \emptyset\big) \\
			  &\quad \implies \op(\max_{\po} \set{\fvar \mid \op(\fvar) = \_(\keyvar, \_) \land \fvar \rel{\po} \evar}) = \_(\keyvar, \valvar)
	   \end{aligned}$ \hfill (\intaxiom)} \\ \hline
    \multicolumn{4}{|c|}{
	  $\begin{aligned}[c]
		\forall T \in \h.\; \forall \keyvar, \valvar.\; T \vdash \readevent(\keyvar, \valvar) \implies
		  \max_{\ar}(\vis^{-1}(T) \cap \WriteTx_{\keyvar}) \vdash \writeevent(\keyvar, \valvar)
	  \end{aligned}$ \hfill (\extaxiom)} \\ \hline
    $\sorel \subseteq \vis$ \hfill (\sessionaxiom)
	& $\ar \comp \vis \subseteq \vis$ \hfill (\prefixaxiom)
	& \multicolumn{2}{c|}{$\forall S, T \in \h.\; S \conflict T \implies (S \rel{\vis} T \lor T \rel{\vis} S)$
	  \hfill (\noconflictaxiom)} \\ \hline
	$\rb \subseteq \vis$ \hfill (\rbaxiom)
	& $\vis \subseteq \rb$ \hfill (\inrbaxiom)
	& $\vis = \rb$ \hfill (\realtimesnapshotaxiom)
	& $\cb \subseteq \ar$ \hfill (\cbaxiom) \\ \hline
    \end{tabular}%
  }
\end{table}
    %   $\forall S, T \in \h.\; S \rel{\vis} T \implies \starttime(T) > \committime(S)$

%%%%%%%%%%%%%%%%%%%%
\subsection{Consistency Axioms and (Adya) Snapshot Isolation} \label{ss:axioms}

A consistency model is a set $\axiom$ of \emph{consistency axioms} constraining abstract executions.
The model allows those histories for which
there exists an abstract execution that satisfies the axioms:
$\hist_{\axiom} = \set{\h \mid \exists \vis, \ar.\; (\h, \vis, \ar) \models \axiom}$.

We first briefly explain the consistency axioms
that are necessary for defining (Adya) snapshot isolation~\cite{Adya:PhDThesis1999},
namely \intaxiom, \extaxiom, \prefixaxiom,
and \noconflictaxiom~\cite{CONCUR2015:Framework,JACM2018:AnalysingSI}.
In Section~\ref{ss:si}, we will introduce a few new consistency axioms
and formally define several variants of snapshot isolation using them.
Table~\ref{table:axioms} summarizes all the consistency axioms used in this paper.
For $T = (\E, \po)$, we let $T \vdash \writeevent(\keyvar, \valvar)$
if $T$ writes to $\keyvar$ and the last value written is $\valvar$,
and $T \vdash \readevent(\keyvar, \valvar)$
if $T$ reads from $\keyvar$ before writing to it
and $\valvar$ is the value returned by the first such read.
We also use $\WriteTx_{\keyvar} = \set{T \mid T \vdash \writeevent(\keyvar, \_)}$.
Two transactions $S$ and $T$ conflicts, denoted $S \conflict T$,
if they write on the same key.

The \emph{internal consistency axiom} \intaxiom{} ensures that,
within a transaction, a read from a key returns the same value
as the last write to or read from this key in the transaction.
The \emph{external consistency axiom} \extaxiom{} ensures that
an external read in a transaction $T$ from a key
returns the value written by the last transaction in $\ar$
among all the transactions that proceed $T$ in terms of $\vis$ and write this key.
The \emph{prefix axiom} \prefixaxiom{} ensures that
if the snapshot taken by a transaction $T$ includes a transaction $S$,
than this snapshot also include all transactions that committed before $S$ in terms of $\ar$.
The \emph{no-conflict axiom} \noconflictaxiom{}
prevents concurrent transactions from writing on the same key.

The classic (Adya) snapshot isolation is then defined as follows.
\begin{definition}[$\si$~\cite{CONCUR2015:Framework}] \label{def:si}
  $\si = \intaxiom \land \extaxiom \land \prefixaxiom \land \noconflictaxiom$.
\end{definition}
%%%%%%%%%%%%%%%%%%%%
\subsection{Variants of Snapshot Isolation} \label{ss:si}

There are several variants of snapshot isolation~\cite{PODC2017:ClientCentric},
including \ansisi~\cite{CritiqueIsolation:SIGMOD1995},
% (Adya) snapshot isolation (\si)~\cite{Adya:PhDThesis1999},
generalized snapshot isolation (\gsi)~\cite{SRDS2005:GSI},
strong snapshot isolation (\strongsi)~\cite{VLDB2006:LazyReplSI},
strong session snapshot isolation~\cite{VLDB2006:LazyReplSI},
parallel snapshot isolation (\parallelsi)~\cite{PSI:SOSP2011},
write snapshot isolation~\cite{EuroSys2012:CritiqueSI},
non-monotonic snapshot isolation (\nmsi)~\cite{NMSI:SRDS2013},
and prefix-consistent snapshot isolation~\cite{VLDB2006:LazyReplSI}.
We now formally define these SI variants in the $(\vis, \ar)$ framework
and illustrate their similarities and differences with examples.
For brevity, we concentrate on three variants that are concerned with MongoDB deployments,
namely \sessionsi, \rtsi, and \strongsi.
Particularly, the latter two variants are new.
More variants are explained in Appendix~\ref{section:app-si}.
%%%%%%%%%%%%%%%

%%%%%%%%%%%%%%%%%%%%
Session snapshot isolation, denoted \sessionsi{},
requires a transaction to observe all the transactions that precedes it in its session.

\begin{definition}[$\sessionsi$~\cite{JACM2018:AnalysingSI}] \label{def:sessionsi}
  $\sessionsi = \si \land \sessionaxiom$.
\end{definition}
%%%%%%%%%%%%%%%%%%%%

%%%%%%%%%%%%%%%%%%%%
% si-not-sessionsi.tex

\begin{figure}[t]
  \centering
  \includegraphics[width = 0.5\textwidth]{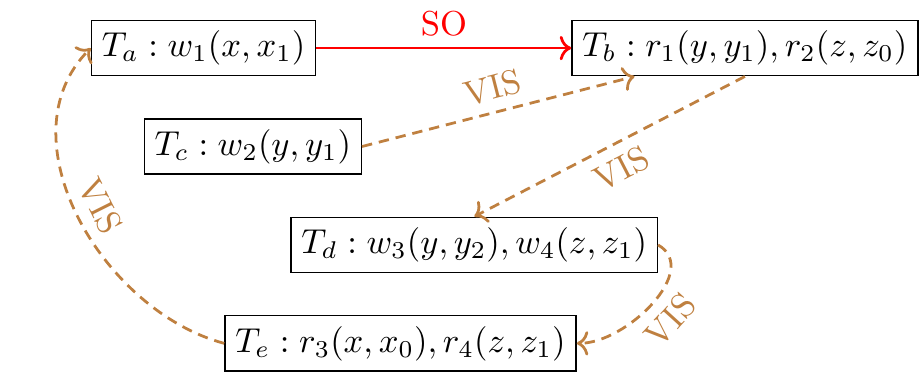}
  \caption{A history which satisfies \si{} but not \sessionsi{}; see Example~\ref{ex:si-not-sessionsi}.
    (Edges induced by transitivity are omitted.)}
  \label{fig:si-not-sessionsi}
\end{figure}

\begin{example}[\si{} vs. \sessionsi] \label{ex:si-not-sessionsi}
  Consider the history $\h$ in Figure~\ref{fig:si-not-sessionsi}.
  To show that $\h \models \si$,
  we construct an abstract execution $\ae = (\h, \vis, \ar)$
  such that $\vis$ is transitive, $T_{c} \rel{\vis} T_{b} \rel{\vis} T_{d} \rel{\vis} T_{e} \rel{\vis} T_{a}$ holds, and $\ar = \vis$.
  It is straightforward to justify that $\ae \models \si$.

  Now we show that $\h \not\models \sessionsi$.
  First, since $T_{e}$ read the initial value $x_{0}$ of key $x$
  and $T_{a}$ writes $x_{1}$ to $x$, we have $T_{e} \rel{\ar} T_{a}$.
  Similarly, $T_{b} \rel{\ar} T_{d}$.
  Second, since $T_{a} \rel{\sorel} T_{b}$, we have $T_{a} \rel{\vis} T_{b}$
  and thus $T_{a} \rel{\ar} T_{b}$.
  Third, since $T_{e}$ reads $z_{1}$ written by $T_{d}$,
  we have $T_{e} \rel{\vis} T_{d}$ and thus $T_{e} \rel{\ar} T_{d}$.
  In summary, $T_{a} \rel{\ar} T_{b} \rel{\ar} T_{d} \rel{\ar} T_{e} \rel{\ar} T_{a}$.
  By further exhaustive case analysis on the visibility between $T_{a}$ and $T_{e}$
  and between $T_{b}$ and $T_{d}$, we can show that no such $\vis$ and $\ar$ exist
  that $(\h, \vis, \ar) \models \sessionsi$.
\end{example}
%%%%%%%%%%%%%%%%%%%%

%%%%%%%%%%%%%%%%%%%%
Realtime snapshot isolation, denoted \rtsi{},
requires a transaction to observe all the transactions
that have returned before it starts in real time (i.e., \rbaxiom).
Moreover, it requires all transactions seem to be committed
in an order (specified by $\ar$) respecting
their commit order (specified by $\cb$) in real time (i.e., \cbaxiom).
Since $\sorel \subseteq \rb$, \rtsi{} is stronger than \sessionsi.

\begin{definition}[$\rtsi$] \label{def:rtsi}
  $\rtsi = \si \land \rbaxiom \land \cbaxiom$.
\end{definition}
%%%%%%%%%%%%%%%%%%%%

%%%%%%%%%%%%%%%%%%%%
In addition to \rtsi{}, strong snapshot isolation (denoted \strongsi{})
limits a transaction $T$ to read only from snapshots
that do \emph{not} include transactions that committed in real time \emph{after}
$T$ starts (i.e., \inrbaxiom).
Since $\realtimesnapshotaxiom = \rbaxiom \land \inrbaxiom$, we have

\begin{definition}[$\strongsi$] \label{def:strongsi}
  $\strongsi = \si \land \realtimesnapshotaxiom \land \cbaxiom$.
\end{definition}
%%%%%%%%%%%%%%%%%%%%

%%%%%%%%%%%%%%%%%%%%
% rtsi-not-strongsi.tex

\begin{figure}[t]
  \centering
  \includegraphics[width = 0.6\textwidth]{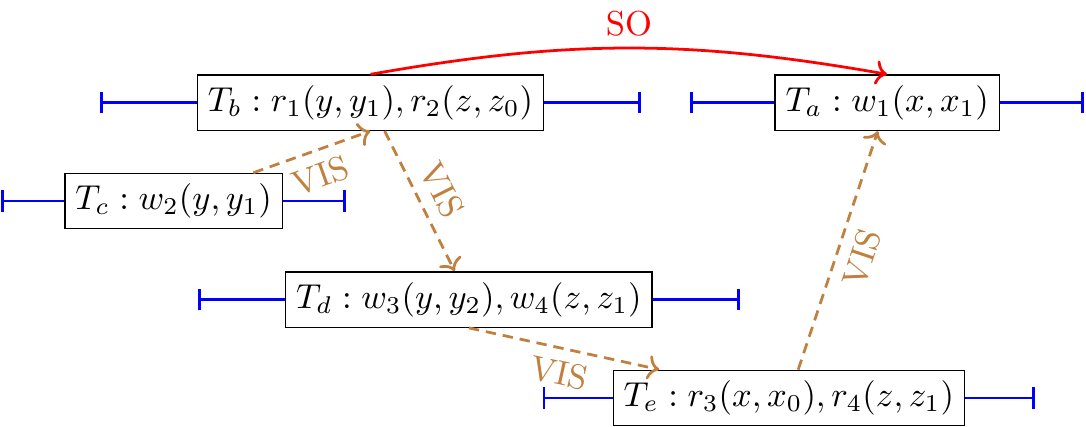}
  \caption{A history which satisfies \rtsi{} but not \strongsi{}; see Example~\ref{ex:rtsi-not-strongsi}.
    (The lifecycles of transactions are represented as intervals in blue.
     Time goes from left to right.)}
  \label{fig:rtsi-not-strongsi}
\end{figure}

\begin{example}[\rtsi{} vs. \strongsi] \label{ex:rtsi-not-strongsi}
  Consider the history $\h$ in Figure~\ref{fig:rtsi-not-strongsi}.
  To show that $\h \models \rtsi$, we construct an abstract execution $\ae = (\h, \vis, \ar)$
  such that $\vis$ is transitive, $T_{c} \rel{\vis} T_{b} \rel{\vis} T_{d} \rel{\vis} T_{e} \rel{\vis} T_{a}$ holds, and $\ar = \vis$.
  It is straightforward to justify that $\ae \models \rtsi$.
  On the other hand, $T_{b}$ reads $y_{1}$ from $y$ written by $T_{c}$,
  but $\lnot(T_{c} \rel{\rb} T_{b})$.
  Therefore, $\h \not\models \strongsessionsi$.
\end{example}
%%%%%%%%%%%%%%%%%%%%
%%%%%%%%%%%%%%%%%%%%%%%%%%%%%%

% wt-tx.tex

%%%%%%%%%%%%%%%%%%%%%%%%%%%%%%%%%%%%%%%%
\section{WiredTiger Transactions} \label{section:wt-tx}

% table-wt-variables.tex

\begin{table}[t]
  \centering
  \caption{Types and variables used in \wtalg.}
  \label{table:wt-variables}
  \renewcommand{\arraystretch}{1.8}
  \resizebox{\textwidth}{!}{%
    \begin{tabular}{|c|c|c|c|c|}
      \hline
      \multicolumn{1}{|c|}{$\TID = \N \cup \set{-1, \wttidnone}$}
	    & \multicolumn{1}{c|}{$\currenttid \gets 1 \in \TID$}
	    & \multicolumn{1}{c|}{$\WTSID = \N$}
		& \multicolumn{1}{c|}{$\wtsession \in [\WTSID \to \WTTXN]$}
		& \multicolumn{1}{c|}{\rsstyle{$\maxcommitts \in \TS$}}
		\\ \hline
      \multicolumn{5}{|c|}{
		$\WTTXN = [\tid : \TID,
		  \upperlimit : \TID,
		  \concur : \ps(\TID), \mods : \ps(\Key \times \Val),
		  \rsstyle{\readts : \TS},
		  \rsstyle{\committs : \TS},
		  \scstyle{\preparets : \TS}]$
	  } \\ \hline
      \multicolumn{3}{|c|}{$\wttxnglobal \in [\WTSID \rightharpoonup [\tid : \TID, \rsstyle{\pinnedcommittedts : \TS}]]$}
        &\multicolumn{2}{c|}{$\store \in [\Key \to [\tid : \TID, \val : \Val,
		  \rsstyle{\ts : \TS},
		  \scstyle{\phase : \set{\wtprepareinprogress, \wtprepareresolved}}]^{\ast}]$} \\ \hline
    \end{tabular}%
  }
\end{table}

In this section, we describe the protocol \wtalg{} of snapshot isolation implemented in WiredTiger.
Table~\ref{table:wt-variables} summarizes the types and variables used in \wtalg.
For now the readers should ignore the \underline{underlined}
and \colorbox{lightgray}{highlighted} lines,
which are needed for \rsalg{} and \scalg{} protocols, respectively.
We reference pseudocode lines using the format algorithm\#:line\#.
For conciseness, we write something like $S \gets @ \oplus T$
to denote $S \gets S \oplus T$, where $\oplus$ is an operator.
%%%%%%%%%%%%%%%%%%%%%%%%%%%%%%
% wt-design.tex

%%%%%%%%%%%%%%%%%%%%%%%%%%%%%%
\subsection{Key Designs} \label{ss:wt-designs}

%%%%%%%%%%%%%%%
\subsubsection{Transactions and the Key-Value Store} \label{sss:transactions}

We assume that each \wtalg{} transaction $\txnvar \in \WTTXN$ is associated with
a unique transaction identifier $\txnvar.\tid$
from set $\TID = \N \cup \set{-1, \wttidnone}$.
When a transaction $\txnvar$ starts, it initializes $\txnvar.\tid$ to $0$.
The actual (non-zero) $\txnvar.\tid$ is assigned
when its first \emph{update} operation is successfully executed.
$\currenttid$ tracks the next monotonically increasing transaction identifier to be allocated.
A transaction $\txnvar$ with $\txnvar.\tid \neq 0$ may be aborted
due to a conflict caused by a later update.
When this happens, we set $\txnvar.\tid = -1$.
Note that a read-only transaction $\txnvar$ always has $\txnvar.\tid = 0$.
We use dummy $\wttidnone$ to indicate that there is no such a transaction.

We model the key-value store, denoted $\store$,
as a function which maps each key $\keyvar \in \Key$
to a (possibly empty) list of pairs of the form $\tuple{\tidvar, \valvar}$
representing that the transaction with $\tidvar$
has written value $\valvar$ on $\keyvar$.
%%%%%%%%%%%%%%%
\subsubsection{Sessions} \label{sss:sessions}

Clients interact with \wtalg{} via sessions.
Each client is bind to a single session with a unique session identifier $\wtsidvar \in \WTSID$.
At most one transaction is active on a session at any time.
The mapping $\wtsession$ maintains the currently active transaction on each session
and $\wttxnglobal$ records which transaction has obtained its identifier on which session.
%%%%%%%%%%%%%%%
\subsubsection{The Visibility Rule} \label{sss:visibility}

To guarantee snapshot isolation, each transaction $\txnvar$ needs to
identify the set of transactions that are visible to it throughout its lifecycle when it starts.
Intuitively, each transaction $\txnvar$ is only aware of
all the transactions that have already been committed before it starts.
To this end, the transaction $\txnvar$ maintains
\begin{itemize}
  \item $\txnvar.\concur$: the set of identifiers of
        currently active transactions that have obtained their identifiers; and
  \item $\txnvar.\upperlimit$: the next transaction identifier (i.e., $\currenttid$) when $\txnvar$ starts.
\end{itemize}

The procedure \txnvis{} states that a transaction with $\tidvar$
is \emph{invisible} to another (active) transaction $\txnvar$ if
(line~\code{\ref{alg:wt-db}}{\ref{line:procedure-txnvis}})
\begin{itemize}
  \item it is aborted (i.e., $\tidvar = -1$), or
  \item it is concurrent with $\txn$ (i.e., $\tidvar \in \txnvar.\concur$), or
  \item it starts after $\txnvar$ and thus has a larger transaction identifier
        than $\txnvar.\upperlimit$
        (i.e., $\tidvar \ge \txnvar.\upperlimit$).
        Note that when $\txnvis$ is called,
        $\txnvar$ may have been assigned an identifier larger than $\txnvar.\upperlimit$.
        The second conjunction $\tidvar \neq \txnvar.\id$ allows $\txnvar$ to observe itself.
\end{itemize}

\begin{figure}[t]
  \centering
  \includegraphics[width = 0.5\textwidth]{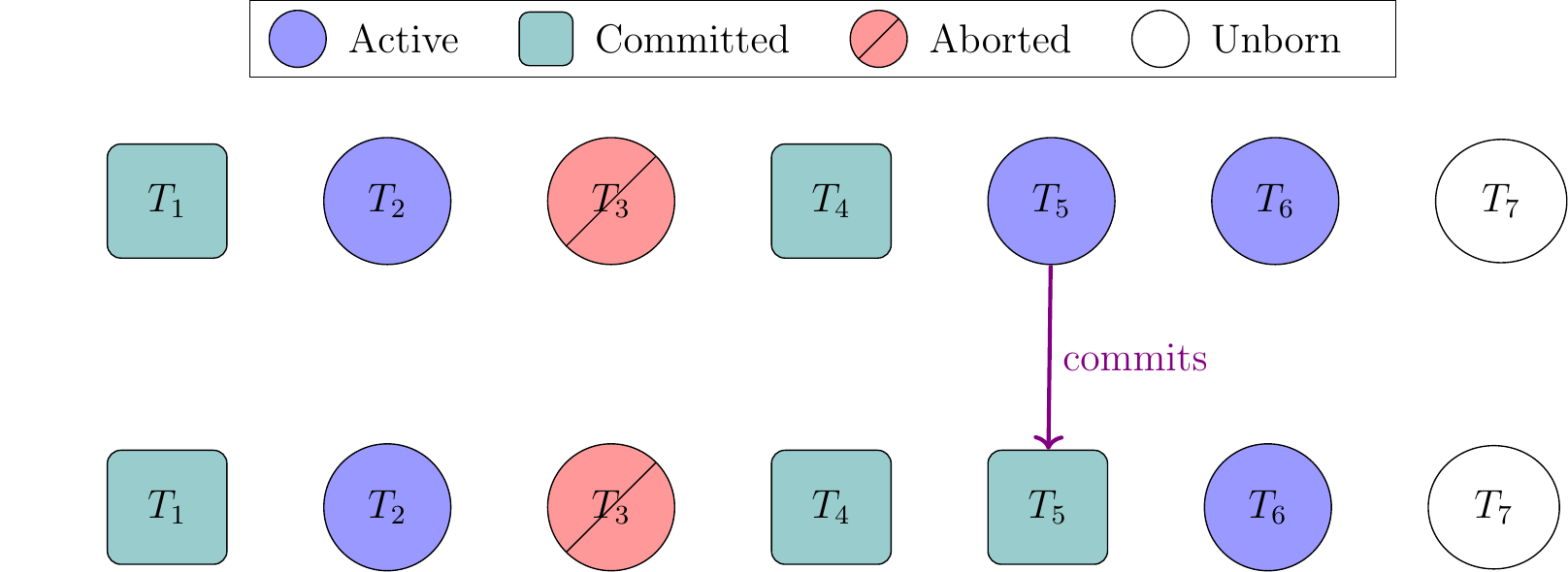}
  \caption{Illustration of the visibility rule; see Example~\ref{ex:visibility}.}
  \label{fig:ex:visibility}
\end{figure}

\begin{example}[Visibility] \label{ex:visibility}
  Consider the scenario in Figure~\ref{fig:ex:visibility},
  where transaction $T_{i}$ is on session $\wtsidvar_{i}$
  and would obtain its identifier $\tidvar_{i}$ if any.
  We assume that $\forall 1 \le i \le 6.\; \tidvar_{i} < \tidvar_{i+1}$.
  Suppose that when $T_{6}$ starts, (1) $T_{1}$ and $T_{4}$ have been committed;
  (2) $T_2$ and $T_5$ are active and have obtained their identifiers;
  (3) $T_3$ has been aborted; and
  (4) $T_{7}$ has not started.
  We have $T_{6}.\concur = \set{\tidvar_{2}, \tidvar_{5}}$
  and $T_{6}.\upperlimit = \tidvar_{6}$.
  According to the visibility rule, only $T_{1}$ and $T_{4}$ are visible to $T_{6}$.
  Note that even when $T_{5}$ commits later, it is still invisible to $T_{6}$.
\end{example}
%%%%%%%%%%%%%%%%%%%%%%%%%%%%%%

% wt-protocol.tex

%%%%%%%%%%%%%%%%%%%%%%%%%%%%%%
\subsection{Protocol} \label{ss:wt-protocol}

% wt-db.tex

\begin{algorithm}[p]
  \caption{\wtalg: the snapshot isolation protocol in WiredTiger}
  \label{alg:wt-db}
  \begin{varwidth}[t]{0.50\textwidth}
  \begin{algorithmic}[1]
    \Procedure{\wtstart}{$\wtsidvar$}
      \label{line:procedure-start}
      \State $\txnvar \gets \tuple{0, 0, \emptyset, \emptyset,
        \rsstyle{\wttsnone}, \rsstyle{\wttsnone}, \scstyle{\wttsnone}}$
        \label{line:wtstart-init-txn}
      \For{$(\_ \mapsto \tuple{\tidvar, \_}) \in \wttxnglobal \land \tidvar \neq \wttidnone$}
        \label{line:wtstart-forall-states}
        \State $\txnvar.\concur \gets @ \cup \set{\tidvar}$
          \label{line:wtstart-snapshot}
      \EndFor
      \State $\txnvar.\upperlimit \gets \currenttid$
        \label{line:wtstart-snapmax}
      \State $\wtsession[\wtsidvar] \gets \txnvar$
        \label{line:wtstart-txn}
    \EndProcedure

    \Statex
    \Procedure{\wtread}{$\wtsidvar, \keyvar$}
      \label{line:procedure-wtread}
      \State $\txnvar \gets \wtsession[\wtsidvar]$
        \label{line:wtread-txn}
      \For{$\tuple{\tidvar, \valvar, \rsstyle{\tsvar}, \scstyle{\phasevar}}
          \in \store[\keyvar]$} % \text{\bf in order}
        \label{line:wtread-forall}
        \If{$\Call{\txnvis}{\txnvar, \tidvar, \rsstyle{\tsvar}}$}
          \label{line:wtread-call-txnvis}
            \State \Return $\tuple{\valvar, \scstyle{\phasevar}}$
              \label{line:wtread-return}
        \EndIf
      \EndFor
    \EndProcedure

    \Statex
    \Procedure{\wtupdate}{$\wtsidvar, \keyvar, \valvar$}
      \label{line:procedure-wtupdate}
      \State $\txnvar \gets \wtsession[\wtsidvar]$
        \label{line:wtupdate-txn}
      \For{$\tuple{\tidvar, \valvar, \rsstyle{\tsvar}, \scstyle{\_}}
          \in \store[\keyvar]$} % \text{\bf in order}
        \label{line:wtupdate-forall}
        \If{$\lnot \Call{\txnvis}{\txnvar, \tidvar, \rsstyle{\tsvar}} \land \tidvar \neq -1$}
          \label{line:wtupdate-call-txnvis}
            \State $\Call{\wtrollback}{\wtsidvar}$
              \label{line:wtupdate-call-rollback}
            \State \Return \rollback
            \label{line:wtupdate-return-rollback}
        \EndIf
      \EndFor
      \If{$\txnvar.\tid = 0$}
        \label{line:wtupdate-txn-tid-if-wttidnone}
        \State $\txnvar.\tid \gets \currenttid$
          \label{line:wtupdate-txn-tid}
        \State $\currenttid \gets @ + 1$
          \label{line:wtupdate-wttxnglobal-currentid}
        \State $\wttxnglobal[\wtsidvar] \gets \tuple{\txnvar.\tid, \rsstyle{\wttsnone}}$
          \label{line:wtupdate-wttxnglobal-states}
      \EndIf
      \State $\txnvar.\mods \gets @ \cup \set{\tuple{\keyvar, \valvar}}$
        \label{line:wtupdate-txn-mods}
      \State $\wtsession[\wtsidvar] \gets \txnvar$
        \label{line:wtupdate-session-txn}
      \State $\store[\keyvar] \gets \tuple{\txnvar.\tid, \valvar, \rsstyle{\wttsnone}, \scstyle{\_}} \circ @$
        \label{line:wtupdate-store-key}
      \State \Return \ok
        \label{line:wtupdate-return}
    \EndProcedure

    \Statex
    \Procedure{\wtcommit}{$\wtsidvar$}
      \label{line:procedure-wtcommit}
      \State $\txnvar \gets \wtsession[\wtsidvar]$
        \label{line:wtcommit-txn}
      \For{$\tuple{\keyvar, \_} \in \txnvar.\mods$}
        \label{line:wtcommit-forall-mods}
        \For{$\tuple{\tidvar, \valvar, \rsstyle{\_}, {\scstyle{\_}}} \in \store[\keyvar]$}
          \label{line:wtcommit-forall-store}
          \If{$\tidvar = \txnvar.\tid$}
            \label{line:wtcommit-tid-if}
            \State $\rsstyle{\store[\keyvar] \gets \tuple{\tidvar, \valvar, \txnvar.\committs, {\scstyle{\_}}}}$
              \label{line:wtcommit-store}
          \EndIf
        \EndFor
      \EndFor
      \State $\wttxnglobal[\wtsidvar] \gets \tuple{\wttidnone, \rsstyle{\wttsnone}}$
        \label{line:wtcommit-wttxnglobal-states}
    \EndProcedure

    \Statex
    \Procedure{\wtrollback}{$\wtsidvar$}
      \label{line:procedure-rollback}
      \State $\txnvar \gets \wtsession[\wtsidvar]$
        \label{line:wtrollback-txn}
      \For{$\tuple{\keyvar, \_} \in \txnvar.\mods$}
        \label{line:wtrollback-forall-mods}
        \For{$\tuple{\tidvar, \valvar, \rsstyle{\tsvar}, {\scstyle{\phasevar}}} \in \store[\keyvar]$}
          \label{line:wtrollback-forall-store}
          \If{$\tidvar = \txnvar.\tid$}
            \label{line:wtrollback-tid-if}
            \State $\store[\keyvar] \gets \tuple{-1, \valvar, \rsstyle{\tsvar}, {\scstyle{\phasevar}}}$
              \label{line:wtrollback-store}
          \EndIf
        \EndFor
      \EndFor
      \State $\wttxnglobal[\wtsidvar] \gets \tuple{\wttidnone, \rsstyle{\wttsnone}}$
        \label{line:wtrollback-wttxnglobal-states}
    \EndProcedure
    \algstore{wt-db}
  \end{algorithmic}
  \end{varwidth}\qquad
  \begin{varwidth}[t]{0.50\textwidth}
  \begin{algorithmic}[1]
    \algrestore{wt-db}
    \Procedure{\txnvis}{$\txnvar, \tidvar, \rsstyle{\tsvar}$}
      \label{line:procedure-txnvis}
      \State \Return
          $\lnot \big(\tidvar = -1
          \lor \tidvar \in \txnvar.\concur
          \lor (\tidvar \ge \txnvar.\upperlimit \land \tidvar \neq \txnvar.\tid)\big)$
          $\rsstyle{\land\; \big(\tsvar \neq \wttsnone \land \tsvar \le \txnvar.\readts\big)}$
        \label{line:txnvis-invisible}
    \EndProcedure

    \Statex
    \Procedure{\rsstyle{\allcommitted}}{\null}
      \label{line:procedure-allcommitted}
      % \State $\tsvar \gets \maxcommitts$
      %   \label{line:allcommitted-ts}
      % \For{$(\_ \mapsto \tuple{\_, \pinnedcommittedtsvar}) \in \wttxnglobal$}
      %   \label{line:allcommitted-for}
      %   \If{$\pinnedcommittedtsvar \neq \wttsnone \land \pinnedcommittedtsvar \le \tsvar$}
      %     \label{line:allcommitted-pinnedcommittedts}
      %     \State $\tsvar \gets \text{the largest timestamp}$
      %         \Statex \hspace{62pt} smaller than $\pinnedcommittedtsvar$
      %       \label{line:allcommitted-ts-pinnedcommittedts}
      %   \EndIf
      % \EndFor
      % \State \Return $\tsvar$
      %   \label{line:allcommitted-return}
      \State \Return the largest timestamp smaller than
        $\maxcommitts$ and $\min\set{\committsvar \mid \tuple{\_, \committsvar} \in \wttxnglobal \land \committsvar \neq \wttsnone}$
        \label{line:allcommitted-return}
    \EndProcedure

    \Statex
    \Procedure{\rsstyle{\wtsetreadts}}{$\wtsidvar, \readtsvar$}
      \State $\wtsession[\wtsidvar].\readts \gets \readtsvar$
        \label{line:wtsetreadts-readts}
    \EndProcedure

    \Statex
    \Procedure{\rsstyle{\wtsetcommitts}}{$\wtsidvar$}
      \label{line:procedure-wtsetcommitts}
      \State $\txnvar \gets \wtsession[\wtsidvar]$
        \label{line:wtsetcommitts-txn}
      \State $\committsvar \gets \txnvar.\committs$
        \label{line:wtsetcommitts-committs}
      \State $\maxcommitts \gets \max\set{@, \committsvar}$
        \label{line:wtsetcommitts-maxcommitts}
      \State $\wttxnglobal[\wtsidvar] \gets \tuple{\txnvar.\tid, \committsvar}$
        \label{line:wtsetcommitts-wttxnglobal-states}
      \State \Return \ok
        \label{line:wtsetcommitts-return}
    \EndProcedure

    % \rule{0.618\textwidth}{0.60pt}

    \Statex
    \Procedure{\scstyle{\wtprepare}}{$\wtsidvar, \preparetsvar$}
      \label{line:procedure-wtprepare}
      \State $\txnvar \gets \wtsession[\wtsidvar]$
        \label{line:wtprepare-txn}
      \State $\txnvar.\preparets \gets \preparetsvar$
        \label{line:wtprepare-txn-preparets}
      \For{$\tuple{\keyvar, \_} \in \txnvar.\mods$}
        \label{line:wtprepare-forall-mods}
        \For{$\tuple{\tidvar, \valvar, \_, \_} \in \store[\keyvar]$}
          \label{line:wtprepare-forall-store}
          \If{$\tidvar = \txnvar.\tid$}
            \label{line:wtprepare-tid-if}
            \State $\store[\keyvar] \gets \langle\tidvar, \valvar,
              \txnvar.\preparets, \wtprepareinprogress\rangle$
              \label{line:wtprepare-store}
          \EndIf
        \EndFor
      \EndFor
      \State $\wttxnglobal[\wtsidvar] \gets \tuple{\wttidnone, \wttsnone}$
        \label{line:wtprepare-wttxnglobal-states}
      \State \Return \ok
        \label{line:wtprepare-return}
    \EndProcedure

    \Statex
    \Procedure{\scstyle{\wtcommitpreparets}}{$\wtsidvar, \committsvar$}
      \State $\txnvar \gets \wtsession[\wtsidvar]$
        \label{line:wtcommitpreparets}
      \State $\txnvar.\committs \gets \committsvar$
        \label{line:wtcommitpreparets-committs}
      \State $\wttxnglobal[\wtsidvar] \gets \tuple{\txnvar.\tid, \txnvar.\committs}$
        \label{line:wtcommitpreparets-wttxnglobal-states}
    \EndProcedure

    \Statex
    \Procedure{\scstyle{\wtcommitprepare}}{$\wtsidvar$}
      \label{line:procedure-wtcommitprepare}
      \State $\txnvar \gets \wtsession[\wtsidvar]$
        \label{line:wtcommitprepare-txn}
      \For{$\tuple{\keyvar, \_} \in \txnvar.\mods$}
        \label{line:wtcommitprepare-forall-mods}
        \For{$\tuple{\tidvar, \valvar, \_, \_} \in \store[\keyvar]$}
          \label{line:wtcommitprepare-forall-store}
          \If{$\tidvar = \txnvar.\tid$}
            \label{line:wtcommitprepare-tid-if}
            \State $\store[\keyvar] \gets \langle\tidvar, \valvar,
              \txnvar.\committs, \wtprepareresolved \rangle$
              \label{line:wtcommitprepare-store}
          \EndIf
        \EndFor
      \EndFor
      \State $\wttxnglobal[\wtsidvar] \gets \tuple{\wttidnone, \wttsnone}$
        \label{line:wtcommitprepare-wttxnglobal-states-2}
      \State $\maxcommitts \gets \max\set{@, \txnvar.\committs}$
        \label{line:wtcommitprepare-wttxnglobal-maxcommitts}
    \EndProcedure
  \end{algorithmic}
  \end{varwidth}
\end{algorithm}

For simplicity, we assume that each handler in the protocols executes atomically;
see Section~\ref{section:conclusion} for discussions.
%%%%%%%%%%%%%%%%%%%%
\subsubsection{Start Transactions} \label{sss:wt-open-start}

A client starts a transaction on a session $\wtsidvar$ by calling \wtstart{},
which creates and populates a transaction $\txnvar$
(lines~\code{\ref{alg:wt-db}}{\ref{line:wtstart-init-txn}}--\code{\ref{alg:wt-db}}{\ref{line:wtstart-snapmax}}).
Particularly, it scans $\wttxnglobal$ to collect the concurrently active transactions on other sessions into $\txnvar.\concur$.
%%%%%%%%%%%%%%%%%%%%
\subsubsection{Read and Update Operations} \label{sss:wt-read-update}

To read from a key $\keyvar$,
we iterate over the update list $\store[\keyvar]$ forward
and returns the value written by the \emph{first} visible transaction
(line~\code{\ref{alg:wt-db}}{\ref{line:wtread-call-txnvis}}).

To update a key $\keyvar$,
we first check whether the transaction, denoted $\txnvar$,
should be aborted due to conflicts
(lines~\code{\ref{alg:wt-db}}{\ref{line:wtupdate-forall}}--\code{\ref{alg:wt-db}}
{\ref{line:wtupdate-return-rollback}}).
To this end, we iterates over the update list $\store[\keyvar]$.
If there are updates on $\keyvar$ made by transactions
that are \emph{invisible} to $\txnvar$ and are not aborted,
$\txnvar$ will be rolled back.
If $\txnvar$ passes the conflict checking,
it is assigned a unique transaction identifier, i.e., $\currenttid$,
in case it has not yet been assigned one
(line~\code{\ref{alg:wt-db}}{\ref{line:wtupdate-txn-tid}}).
Finally, the key-value pair $\tuple{key, val}$
is added into the modification set $\txnvar.\mods$
% (line~\code{\ref{alg:wt-db}}{\ref{line:wtupdate-txn-mods}})
and is inserted at the front of the update list $\store[\keyvar]$.
% (line~\code{\ref{alg:wt-db}}{\ref{line:wtupdate-store-key}}).
%%%%%%%%%%%%%%%%%%%%
\subsubsection{Commit or Rollback Operations} \label{sss:wt-commit-rollback}

To commit the transaction on session $\wtsidvar$,
we simply resets $\wttxnglobal[\wtsidvar]$ to $\wttidnone$,
indicating that there is currently no active transaction
on this session~(line~\code{\ref{alg:wt-db}}{\ref{line:wtcommit-wttxnglobal-states}}).
To roll back a transaction $\txnvar$,
we additionally reset $\txnvar.\tid$ in $\store$ to $-1$
(line~\code{\ref{alg:wt-db}}{\ref{line:wtrollback-store}}).
Note that read-only transactions can always commit successfully.
%%%%%%%%%%%%%%%%%%%%%%%%%%%%%%

% \input{sections/wt-correctness}
%%%%%%%%%%%%%%%%%%%%%%%%%%%%%%
%%%%%%%%%%%%%%%%%%%%%%%%%%%%%%%%%%%%%%%%

% rs-tx.tex

%%%%%%%%%%%%%%%%%%%%%%%%%%%%%%%%%%%%%%%%
\section{Replica Set Transactions} \label{section:rs-tx}

% table-rs-variables.tex

\begin{table}[t]
  \centering
  \caption{Types and variables used in \rsalg.}
  \label{table:rs-variables}
  \renewcommand{\arraystretch}{1.5}
  \resizebox{1.00\textwidth}{!}{%
    \begin{tabular}{|c|c|c|c|c|}
      \hline
      \multicolumn{1}{|c|}{$\RSSID = \N$}
	    & \multicolumn{1}{c|}{$\TS$: the set of timestamps}
	    & \multicolumn{1}{c|}{$\ct \in \TS$}
      & \multicolumn{1}{c|}{$\oplog \gets \emptyseq \in \OPLOG^{\ast}$}
	    & \multicolumn{1}{c|}{$\rswtconns \in [\RSSID \to \WTSID]$}
      \\ \hline
      \multicolumn{3}{|c|}{$\OPLOG = [\ts : \TS, \ops : (\Key \times \Val)^{\ast}]
        \cup \scstyle{[\ts : \TS, {\sf cts} : \TS]}$}
	    & \multicolumn{2}{c|}{$\txnops \in [\RSSID \to (\Key \times \Val)^{\ast}]$}
      \\ \hline
    \end{tabular}%
  }
\end{table}

We now describe the protocol \rsalg{} of snapshot isolation implemented in replica sets.
Table~\ref{table:rs-variables} summarizes the types and variables used in \rsalg.
For now the readers should ignore the \colorbox{lightgray}{highlighted} lines,
which are needed for the \scalg{} protocol.

%%%%%%%%%%%%%%%%%%%%%%%%%%%%%%
% rs-design.tex

%%%%%%%%%%%%%%%%%%%%%%%%%%%%%%
\subsection{Key Designs} \label{ss:rs-designs}

A replica set consists of a single primary node and several secondary nodes.
All transactional operations, i.e., start, read, update, and commit, are first performed on the primary.
Committed transactions are replicated to the secondaries via a leader-based consensus protocol
similar to Raft~\cite{ATC2014:Raft, NSDI2021:MongoDB-Raft}.
%%%%%%%%%%%%%%%
\subsubsection{Hybrid Logical Clocks} \label{sss:rs-hlc}

\rsalg{} uses hybrid logical clocks (HLC)~\cite{HLC:2014}
as the read and commit timestamps of transactions.
% An HLC is a pair $\tuple{\mathit{sec}, \mathit{counter}}$ of physical time (in seconds)
% and a counter to distinguish events that occurred within the same second.
Without going into the details,
we assume that HLCs are compared lexicographically and are thus totally ordered,
and HLCs can be incremented via \tick{}.

All nodes and clients maintain a cluster time $\ct$, which is also an HLC~\cite{SIGMOD2019:MongoDB-CC}.
They \emph{distribute} their latest cluster time when sending any messages
and \emph{update} it when receiving a larger one in incoming messages~\footnote{
  For brevity, we omit the distribution of cluster time in pseudocode.}.
The cluster time is \emph{incremented} (``ticks'') only when
a \rsalg{} transaction is committed on the primary.
%%%%%%%%%%%%%%%
\subsubsection{Speculative Snapshot Isolation} \label{sss:ssi}

\rsalg{} implements a so-called \emph{speculative snapshot isolation} protocol
with \rc = ``\snapshotrc'' and \wc = ``\majority''~\cite{VLDB2019:TunableConsistency}.
It guarantees that each read obtains data that was majority committed in the replica set,
and requires all updates be majority committed before the transaction completes.
% In a \rsalg{} transaction, reads with \rc level ``\snapshotrc''
% are guaranteed to obtain data from a consistent snapshot of majority committed in the replica set.
% The \wc level ``\majority'' requires
% all update operations be majority committed and
% the returned data by read operations be also majority committed.
To reduce aborts due to conflicts in back-to-back transactions,
\rsalg{} adopts an innovative strategy called ``\emph{speculative majority}''~\cite{VLDB2019:TunableConsistency}.
In this strategy, transactions read the \emph{latest} data,
instead of reading at a timestamp at or earlier than
the majority commit point in WiredTiger.
At commit time, they wait for all the data they read to
become majority committed.

In implementation, it is unnecessary for update transactions
to explicitly wait at commit time for the data read
to become majority committed~\cite{VLDB2019:TunableConsistency}.
This is because we must wait for the updates in those transactions
to be majority committed, which, due to the replication mechanism (Section~\ref{sss:replication}),
implies that the data read was also majority committed.
Read-only transactions, however, need to issue a special ``\noop'' operation
at commit time and wait for it to be majority committed.
%%%%%%%%%%%%%%%
\subsubsection{Read and Commit Timestamps} \label{sss:rs-oplog}

% Each \rsalg{} transaction is finally encapsulated into a \wtalg{} transaction
% and is committed on the underlying WiredTiger storage engine.
% The primary node is responsible for determining the commit order of all transactions,
% including read-only transactions.
The primary node maintains an $\oplog$ of transactions,
where each entry is assigned a unique \emph{commit timestamp}.
These commit timestamps determine the (logical) commit order of \rsalg{} transactions,
no matter when they are encapsulated into \wtalg{} transactions and are committed in WiredTiger.

When a transaction starts, it is assigned a \emph{read timestamp} on the primary
such that all transactions with smaller commit timestamps have been committed in WiredTiger.
That is, the read timestamp is the maximum point at which the oplog of the primary has no gaps.
Specifically, in $\allcommitted$
(line~\code{\ref{alg:wt-db}}{\ref{line:procedure-allcommitted}}),
the read timestamp of a transaction $\txnvar$ is computed as
the largest timestamp smaller than the minimum of $\maxcommitts$
and the set of commit timestamps of transactions concurrent with $\txnvar$,
where $\maxcommitts$ is the maximum commit timestamp that WiredTiger knows.
\subsection{Protocol} \label{ss:rs-protocol}

% \input{algs/rs/clock}
% rs-primary.tex

\begin{algorithm}[t]
  \caption{\rsalg: the snapshot isolation protocol in a replica set (the primary node)}
  \label{alg:rs-primary}
  \begin{varwidth}[t]{0.50\textwidth}
    \begin{algorithmic}[1]
      \Procedure{\rsstart}{$\rssidvar$}
        \label{line:procedure-rsstart}
      \State Set \emph{readConcern} = ``\emph{snapshot}'' and \emph{writeConcern} = ``majority''
        \label{line:rsstart-concern}
      \EndProcedure

      \Statex
      \Procedure{\rsread}{$\rssidvar, \keyvar$}
        \label{line:procedure-rsread}
      \State $\openwtsession(\rssidvar)$
        \label{line:rsread-call-openwtsession}
      \State $\wtsidvar \gets \rswtconns[\rssidvar]$
        \label{line:rsread-wtsession}
      % \State $\tuple{\valvar, \scstyle{\phasevar}} \gets \Call{\wtread}{\wtsidvar, \keyvar}$
      %   \label{line:rsread-call-wtread}
      \Repeat
        \State $\tuple{\valvar, \scstyle{\phasevar}} \gets \Call{\wtread}{\wtsidvar, \keyvar}$
          \label{line:rsread-call-wtread}
      \Until{$\scstyle{\phasevar \neq \wtprepareinprogress}$}
        \label{line:rsread-while-status}
        \label{line:rsread-call-wtread-in-while}
      % \While{$\scstyle{\phasevar = \wtprepareinprogress}$}
      %   \label{line:rsread-while-status}
      %   % \State \violet{$\Call{\waitproc}{\preparecommittedorabortedcond}$}
      %   % \label{line:rsread-wait}
      %   \State $\tuple{\valvar, \scstyle{\phasevar}} \gets \Call{\wtread}{\wtsidvar, \keyvar}$
      %   % \label{line:rsread-call-wtread-in-while}
      % \EndWhile
      \State \Return $\valvar$
        \label{line:rsread-return}
      \EndProcedure

      \Statex
      \Procedure{\rsupdate}{$\rssidvar, \keyvar, \valvar$}
      \label{line:procedure-rsupdate}
      \State $\openwtsession(\rssidvar)$
      \label{line:rsupdate-call-openwtsession}
      \State $\scstyle{\Call{\rsread}{\rssidvar, \keyvar}}$
      \label{line:rsupdate-call-rsread}
      \State $\wtsidvar \gets \rswtconns[\rssidvar]$
      \label{line:rsupdate-wtsidvar}
      \State $\statusvar \gets$ $\Call{\wtupdate}{\wtsidvar, \keyvar, \valvar}$
      \label{line:rsupdate-call-wtupdate}
      \If{$\statusvar = \ok$}
      \label{line:rsupdate-success}
      \State $\txnops[\rssidvar] \gets @ \circ \set{\tuple{\keyvar, \valvar}}$
      \label{line:rsupdate-txnops}
      \EndIf
      \State \Return $\statusvar$
      \label{line:rsupdate-return}
      \EndProcedure

      \Statex
      \Procedure{\rsrollback}{$\rssidvar$}
        \label{line:procedure-rsrollback}
        \State $\Call{\wtrollback}{\rswtconns[\rssidvar]}$
          \label{line:rsrollback-call-wtrollback}
        \EndProcedure
      \algstore{rs-primary}
    \end{algorithmic}
  \end{varwidth}\qquad
  \begin{varwidth}[t]{0.50\textwidth}
    \begin{algorithmic}[1]
      \algrestore{rs-primary}
      \Procedure{\rscommit}{$\rssidvar$}
        \label{line:procedure-rscommit}
        \State \atomicbegin $\ct \gets \Call{tick}{\null}$
          \label{line:rscommit-tick}
        \State $\wtsidvar \gets \rswtconns[\rssidvar]$
          \label{line:rscommit-wtsession}
        \State $\wtsession[\wtsidvar].\committs \gets \ct$
          \label{line:rscommit-txn-committs}
        \State $\Call{\wtsetcommitts}{\wtsidvar}$\atomicend
          \label{line:rscommit-call-wtsetcommitts}

        \State $\opsvar \gets \txnops[\rssidvar]$
          \label{line:rscommit-ops}
        \If{$\opsvar = \emptyset$}
          \label{line:rscommit-ops-empty}
          \State $\oplog \gets \oplog \circ \tuple{\ct, \noop}$
            \label{line:rscommit-oplog-noop}
        \Else
        \State $\oplog \gets \oplog \circ \tuple{\ct, \opsvar}$
          \label{line:rscommit-oplog}
        \EndIf
        \State $\Call{\wtcommit}{\wtsidvar}$
          \Comment{\cyan{locally committed}}
          \label{line:rscommit-call-wtcommit}

        \State \wait $\lastmajoritycommitted \ge \ct$
          \Comment{\cyan{majority committed or simply committed}}
          \label{line:rscommit-wait-majoritycommitted}
        \State \Return \ok
          \label{line:rscommit-return}
      \EndProcedure

      \Statex
      \Procedure{\openwtsession}{$\rssidvar$}
      \label{line:procedure-openwtsession}
      \If{it is the first operation of the transaction}
      \label{line:openwtsession-if-starttxn}
      \State $\rswtconns[\rssidvar] \gets \text{a new } \wtsidvar$
      \label{line:openwtsession-rswtconns}
      \State $\Call{\wtstart}{\wtsidvar}$
      \label{line:openwtsession-call-wtstart}
      \State $\readtsvar \gets \Call{\allcommitted}{\null}$
      \label{line:openwtsession-call-allcommitted}
      \State $\Call{\wtsetreadts}{\wtsidvar, \readtsvar}$
      \label{line:openwtsession-call-wtsetreadts}
      \EndIf
      \EndProcedure
    \end{algorithmic}
  \end{varwidth}
\end{algorithm}

%%%%%%%%%%%%%%%
% \subsubsection{Open Sessions and Start Transactions} \label{sss:rs-open-start}

Clients interacts with \rsalg{} via sessions.
Each client is bind to a single session with a unique session identifier $\rssidvar \in \RSSID$,
and at most one transaction is active on a session at any time.
Each active \rsalg{} transaction on a session $\rssidvar$
is encapsulated into a \wtalg{} transaction on a new session $\wtsidvar$,
as recorded in $\rswtconns$.
%%%%%%%%%%%%%%%
\subsubsection{Read and Update Operations} \label{sss:rs-read-update}

When the primary receives the \emph{first} operation of an transaction
(lines~\code{\ref{alg:rs-primary}}{\ref{line:rsread-call-openwtsession}}
and \code{\ref{alg:rs-primary}}{\ref{line:rsupdate-call-openwtsession}}),
it calls $\openwtsession$ to open a new session $\wtsidvar$ to WiredTiger,
start a new \wtalg{} transaction on $\wtsidvar$,
and more importantly set the transaction's read timestamp.

The primary delegates the read/update operations to WiredTiger
(lines~\code{\ref{alg:rs-primary}}{\ref{line:rsread-call-wtread}}
and \code{\ref{alg:rs-primary}}{\ref{line:rsupdate-call-wtupdate}}).
If an update succeeds, the $\tuple{key, val}$ pair is recorded in $\txnops[\rssidvar]$
(line~\code{\ref{alg:rs-primary}}{\ref{line:rsupdate-txnops}}).
%%%%%%%%%%%%%%%
\subsubsection{Commit Operations} \label{sss:rs-commit}

To commit a transaction, the primary first \emph{atomically} increments its cluster time $\ct$ via \tick,
takes it as the transaction's commit timestamp
(line~\code{\ref{alg:rs-primary}}{\ref{line:rscommit-txn-committs}}),
uses it to update $\maxcommitts$, and records it in $\wttxnglobal$
(lines~\code{\ref{alg:rs-primary}}{\ref{line:rscommit-call-wtsetcommitts}}
and~\code{\ref{alg:wt-db}}{\ref{line:procedure-wtsetcommitts}}).

If this is a read-only transaction,
the primary appends a $\noop$ entry to its $\oplog$
(line~\code{\ref{alg:rs-primary}}{\ref{line:rscommit-oplog-noop}}; Section~\ref{sss:ssi}).
Otherwise, it appends an entry containing the updates of the transaction.
Each oplog entry is associated with the commit timestamp of the transaction.
Then, the primary asks WiredTiger to \emph{locally commit} this transaction in \wtcommit{}
(line~\code{\ref{alg:rs-primary}}{\ref{line:rscommit-call-wtcommit}}),
which associates the updated key-value pairs in $\store$ with the commit timestamp
(line~\code{\ref{alg:wt-db}}{\ref{line:wtcommit-store}}).
Note that \wtcommit{} needs not to be atomically executed with $\tick$ and $\wtsetcommitts$.

Finally, the primary waits for all updates of the transaction to be \emph{majority committed}
(line~\code{\ref{alg:rs-primary}}{\ref{line:rscommit-wait-majoritycommitted}}).
Specifically, it waits for $\lastmajoritycommitted \ge \ct$,
where $\lastmajoritycommitted$ is the timestamp of the last oplog entry
that has been majority committed (discussed shortly).
%%%%%%%%%%%%%%%
% table-sc-variables.tex

\begin{table}[t]
  \centering
  \caption{Types and variables used in \scalg.}
  \label{table:sc-variables}
  \renewcommand{\arraystretch}{1.3}
  \resizebox{0.60\textwidth}{!}{%
    \begin{tabular}{|c|c|c|}
      \hline
      \multicolumn{1}{|c|}{$\SCSID = \N$}
	    & \multicolumn{1}{c|}{$\ShardID = \N$}
	    & \multicolumn{1}{c|}{$\scrsconns \in [\SCSID \to \RSSID]$}
	    % & \multicolumn{1}{c|}{$\coord \in [\SCSID \to \ShardID]$}
      \\ \hline
      % \multicolumn{1}{|c|}{$\readonly \in [\SCSID \to \set{\top, \bot}]$}
      \multicolumn{2}{|c|}{$\shards \in [\SCSID \to \ps(\ShardID)]$}
      & \multicolumn{1}{c|}{$\readts \in [\SCSID \to \TS]$}
      \\ \hline
    \end{tabular}%
  }
\end{table}

\subsubsection{Replication} \label{sss:replication}

%%%%%%%%%%%%%%%
% rs-repl.tex

\begin{algorithm}[t]
  \caption{\rsalg: replication in a replica set}
  \label{alg:replication}
  \begin{algorithmic}[1]
    \Procedure{\replicate}{\null}
      \label{line:procedure-replicate}
      \State $\send \Call{\pulloplog}{\lastpulled} \sendto \text{the primary } \primaryvar$
        \label{line:replicate-call-pulloplog}

      \State $\wait\receive \Call{\pushoplog}{\oplogvar, \ctvar} \from \primaryvar$
        \label{line:replicate-receive-pushoplog}
      \State $\oplog \gets @ \circ \oplogvar$
        \label{line:replicate-oplog}
      \State $\lastpulled \gets \ctvar$
        \label{line:replicate-lastpulled}
      \State $\send \Call{\replicateack}{\lastpulled} \sendto \primaryvar$
        \label{line:replicate-send-replicateack}
    \EndProcedure

    \WhenReceive[$\pulloplog(\lastpulledvar)$]\from $\secondaryvar$
      \label{line:procedure-pulloplog}
      \State $\oplogvar \gets \text{ oplog entries after $\lastpulledvar$ in } \oplog$
        \label{line:pulloplog-oplog}
      \State $\send \Call{\pushoplog}{\oplogvar, \ct} \sendto \secondaryvar$
        \label{line:pulloplog-send-pushoplog}
    \EndWhenReceive

    \WhenReceive[$\replicateack(\lastpulledvar)$]\from $\secondaryvar$
      \label{line:procedure-replicateack}
      \State $\lastpulledacks[\secondaryvar] \gets \lastpulledvar$
        \label{line:replicateack-lastpulledacks}
      \State $\lastmajoritycommitted \gets \text{the } \lfloor n/2 \rfloor\text{-th largest timestamp in } \lastpulledacks$
        \label{line:replicateack-lastmajoritycommitted}
    \EndWhenReceive
  \end{algorithmic}
\end{algorithm}

%%%%%%%%%%%%%%%

Each secondary node periodically pulls the oplog entries
with larger commit timestamps than $\lastpulled$ from the primary
(line~\code{\ref{alg:replication}}{\ref{line:replicate-call-pulloplog}}).
It appends the retrieved entries to its own $\oplog$,
updates $\lastpulled$ accordingly,
and sends a \replicateack{} to the primary.
The primary maintains the timestamp of the last pulled oplog entry
for each secondary $\secondaryvar$ in $\lastpulledacks[\secondaryvar]$.
Whenever it receives a \replicateack{} from a secondary,
the primary updates $\lastmajoritycommitted$ accordingly
(line~\code{\ref{alg:replication}}{\ref{line:replicateack-lastmajoritycommitted}}).
%%%%%%%%%%%%%%%%%%%%%%%%%%%%%%

% \input{sections/rs-correctness}
%%%%%%%%%%%%%%%%%%%%%%%%%%%%%%

%%%%%%%%%%%%%%%%%%%%%%%%%%%%%%%%%%%%%%%%

% sc-tx.tex

%%%%%%%%%%%%%%%%%%%%%%%%%%%%%%%%%%%%%%%%
\section{Sharded Cluster Transactions} \label{section:sc-tx}

This section describes the protocol \scalg{} of snapshot isolation implemented in sharded clusters.
Table~\ref{table:sc-variables} summarizes the types and variables used in \scalg.

%%%%%%%%%%%%%%%%%%%%%%%%%%%%%%
% sc-design.tex

%%%%%%%%%%%%%%%%%%%%%%%%%%%%%%
\subsection{Key Designs} \label{ss:sc-key-designs}

% \begin{figure}[t]
%   \centering
%   \includegraphics[width = 0.60\textwidth]{figs/2PC-Message-Flow}
%   \caption{Message flow diagram illustrating the behavior of 2PC of \scalg.}
%   \label{fig:sc-message-flow}
% \end{figure}

A client issues distributed transactions via a session connected to a \emph{mongos}.
The mongos, as a transaction router,
uses its cluster time as the \emph{read timestamp} of the transaction
and forwards the transactional operations to corresponding shards.
The shard which receives the first read/update operation of a transaction
is designated as the transaction coordinator.
%%%%%%%%%%%%%%%%%%%%
\subsubsection{Two Phase Commit among Shards} \label{sss:sc-2pc}

If a transaction has not been aborted due to write conflicts in \scupdate{},
the mongos can proceed to commit it.
If this transaction is read-only,
the mongos instructs each of the participants to directly commit locally via \rscommit;
otherwise, the mongos instructs the transaction coordinator
to perform a variant of two-phase commit (2PC) that always commits among all participants
(line~\code{\ref{alg:sc-primary}}{\ref{line:procedure-twopc}}).
Specifically, the coordinator sends a $\prepare$ message to all participants.
After receiving the $\prepare$ message,
a participant computes a local \emph{prepare timestamp}
and returns it to the coordinator in a $\prepareack$ message.
When the coordinator receives $\prepareack$ messages from all participants,
it calculates the transaction's \emph{commit timestamp}
by taking the maximum of all prepare timestamps
(line~\code{\ref{alg:sc-primary}}{\ref{line:twopc-committs}}),
and sends a $\commit$ message to all participants.
After receiving $\decack$ messages from all participants,
the coordinator replies to the mongos
(line~\code{\ref{alg:sc-primary}}{\ref{line:twopc-call-twopcack}}).
%%%%%%%%%%%%%%%%%%%%
\subsubsection{Replication within Replia Sets} \label{sss:sc-replication}

\scalg{} uses state-machine replication~\cite{StateMachine:CSUR1990,TCS:DC2021}
to achieve fault tolerance in replica sets.
On the one hand, the transaction coordinator
persists the participant information within its replica set
(line~\code{\ref{alg:sc-primary}}{\ref{line:twopc-participants-majority-committed}})
before sending the $\prepare$ messages,
and the transaction's commit timestamp
(line~\code{\ref{alg:sc-primary}}{\ref{line:twopc-decision-majority-committed}})
before sending the \commit{} messages.
On the other hand, the primary of a shard waits for a quorum of secondary nodes
to persist its oplog entry
(line~\code{\ref{alg:sc-primary}}{\ref{line:prepare-wait-majoritycommitted}})
before sending the $\prepareack$ message,
and the final decision
(lines~\code{\ref{alg:sc-primary}}{\ref{line:commit-wait}} and \code{\ref{alg:sc-primary}}{\ref{line:abort-wait}})
before sending the \decack{} message.
%%%%%%%%%%%%%%%%%%%%
\subsubsection{Consistent Snapshots and Read Timestamps} \label{sss:sc-snapshot}

\scalg{} uses HLCs which are loosely synchronized to assign read and commit timestamps to transactions.
Due to clock skew or pending commit,
a transaction may receive a read timestamp from a mongos,
but the corresponding snapshot is not yet fully available
at transaction participants~\cite{SRDS2013:ClockSI}.
Therefore, \scalg{} will delay the read/update operations until the snapshot becomes available.
There are four cases~\cite{SRDS2013:ClockSI}.
\begin{enumerate}[(1)]
	\item (\caseclockskew)
	      When a transaction participant
	      receives a read/update operation forwarded by the mongos
	      and finds that its cluster time is behind the read timestamp,
	      it first increments its cluster time to catch up.
	      \scalg{} achieves this by issuing a $\noop$ write with the read timestamp
	      (line~\code{\ref{alg:sc-primary}}{\ref{line:waitforreadconcern-call-noopwrite}}).
	\item (\caseholes)
	      \scalg{} transactions on a primary are not necessarily committed
	      in the increasing order of their commit timestamps in WiredTiger.
	      To guarantee snapshot isolation,
	      we need to ensure that there are no ``holes'' in the oplog before the read timestamp.
	      Therefore, the primary waits until $\Call{allcommitted}{\null}$
		  is larger than the read timestamp
	      (line~\code{\ref{alg:sc-primary}}{\ref{line:waitforreadconcern-wait-until-allcommitted}}).
	\item (\casependingcommitread)
	      Consider a read operation of a transaction $\txnvar$.
	      By the visibility rule in WiredTiger,
	      the read may observe an update of another transaction $\txnvar'$ which is prepared but not yet committed.
	      To guarantee snapshot isolation,
	      the read cannot be applied until $\txnvar'$ commits or aborts.
	      To this end, \scalg{} keeps trying the read until it returns a value without the $\wtprepareinprogress$ flag
	      (line~\code{\ref{alg:rs-primary}}{\ref{line:rsread-call-wtread-in-while}}).
	\item (\casependingcommitupdate)
	      Similarly, an update operation of a transaction
	      may observe an update of another transaction which is prepared but not yet committed.
	      \scalg{} delays this update by first performing a read operation on the same key
	      in the way described in \casependingcommitread{}
	      (line~\code{\ref{alg:rs-primary}}{\ref{line:rsupdate-call-rsread}}).
\end{enumerate}
\subsection{Protocols} \label{ss:sc-protocols}

% sc-primary.tex

\begin{algorithm}[t]
  \caption{\scalg: the snapshot isolation protocol in sharded cluster (on the primary nodes)}
  \label{alg:sc-primary}
  \begin{varwidth}[t]{0.50\textwidth}
  \begin{algorithmic}[1]
	\WhenReceive[$\Call{\scread}{\scsidvar, \keyvar}$]
	  \label{line:procedure-scread}
    \State $\Call{\scstart}{\scsidvar}$
      \label{line:scread-call-scstart}
    \State $\valvar \gets \Call{\rsread}{\scrsconns[\scsidvar], \keyvar}$
      \label{line:scread-call-rsread}
    \State \Return $\valvar$
      \label{line:scread-return}
	\EndWhenReceive

  \Statex
  \WhenReceive[$\Call{\scupdate}{\scsidvar, \keyvar, \valvar}$]
	  \label{line:procedure-scupdate}
    \State $\Call{\scstart}{\scsidvar}$
      \label{line:scupdate-call-scstart}
    \State $\statusvar \gets \Call{\rsupdate}{\scrsconns[\scsidvar], \keyvar, \valvar}$
      \label{line:scupdate-call-rsupdate}
    \State \Return $\statusvar$
      \label{line:scupdate-return}
  \EndWhenReceive

  \Statex
  \WhenReceive[$\Call{\twopc}{\scsidvar} \from \mongosvar$]
    \label{line:procedure-twopc}
    \State \untilstyle{\wait\until $\tuple{\scsidvar, \shards[\scsidvar]}$ majority committed
      in collection \texttt{config.transaction\_coords}}
      \label{line:twopc-participants-majority-committed}

    \State $\Primary \gets \primaryof(\shards[\scsidvar])$
      \label{line:twopc-primary}
    \State $\send \Call{\prepare}{\scsidvar} \sendto \Primary$
      \label{line:twopc-call-prepare}
    \State $\wait\receive \Call{\prepareack}{\preparetsvar_{\primaryvar}}$
      \Statex \hspace{25pt} $\from \primaryvar \in \Primary$ % \commitdecision
      \label{line:twopc-wait-receive-prepareack}
    \State $\committsvar \gets \max_{\primaryvar \in \Primary} \preparetsvar_{\primaryvar}$
      \label{line:twopc-committs}

    \State \untilstyle{\wait\until $\tuple{\scsidvar, \committsvar}$ majority committed
      in collection \texttt{config.transaction\_coords}} % \commitdecision
      \label{line:twopc-decision-majority-committed}

    \State $\send \Call{\commit}{\scsidvar, \committsvar} \sendto \Primary$
      \label{line:twopc-call-commit}
    \State $\wait\receive \Call{\decack}{\scsidvar} \from \Primary$
      \label{line:twopc-wait-receive-decack}
    \State $\send \Call{\twopcack}{\null} \sendto \mongosvar$
      \label{line:twopc-call-twopcack}
  \EndWhenReceive

  \Statex
  \Procedure{\scstart}{$\scsidvar$}
    \label{line:procedure-scstart}
    \If{this is the first operation of the transaction received by the primary}
      \label{line:scstart-if}
      \State $\wtsidvar \gets \rswtconns[\scrsconns[\scsidvar]]$
        \label{line:scstart-wtsidvar}
      \State $\readtsvar \gets \readts[\scsidvar]$
        \label{line:scstart-readtsvar}
      \State $\Call{\wtsetreadts}{\wtsidvar, \readtsvar}$
        \label{line:scstart-call-wtsetreadts}
      \State $\Call{\waitforreadconcern}{\scsidvar}$
        \label{line:scstart-call-waitforreadconcern}
    \EndIf
  \EndProcedure
  \algstore{sc-primary}
  \end{algorithmic}
  \end{varwidth}\qquad
  \begin{varwidth}[t]{0.50\textwidth}
  \begin{algorithmic}[1]
  \algrestore{sc-primary}
  \WhenReceive[$\Call{\prepare}{\scsidvar} \from \primaryvar$]
    \label{line:procedure-prepare}
    \State $\rssidvar \gets \scrsconns[\scsidvar]$
      \label{line:prepare-rssession}
    \State $\ct \gets \Call{tick}{\null}$
      \label{line:prepare-call-tick}
    \State $\Call{\wtprepare}{\rswtconns[\rssidvar], \ct}$
      \label{line:prepare-call-wtprepare}
    \State $\opsvar \gets \txnops[\rssidvar]$
      \label{line:prepare-ops}
    \If{$\opsvar = \emptyset$}
      \label{line:prepare-ops-empty}
      \State $\oplog \gets @ \circ \tuple{\ct, \noop}$
        \label{line:prepare-oplog-noop}
    \Else
      \State $\oplog \gets @ \circ \tuple{\ct, \opsvar}$
        \label{line:prepare-oplog}
    \EndIf
    \State \untilstyle{$\wait\until \lastmajoritycommitted \ge \ct$}
      \label{line:prepare-wait-majoritycommitted}
    \State $\send \Call{\prepareack}{\ct} \sendto \primaryvar$ % \commitdecision
      \label{line:prepare-call-prepareack}
  \EndWhenReceive

  \Statex
  \WhenReceive[$\Call{\commit}{\scsidvar, \committsvar} \from \primaryvar$]
    \label{line:procedure-commit}
    \State $\ct \gets \Call{\tick}{\null}$
      \label{line:commit-ct}
    \State $\wtsidvar \gets \rswtconns[\scrsconns[\scsidvar]]$
      \label{line:commit-wtsession}
    \State $\Call{\wtcommitpreparets}{\wtsidvar, \committsvar}$
      \label{line:commit-call-wtcommitpreparets}
    \State $\Call{\wtcommitprepare}{\wtsidvar}$
      \label{line:commit-call-wtcommitprepare}
    \State $\oplog \gets @ \circ \tuple{\ct, \committsvar}$
      \label{line:commit-oplog}
    \State \untilstyle{$\wait\until \lastmajoritycommitted \ge \ct$}
      \label{line:commit-wait}
    % \State \violet{$\Call{\notifyall}{\preparecommittedorabortedcond}$}
    %   \label{line:commit-call-notifyall}
    \State $\send \decack(\scsidvar) \sendto \primaryvar$
      \label{line:commit-call-decack}
  \EndWhenReceive

  \Statex
  \WhenReceive[$\Call{\abort}{\scsidvar} \from \mongosvar$]
    \label{line:procedure-abort}
    \State $\Call{\rsrollback}{\scrsconns[\scsidvar]}$
      \label{line:abort-call-rsrollback}
    \State \untilstyle{$\wait\until \lastmajoritycommitted \ge \ct$}
      \label{line:abort-wait}
    % \State \violet{$\Call{\notifyall}{\preparecommittedorabortedcond}$}
    %   \label{line:abort-call-notifyall}
    \State $\send \decack(\scsidvar) \sendto \mongosvar$
      \label{line:abort-call-decack}
  \EndWhenReceive

  \Statex
  \Procedure{\waitforreadconcern}{$\scsidvar$}
    \label{line:procedure-waitforreadconcern}
    \State $\readtsvar \gets \readts[\scsidvar]$
      \label{line:waitforreadconcern-readts}
    \If{$\ct < \readtsvar$}
      \label{line:waitforreadconcern-ct-if}
      \State $\oplog \gets @ \circ \tuple{\readtsvar, \noop}$
        \label{line:waitforreadconcern-call-noopwrite}
    \EndIf
    \State $\wait\until \Call{\allcommitted}{\null} \ge \readtsvar$
      \label{line:waitforreadconcern-wait-until-allcommitted}
  \EndProcedure
  \end{algorithmic}
  \end{varwidth}
\end{algorithm}

%%%%%%%%%%%%%%%%%%%%%%%%%%%%%%

For brevity, we focus on the behavior of the primary nodes in the sharded cluster.
Consider a session $\scsidvar$ connected to a mongos.
We use $\readts[\scsidvar]$ to denote the read timestamp, assigned by the mongos,
of the currently active transaction on the session.
%%%%%%%%%%%%%%%%%%%%%%%%%%%%%%
\subsubsection{Read and Update Operations} \label{sss:sc-read-update}

If this is the first operation the primary receives,
it calls \scstart{} to set the transaction's read timestamp in WiredTiger
(line~\code{\ref{alg:sc-primary}}{\ref{line:scstart-call-wtsetreadts}}).
In \scstart{}, it also calls \waitforreadconcern{}
to handle \caseclockskew{} and \caseholes{}
(line~\code{\ref{alg:sc-primary}}{\ref{line:scstart-call-waitforreadconcern}}).

The primary then delegates the operation to \rsalg{}
(lines~\code{\ref{alg:sc-primary}}{\ref{line:scread-call-rsread}}
and~\code{\ref{alg:sc-primary}}{\ref{line:scupdate-call-rsupdate}}).
To handle \casependingcommitread{},
\rsread{} has been modified to keep trying reading from WiredTiger
until it returns a value updated by a committed transaction
(line~\code{\ref{alg:rs-primary}}{\ref{line:rsread-call-wtread-in-while}}).
To handle \casependingcommitupdate{},
\rsupdate{} first performs an \scread{} on the same key
(line~\code{\ref{alg:rs-primary}}{\ref{line:rsupdate-call-rsread}}).
Moreover, if the update fails due to write conflicts,
the mongos will send an $\abort$ message to the primary nodes of all other participants, without entering 2PC.
%%%%%%%%%%%%%%%%%%%%%%%%%%%%%%
\subsubsection{Commit Operations} \label{sss:sc-commit}

In 2PC, the transaction coordinator behaves as described in Sections~\ref{sss:sc-2pc}
and~\ref{sss:sc-replication} for atomic commitment and fault tolerance, respectively.
We now explain how the participants handle the \prepare{} and \commit{} messages in more detail.

After receiving a \prepare{} message,
the participant advances its cluster time and takes it as the prepare timestamp
(lines~\code{\ref{alg:sc-primary}}{\ref{line:prepare-call-tick}},
\code{\ref{alg:sc-primary}}{\ref{line:prepare-call-wtprepare}},
\code{\ref{alg:wt-db}}{\ref{line:wtprepare-txn-preparets}},
and \code{\ref{alg:wt-db}}{\ref{line:wtprepare-store}}).
Note that the transaction's $\tid$ in $\wttxnglobal$ is reset to $\wttidnone$
(line~\code{\ref{alg:wt-db}}{\ref{line:wtprepare-wttxnglobal-states}}).
Thus, according to the visibility rule,
this transaction is \emph{visible} to other transactions that starts later in WiredTiger.
Next, the participant creates an oplog entry containing the updates executed locally
or a \noop{} oplog entry for the ``speculative majority'' strategy (Section~\ref{sss:ssi}).
Then, it waits until the oplog entry has been majority committed
(line~\code{\ref{alg:sc-primary}}{\ref{line:prepare-wait-majoritycommitted}}).

When a participant receives a \commit{} message, it ticks its cluster time.
After setting the transaction's commit timestamp
(line~\code{\ref{alg:sc-primary}}{\ref{line:commit-call-wtcommitpreparets}}),
it asks WiredTiger to commit the transaction locally
(line~\code{\ref{alg:sc-primary}}{\ref{line:commit-call-wtcommitprepare}}).
Note that the status of the transaction is changed to \wtprepareresolved{}
(line~\code{\ref{alg:wt-db}}{\ref{line:wtcommitprepare-store}}).
Thus, this transaction is now visible to other waiting transactions
(line~\code{\ref{alg:rs-primary}}{\ref{line:rsread-while-status}}).
Then, the participant generates an oplog entry containing the commit timestamp
and waits for it to be majority committed.
% and $\oplogreadts$ is updated for handling \caseholes{}
% (line~\code{\ref{alg:sc-primary}}{\ref{line:commit-call-updateoplogvisibility}}).
%%%%%%%%%%%%%%%%%%%%%%%%%%%%%%

% \input{sections/sc-correctness}
%%%%%%%%%%%%%%%%%%%%%%%%%%%%%%
%%%%%%%%%%%%%%%%%%%%%%%%%%%%%%%%%%%%%%%%

% \input{sections/discussion}
% checking.tex

%%%%%%%%%%%%%%%%%%%%%%%%%%%%%%
\section{Checking Snapshot Isolation} \label{section:checking-si}

In this section, we design and evaluate
white-box polynomial-time checking algorithms
for the transactional protocols of MongoDB
against \strongsi, \rtsi, and \sessionsi.
The project can be found at \url{https://github.com/Tsunaou/MongoDB-SI-Checker}.

% table-deployment.tex

\begin{table}[t]
  \centering
  \caption{Cloud virtual machines used in our experiments.}
  \label{table:machines}
  \renewcommand{\arraystretch}{1.2}
  \resizebox{\textwidth}{!}{%
  \begin{tabular}{|l|l|l|l|}
    \hline
    \textbf{VMs} & \textbf{Configuration} & \textbf{OS} & \textbf{Region} \\ \hline
    VM-a\{0-6\}              & 3.10GHz Intel(R) Xeon(R) Platinum 8269CY CPU with 2 virtual cores and 4GB of RAM                       & Ubuntu 20.04                & Chengdun-A \\ \hline
    VM-b\{0-2\}              & 2.50GHz Intel(R) Xeon(R) Platinum 8269CY CPU with 1 virtual cores and 1GB of RAM                       & Ubuntu 20.04                &  Chengdu-B\\ \hline
    \end{tabular}
  }
\end{table}

\begin{table}[t]
    \centering
    \caption{Configurations of MongoDB deployments.}
    \label{table:deployment}
    \renewcommand{\arraystretch}{1.0}
    \resizebox{\textwidth}{!}{%
    \begin{tabular}{|c|c|c|}
        \hline
        \textbf{Deployment} & \textbf{Version} & \textbf{Configuration} \\ \hline
        Standalone          & WiredTiger 3.3.0 & A standalone WiredTiger storage engine in VM-a0. \\ \hline
        Replica Set         & MongoDB 4.4.5 &
        \begin{tabular}[c]{@{}c@{}}
          A replica set of 5 nodes with VM-a1 as the primary and VM-a\set{2-5} as secondaries.\end{tabular} \\ \hline
        Sharded Cluster     & MongoDB 4.4.5                               &
        \begin{tabular}[c]{@{}c@{}}
          A cluster consisting of 1 config server and 2 shards. \\
          The config server, shard1, and shard2 are all replica sets, and \\
          deployed in VM-a\set{1-3}, VM-a\set{4-6}, and VM-b\set{0-2}, respectively.\end{tabular} \\ \hline
    \end{tabular}
    }
\end{table}

% table-generator.tex

\begin{table}[t]
    \centering
    \caption{Transaction generation parameters (supported by Jepsen).}
    \label{table:generator}
    \renewcommand{\arraystretch}{1.3}
    \resizebox{\textwidth}{!}{%
        \begin{tabular}{|llllllll|}
            \hline
            \multicolumn{8}{|c|}{\textbf{Tunable Parameters}}                                                                                                                                                                                                                                                                   \\ \hline
            \multicolumn{1}{|l|}{\textbf{Parameters}}        & \multicolumn{1}{l|}{\textbf{Default}}  & \multicolumn{5}{l|}{\textbf{Range}}                    & \multicolumn{1}{l|}{\textbf{Description}} \\ \hline
            \multicolumn{1}{|l|}{\param{txn-num}}            & \multicolumn{1}{l|}{3000}              & \multicolumn{5}{l|}{\set{1000, 2000, 3000, 4000, 5000}} & \multicolumn{1}{l|}{The total number of transactions.} \\ \hline
            \multicolumn{1}{|l|}{\param{concurrency}}        & \multicolumn{1}{l|}{9}                 & \multicolumn{5}{l|}{\set{3, 6, 9, 12, 15}}              & \multicolumn{1}{l|}{The number of clients.}\\ \hline
            \multicolumn{1}{|l|}{\param{max-txn-len}}        & \multicolumn{1}{l|}{12}                & \multicolumn{5}{l|}{\set{4, 8, 12, 16, 20}}             & \multicolumn{1}{l|}{The maximum number of operations in each transaction.}\\ \hline
            \hline
            \multicolumn{8}{|c|}{\textbf{Fixed Parameters}}                                                                                                                                                                                                                                                                          \\ \hline
            \multicolumn{1}{|l|}{\textbf{Parameters}}         & \multicolumn{3}{l|}{\textbf{Value}}    & \multicolumn{4}{l|}{\textbf{Description}}                                                                                                                                                                \\ \hline
            \multicolumn{1}{|l|}{\param{key-count}}          & \multicolumn{3}{l|}{10}                & \multicolumn{4}{l|}{There are 10 distinct keys at any point for generation.}                                                                                                                             \\ \hline
            \multicolumn{1}{|l|}{\param{max-writes-per-key}} & \multicolumn{3}{l|}{128}               & \multicolumn{4}{l|}{There are at most 128 updates on each key.}                                                                                                                                          \\ \hline
            \multicolumn{1}{|l|}{\param{key-dist}}           & \multicolumn{3}{l|}{exponential}       & \multicolumn{4}{l|}{Probability distribution for keys.} \\ \hline
              % Key $i$ is $2^{i}$ times more likely to be chosen.}                                                                 \\ \hline
            \multicolumn{1}{|l|}{\param{read:update ratio}}  & \multicolumn{3}{l|}{1 : 1}             & \multicolumn{4}{l|}{The default (and fixed) read:update ration in Jepsen.}                                                                                                                               \\ \hline
            \multicolumn{1}{|l|}{\param{timeout}}            & \multicolumn{3}{l|}{5s/10s/30s}        & \multicolumn{4}{l|}{The timeout for \wtalg/\rsalg/\scalg{} transactions.} \\ \hline
        \end{tabular}
    }
\end{table}

%%%%%%%%%%%%%%%%%%%%%%%%%
\subsection{White-box Checking Algorithms} \label{ss:checking-algs}

%%%%%%%%%%%%%%%%%%%%
\subsubsection{Basic Ideas} \label{sss:checking-ideas}

The three checking algorithms work in the same manner.
For example, in the checking algorithm for the \wtalg{} protocol against \strongsi,
we first extract the $\viswt$ and $\arwt$ relations from a given history $\h$
according to Definitions~\ref{def:viswt} and \ref{def:arwt}.
Then, we check whether the abstract execution $\ae \triangleq (\h, \viswt, \arwt)$
satisfies all the axioms required by $\strongsi$ according to Definition~\ref{def:strongsi}.
Since the total order $\arwt$ offers the version order~\cite{Adya:PhDThesis1999, OOPSLA2019:Complexity},
this checking, particularly for $\extaxiom$, can be easily done in polynomial time.
%%%%%%%%%%%%%%%%%%%%
\subsubsection{Implementation Considerations} \label{sss:impl}

We now explain how to obtain the additional information necessary for
extracting the appropriate $\vis$ and $\ar$ relations for each protocol.
%%%%%%%%%%%%%%%
% \paragraph{\wtalg} \label{para:wtalg}

For each \wtalg{} transaction $\txnvar$, we record the real time when it starts and commits.
They are taken as $\starttimewt(\txnvar)$ and $\committimewt(\txnvar)$, respectively.
However, this poses a technical challenge involving real time:
Due to the potential non-accurate records of physical time,
it is possible that a transaction $\txnvar'$ reads data written by
another transaction $\txnvar$, but $\txnvar'$ starts before $\txnvar$ commits, violating the $\inrbaxiom$ axiom.
Therefore, we measure the \emph{degree of inaccuracy}:
we enumerate all pairs of transactions like $\txnvar'$ and $\txnvar$ above
and take the maximum over all $\committimewt(\txnvar) - \starttimewt(\txnvar')$.
We call it the ``real time error'' and report it in the experiments.
To further reduce the impacts caused by the real time issue,
we can also utilize some other information to fix $\viswt$ and $\arwt$.
For example, Lemma~\ref{lemma:wt-conflict-arwt-tid} says that for any two conflicting transactions,
the $\arwt$ relation between them can be defined by their transaction identifiers.
Therefore, in the experiments, we also obtain the transaction identifiers of update transactions
from the write-ahead log of WiredTiger,
and omitted the histories which violate Lemma~\ref{lemma:wt-conflict-arwt-tid}.
% TODO: omitted???
%%%%%%%%%%%%%%%
% \paragraph{\rsalg} \label{para:rsalg}

For \rsalg, we obtain the read timestamps of transactions from the \texttt{mongod.log} file,
and commit timestamps from the oplog (stored in the collection \texttt{oplog.rs}) on the primary.
% They are taken as $\readtimers(\txnvar)$ and $\committimers(\txnvar)$, respectively.
%%%%%%%%%%%%%%%
% \paragraph{\scalg} \label{para:scalg}

For \scalg, we obtain the transactions' read timestamps from the \texttt{mongod.log} file.
For update transactions that involve only a single shard,
their commit timestamps are recorded in the oplog of the primary of the shard.
For a distributed transaction across multiple shards that enters 2PC,
its commit timestamp is stored in the oplog entry on the primary of any shard involved.
% They are taken as $\readtimesc(\txnvar)$ and $\committimesc(\txnvar)$, respectively.
% Especially, update transactions committed in a single shard may acquire the same commit timestamp in different shards.
% Therefore, we use the unique \red{$\sidvar$ or $\shardid$} to determine the total order of $\arsc$
% (line~\code{\ref{alg:ismongodbsi}}{\ref{line:thesamecommitts}}).
%%%%%%%%%%%%%%%%%%%%%%%%%
\subsection{Evaluations} \label{ss:evaluation}
%%%%%%%%%%%%%%%%%%%%
\subsubsection{Experimental Setup} \label{sss:experiment-setup}

To demonstrate the effectiveness and the efficiency of the white-box checking algorithms,
we implement them and use them to test MongoDB in different deployments.

We utilize the Jepsen testing framework to schedule the execution of transactions in each deployment.
Each MongoDB deployment is considered as a key-value store as described in Section~\ref{section:si}.
A group of clients concurrently issue random transactions to MongoDB.
Table~\ref{table:generator} summaries both the tunable and fixed parameters for executions.
We consider 10 histories for each combination of parameter values.
Note that we use relatively long timeout values
to reduce possible rollbacks triggered by client timeout,
since we do not consider failures in the work.

Our experiments evaluate WiredTiger 3.3.0 and MongoDB 4.4.5.
Tables~\ref{table:machines} and \ref{table:deployment} shows the configurations of the machines and the deployments.
The average RTT in Region-A, Region-B, and between regions
are 0.14ms, 0.19ms, and 0.86ms, respectively.
%%%%%%%%%%%%%%%%%%%%
\subsubsection{Experimental Results} \label{sss:experiment-result}

First of all, the experimental results confirm our theoretical analysis:
the transactional protocols on the replica set deployment
and the sharded cluster deployment
satisfy \rtsi{} and \sessionsi, respectively.
The transactional protocol in WiredTiger satisfies \strongsi{},
if we can tolerate the real time error within about $25$ms,
which is crucial for validating the $\rbaxiom$ axiom.

% fig-check-perf.tex

\begin{figure}[t]
  \centering
  \includegraphics[width = 1.00\textwidth]{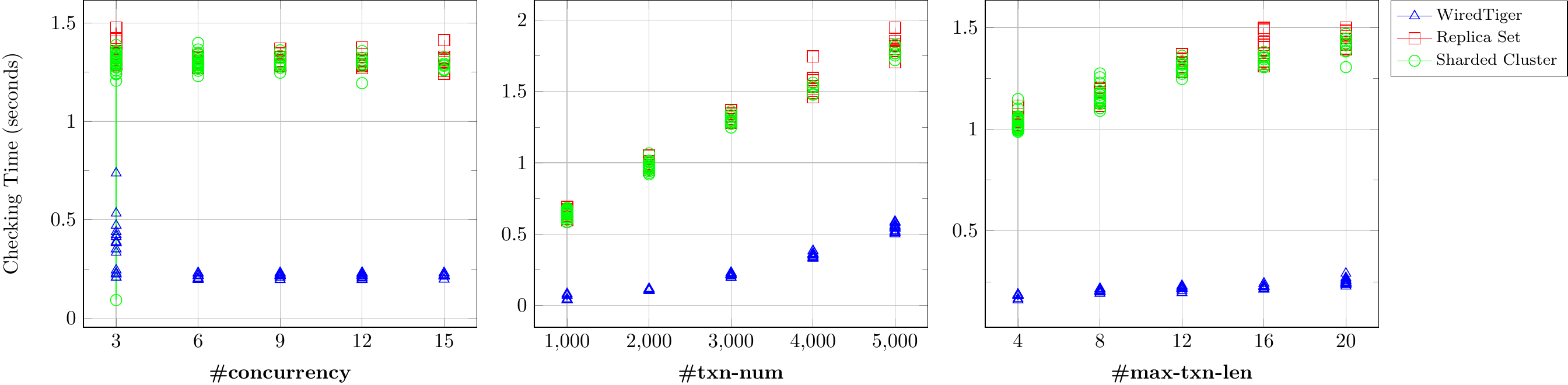}
  \caption{Checking time against different snapshot isolation variants on three MongoDB deployments.}
  \label{fig:si-check-perf}
\end{figure}

Figure~\ref{fig:si-check-perf} shows the performance of our white-box checking algorithms.
All experiments are performed on VM-a0.
The three algorithms are fast to check the transactions for all deployments,
since each consistency axiom can be efficiently checked.
For example, it takes less than 2s to check against a replica set or sharded cluster history
consisting of 5000 transactions.
Therefore, they are quite efficient for checking the satisfaction of
large-scale MongoDB transactional workloads against variants of snapshot isolation.

%%%%%%%%%%%%%%%%%%%%%%%%%%%%%%

% related-work.tex

%%%%%%%%%%%%%%%%%%%%%%%%%%%%%%%%%%%%%%%%
\section{Related Work} \label{section:related-work}

\emph{Consistency Models in MongoDB.}
Schultz \emph{et al.}~\cite{VLDB2019:TunableConsistency} discussed
the tunable consistency models in MongoDB,
which allow users to select the trade-off between consistency and latency
at a per operation level by choosing different
\rc, \wc, and \emph{readPreference} parameters.
Tyulenev \emph{et al.}~\cite{SIGMOD2019:MongoDB-CC} discussed
the design and implementation of causal consistency introduced by MongoDB 3.6,
which provides session guarantees including read-your-writes, monotonic-reads, monotonic-writes,
and writes-follow-reads~\cite{SessionGuarantee:PDIS1994, SessionCausal:PDNP04}.
In this work, we are concerned about the more challenging transactional consistency of MongoDB.

\emph{Specification and Verification of MongoDB.}
The MongoDB Inc. has been actively working on the
formal specification and verification of MongoDB.
Zhou \emph{et al.}~\cite{NSDI2021:MongoDB-Raft}
presented the design and implementation of MongoDB Raft,
the pull-based fault-tolerant replication protocol in MongoDB.
Schultz \emph{et al.}~\cite{MongoDBReconfig:DISC2021}
presented a novel logless dynamic reconfiguration protocol for the MongoDB replication system.
Both protocols have been formally specified
using TLA$^+$~\cite{SpecifyingSystems:Lamport2002}
and verified using the TLC model checker~\cite{Yu:CHARME99}.
In this work, we formally specify and verify
the transactional consistency protocols of MongoDB.

\emph{Transactional Consistency Checking.}
Biswas \emph{et al.}~\cite{OOPSLA2019:Complexity} have studied the complexity issues
of checking whether a given history of a database
satisfies some transactional consistency model.
Particularly, they proved that it is \textsf{NP-complete}
to check the satisfaction of (Adya) snapshot isolation
for a history without the version order.
Gan \emph{et al.}~\cite{VLDB2020:IsoDiff} proposed IsoDiff,
a tool for debugging anomalies caused by weak isolation for an application.
IsoDiff finds a representative subset of anomalies.
Kingsbury \emph{et al.}~\cite{VLDB2020:Elle} presented Elle,
a checker which infers an Adya-style dependency graph
among client-observed transactions.
By carefully choosing appropriate data types and operations,
Elle is able to infer the version order of a history.
Tan \emph{et al.}~\cite{Cobra:OSDI2020} proposed Cobra,
a black-box checking system of serializability.
By leveraging the SMT solver MonoSAT~\cite{SMT:AAAI2015} and other optimizations,
Cobra can handle real-world online transactional processing workloads.
In this work, we design white-box polynomial-time checking algorithms for
the transactional protocols of MongoDB against \strongsi, \rtsi, and \sessionsi.
These checking algorithms make use of the properties of the transactional protocols
to infer the version order of histories,
effectively circumventing the \textsf{NP-hard} obstacle in theory~\cite{OOPSLA2019:Complexity}.

% \red{\emph{Testing of MongoDB.}}
%%%%%%%%%%%%%%%%%%%%%%%%%%%%%%%%%%%%%%%%

% conclusion.tex

%%%%%%%%%%%%%%%%%%%%%%%%%%%%%%
\section{Conclusion and Future Work} \label{section:conclusion}

In this paper, we have formally specified and verified
the transactional consistency protocols in each MongoDB deployment,
namely \wtalg, \rsalg, and \scalg{}.
We proved that they satisfy different variants of snapshot isolation,
namely \strongsi, \rtsi, and \sessionsi, respectively.
We have also proposed and evaluated efficient white-box checking algorithms
for MongoDB transaction protocols against their consistency guarantees.

Our work is a step towards formally verifying MongoDB,
and creates plenty of opportunities for future research.
We list some possible future work.
\begin{itemize}
  \item First, for simplicity, we have assumed that each procedure executes atomically.
	However, the implementation of MongoDB is highly concurrent
	with intricate locking mechanisms~\cite{VLDB2020:eXtreme},
	and needs to be modelled and verified more carefully.
	% ~\footnote{
	%   \red{According to the talk (\url{https://youtu.be/IIGzXX72weQ})},
	%   it took \red{$\dots$} for MongoDB to correctly implement MVCC
	%   under intra-process concurrency.}
  \item Second, we did not consider failures in this paper.
	MongoDB employs a pull-based Raft protocol for data replication
	and tolerates any minority of node failures~\cite{NSDI2021:MongoDB-Raft}.
	We plan to verify and evaluate MongoDB Raft in the future work.
	We will also explore the fault-tolerance and recovery mechanisms in distributed transactions.
  \item Third, there is a gap between our (simplified) pseudocode and the real implementation of MongoDB.
    We plan to use model-based testing~\cite{VLDB2020:eXtreme} to bridge such a gap.
  \item Fourth, MongoDB also supports non-transactional consistency,
	including tunable consistency~\cite{VLDB2019:TunableConsistency}
	and causal consistency~\cite{SIGMOD2019:MongoDB-CC}.
	It is unclear what the consistency model is like
	when non-transactional operations are involved.
  \item Finally, it is interesting to explore the transactional consistency checking algorithms
	utilizing SMT solvers~\cite{OOPSLA2019:Complexity, Cobra:OSDI2020}.
\end{itemize}
% In \scalg{}, to make the transaction router fault-tolerant as well,
% the mongos will choose a recovery shard among the transaction participants.
% We leave the recovery mechanism of mongos in the future work.
%%%%%%%%%%%%%%%%%%%%%%%%%%%%%%

\bibliography{mongodb-tx}
%%%%%%%%%%%%%%%%%%%%%%%%%%%%%%%%%%%%%%%%%%%%%%%%%%%%%%
\appendix
% app-si.tex

%%%%%%%%%%%%%%%%%%%%%%%%%%%%%%
\section{More Variants of Snapshot Isolation} \label{section:app-si}

Generalized snapshot isolation (denoted \gsi{}),
which is shown equivalent to \ansisi{} in~\cite{PODC2017:ClientCentric},
limits a transaction $T$ to read only from snapshots
that do \emph{not} include transactions that committed in real time \emph{after}
$T$ starts (i.e., \inrbaxiom), and
requires the total order $\ar$ respect the commit order (i.e., \cbaxiom).

\begin{definition} \label{def:gsi}
  \[
    \gsi = \si \land \inrbaxiom \land \cbaxiom.
  \]
\end{definition}

%%%%%%%%%%%%%%%%%%%%%%%%%%%%%%

% \input{sections/app-impl}
% proofs.tex

%%%%%%%%%%%%%%%%%%%%%%%%%%%%%%
\section{Correctness Proofs} \label{section:app-proofs}

% wt-correctness.tex

%%%%%%%%%%%%%%%%%%%%%%%%%%%%%%
\subsection{Correctness of \wtalg} \label{ss:wt-correctness}

Consider a history $\h = (\WTTXN, \sowt)$ of \wtalg{},
where we restrict $\WTTXN$ to be set of committed transactions.
Denote by $\WTTXNUPDATE \subseteq \WTTXN$ be the set of update transactions in $\WTTXN$.
We prove that $\h$ satisfies \strongsi{}
by constructing an abstract execution $\ae = (\h, \viswt, \arwt)$
(Theorem~\ref{thm:wtalg-strongsi}).
%%%%%%%%%%%%%%%%%%%%
\subsubsection{The Visibility Relation} \label{sss:vis-wt}

%%%%%%%%%%%%%%%
\begin{definition}[$\viswt$] \label{def:viswt}
  $\viswt \triangleq \rbwt.$
\end{definition}
%%%%%%%%%%%%%%%

%%%%%%%%%%%%%%%
\begin{lemma} \label{lemma:wt-realtimesnapshotaxiom}
  $\ae \models \realtimesnapshotaxiom.$
\end{lemma}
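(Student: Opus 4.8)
The plan is to unfold the axiom and appeal directly to the definition of $\viswt$. Recall that $\realtimesnapshotaxiom$ is the conjunction $\rbaxiom \land \inrbaxiom$, i.e.\ $\rb \subseteq \vis$ together with $\vis \subseteq \rb$, which is exactly the set equality $\vis = \rb$. Instantiated at the \wtdep{} deployment, what must be shown about $\ae = (\h, \viswt, \arwt)$ is therefore the single equality $\viswt = \rbwt$. The arbitration relation $\arwt$ plays no role here, so $\realtimesnapshotaxiom$ can be checked independently of how $\arwt$ is defined.

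With the axiom reduced to $\viswt = \rbwt$, the statement is immediate: by Definition~\ref{def:viswt} we have set $\viswt \triangleq \rbwt$, so both inclusions $\rbwt \subseteq \viswt$ and $\viswt \subseteq \rbwt$ hold trivially, whence $\ae \models \rbaxiom$ and $\ae \models \inrbaxiom$ simultaneously. This is essentially the whole argument; the substantive content of the \strongsi{} proof (namely $\extaxiom$ and $\prefixaxiom$, which must be verified against this same $\viswt$) lives in the companion lemmas, not in this one.

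The only point that deserves a sentence of care is that $\viswt$, defined to be $\rbwt$, is a legitimate visibility relation in the sense of Definition~\ref{def:ae}, i.e.\ a strict partial order. I would obtain this from the earlier observation that $\rb$, and hence its deployment-specific instance $\rbwt$, is a strict partial order, using $\starttimewt(T) < \committimewt(T)$ for every transaction $T$ to get irreflexivity and transitivity of the relation $\committimewt(\cdot) < \starttimewt(\cdot)$. Thus I do not expect any real obstacle: the lemma holds by construction, and its purpose is simply to record that the chosen $\viswt$ discharges $\realtimesnapshotaxiom$ by fiat, shifting the burden of the overall theorem onto the value-matching axioms.
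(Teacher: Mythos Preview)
Your proposal is correct and matches the paper's treatment: the paper states the lemma without proof precisely because it is immediate from Definition~\ref{def:viswt} ($\viswt \triangleq \rbwt$) and the axiom $\realtimesnapshotaxiom$ being the equality $\vis = \rb$. Your added remark on $\rbwt$ being a strict partial order is a reasonable sanity check but is not needed for this particular lemma, since well-definedness of $\ae$ as an abstract execution is a separate concern.
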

%%%%%%%%%%%%%%%

%%%%%%%%%%%%%%%
\begin{lemma} \label{lemma:viswt-tid}
  $\forall \txnvar, \txnvar' \in \WTTXNUPDATE.\;
    \txnvar \rel{\viswt} \txnvar' \implies \txnvar.\tid < \txnvar'.\tid.$
\end{lemma}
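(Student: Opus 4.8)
The plan is to track the real-time instant at which each update transaction acquires its non-zero identifier and to exploit the monotonicity of the global counter $\currenttid$. First I would unfold the definitions: by Definition~\ref{def:viswt}, $\txnvar \rel{\viswt} \txnvar'$ is the same as $\txnvar \rel{\rbwt} \txnvar'$, which by Definition~\ref{def:rb} means $\committimewt(\txnvar) < \starttimewt(\txnvar')$. Since both transactions lie in $\WTTXNUPDATE$ and $\WTTXN$ is restricted to committed transactions, each of them performs at least one successful update and so is assigned a positive identifier at line~\code{\ref{alg:wt-db}}{\ref{line:wtupdate-txn-tid}} (and is never reset to $-1$).

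Next I would pin down, for each update transaction $\txnvar$, the real-time instant $\tau(\txnvar)$ at which its identifier is assigned. This assignment happens inside the first successful \wtupdate{} call of $\txnvar$. Because a transaction must start before it can update, and must finish all its updates before it commits, we have $\starttimewt(\txnvar) \le \tau(\txnvar) \le \committimewt(\txnvar)$; in particular $\tau(\txnvar) \le \committimewt(\txnvar)$ and $\starttimewt(\txnvar') \le \tau(\txnvar')$. Chaining these with the hypothesis gives
\[
  \tau(\txnvar) \le \committimewt(\txnvar) < \starttimewt(\txnvar') \le \tau(\txnvar'),
\]
so $\tau(\txnvar) < \tau(\txnvar')$: transaction $\txnvar$ grabs its identifier strictly before $\txnvar'$ does.

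Finally I would argue that earlier assignments yield smaller identifiers. The counter $\currenttid$ is only ever modified at line~\code{\ref{alg:wt-db}}{\ref{line:wtupdate-wttxnglobal-currentid}}, where it is incremented, so it is monotonically increasing over real time; and since each handler runs atomically, the read of $\currenttid$ into $\txnvar.\tid$ at line~\code{\ref{alg:wt-db}}{\ref{line:wtupdate-txn-tid}} is immediately followed by the increment with no interleaving. Hence if $\txnvar$'s assignment at $\tau(\txnvar)$ sets $\txnvar.\tid = c$, the counter is at least $c+1$ at every later instant, so $\txnvar'.\tid \ge c + 1 > c = \txnvar.\tid$, which is exactly the claim.

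The main obstacle is the second step: making rigorous the claim that the identifier-assignment instant of an update transaction falls within its $[\starttimewt, \committimewt]$ interval. This requires a small invariant about the per-session control flow --- that a transaction's first \wtupdate{} is sequenced strictly after its \wtstart{} and strictly before its \wtcommit{} on the same session --- together with the atomicity assumption on handlers. Once that temporal containment is established, the monotonicity of $\currenttid$ makes the conclusion immediate.
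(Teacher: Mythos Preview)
Your proposal is correct and follows the same approach as the paper: unfold $\viswt$ to $\rbwt$ and then appeal to the monotonicity of $\currenttid$ at lines~\code{\ref{alg:wt-db}}{\ref{line:wtupdate-txn-tid}}--\code{\ref{alg:wt-db}}{\ref{line:wtupdate-wttxnglobal-currentid}}. The paper's own proof is a two-line sketch that leaves implicit precisely the temporal containment $\starttimewt(\txnvar) \le \tau(\txnvar) \le \committimewt(\txnvar)$ you carefully spell out, so your version is a strict elaboration of theirs rather than a different argument.
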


\begin{proof} \label{proof:viswt-tid}
  Consider any two transactions $\txnvar, \txnvar' \in \WTTXN$
  that have obtained non-zero transaction identifiers
  such that $\txnvar \rel{\viswt} \txnvar'$.
  By Definition~\ref{def:viswt} of $\viswt$, $\txnvar \rel{\rbwt} \txnvar'.$
  By lines~\code{\ref{alg:wt-db}}{\ref{line:wtupdate-txn-tid}}
  and \code{\ref{alg:wt-db}}{\ref{line:wtupdate-wttxnglobal-currentid}},
  $\txnvar.\tid < \txnvar'.\tid.$
\end{proof}
%%%%%%%%%%%%%%%

%%%%%%%%%%%%%%%
\begin{definition}[$\wtvis$] \label{def:wtvis}
  For a transaction $\txnvar \in \WTTXN$,
  we define $\wtvis(\txnvar) \subseteq \WTTXNUPDATE$
  to be the set of update transactions (excluding $\txnvar$ itself)
  that are visible to $\txnvar$ according to the visibility rule in $\txnvis$.
\end{definition}
%%%%%%%%%%%%%%%

%%%%%%%%%%%%%%%
\begin{lemma} \label{lemma:wtvis-welldefined}
  For any transaction $\txnvar \in \WTTXN$, $\wtvis(\txnvar)$ is well-defined.
  That is, $\wtvis(\txnvar)$ does not change over the lifecycle of $\txnvar$.
\end{lemma}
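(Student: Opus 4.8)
The plan is to reduce well-definedness to a per-candidate invariance claim: for every committed update transaction $\txnvar' \neq \txnvar$ with (final) identifier $t' = \txnvar'.\tid$, I would show that the value of the visibility predicate $\txnvis(\txnvar, t')$ is the same at every point in $\txnvar$'s lifecycle. Since $\wtvis(\txnvar)$ is exactly the set of such $\txnvar'$ whose store entries pass this check, constancy of each predicate value forces $\wtvis(\txnvar)$ to be stable. First I would pin down which state the predicate actually reads. Ignoring the underlined timestamp conjunct (which belongs to \rsalg), visibility of $t'$ is governed by $\txnvar.\concur$, $\txnvar.\upperlimit$, $\txnvar.\tid$, and the identifier $t'$ stamped on the store entries of $\txnvar'$.

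The easy part is to argue that three of these four inputs are constant. Both $\txnvar.\concur$ and $\txnvar.\upperlimit$ are written exactly once, in \wtstart{} (lines~\code{\ref{alg:wt-db}}{\ref{line:wtstart-snapshot}} and~\code{\ref{alg:wt-db}}{\ref{line:wtstart-snapmax}}), and are never modified by \wtread{}, \wtupdate{}, \wtcommit{}, or \wtrollback{}. For $t'$, I would observe that $\txnvar'$ acquires its single non-zero identifier at its first update (line~\code{\ref{alg:wt-db}}{\ref{line:wtupdate-txn-tid}}) and retains it thereafter, so that every entry it inserts into $\store$ (line~\code{\ref{alg:wt-db}}{\ref{line:wtupdate-store-key}}) carries this same value $t'$; moreover, because $\currenttid$ is bumped after each allocation (line~\code{\ref{alg:wt-db}}{\ref{line:wtupdate-wttxnglobal-currentid}}), distinct transactions receive distinct identifiers.

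The only time-varying input is $\txnvar.\tid$, which jumps from $0$ to a non-zero value when $\txnvar$ performs its own first update, and the crux of the argument is to show this jump cannot flip any visibility decision. The key step is that $\txnvar.\tid$ enters the predicate only through the conjunct $t' \neq \txnvar.\tid$ inside the ``started-after'' disjunct $(t' \ge \txnvar.\upperlimit \land t' \neq \txnvar.\tid)$. Since $\txnvar' \neq \txnvar$ and $t'$ is a fixed non-zero identifier, uniqueness of identifiers gives $t' \neq \txnvar.\tid$ whenever $\txnvar.\tid$ is non-zero, while $t' \neq 0$ gives $t' \neq \txnvar.\tid$ before $\txnvar$'s first update (when $\txnvar.\tid = 0$); hence this conjunct is invariantly \true, and the disjunct collapses to the constant condition $t' \ge \txnvar.\upperlimit$. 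I expect precisely this self-observation conjunct to be the delicate point, as it is the one term whose truth value could in principle change at $\txnvar$'s first update; ruling that out for all $\txnvar' \neq \txnvar$ is the heart of the proof. With that settled, the remaining disjuncts $t' = -1$ and $t' \in \txnvar.\concur$ are manifestly constant, so $\txnvis(\txnvar, t')$ is determined entirely by quantities fixed at or before $\txnvar$'s start, and $\wtvis(\txnvar)$ is therefore well-defined.
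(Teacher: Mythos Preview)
Your argument is correct, and it takes a genuinely different route from the paper's. The paper proceeds by a temporal case analysis: it fixes the time $\timepoint$ at which $\txnvar$ starts, the time $\timepoint'$ at which $\txnvar'$ obtains its identifier, and the time $\timepoint''$ at which $\txnvar'$ finishes, and then splits on whether $\timepoint'' < \timepoint$ (in which case $\txnvar'$ is shown permanently visible) or $\timepoint < \timepoint''$ (in which case, splitting further on whether $\timepoint' < \timepoint$, $\txnvar'$ is shown permanently invisible via either the $\concur$ or the $\upperlimit$ disjunct). You instead argue invariance directly at the level of the predicate's inputs, isolating $\txnvar.\tid$ as the only mutable quantity and showing the self-observation conjunct $t' \neq \txnvar.\tid$ is unaffected by that mutation. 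Your route is arguably cleaner for this lemma in isolation, and it makes explicit a point the paper's proof leaves implicit (why the change in $\txnvar.\tid$ is harmless). The paper's route, on the other hand, does more than establish constancy: it actually determines \emph{which} disjunct fires in each timing scenario, and that classification is reused verbatim in the proof of Lemma~\ref{lemma:viswt-wtvis} to show $\viswt^{-1}(\txnvar) \cap \WTTXNUPDATE = \wtvis(\txnvar)$. If you adopt your approach here, you will need to supply that timing-based classification separately when you reach that lemma.
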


\begin{proof} \label{proof:wtvis-welldefined}
  Consider a transaction $\txnvar \in \WTTXN$ which starts at time $\timepoint$.
  Denote by $\wtvis_{\timepoint}(\txnvar)$ the set of transactions that are visible to $\txnvar$ at time $\timepoint$.
  Let $\txnvar' \neq \txnvar \in \WTTXNUPDATE$ be an update transaction on session $\wtsidvar$.
  Suppose that $\txnvar'$ obtains its transaction identifier $\txnvar'.\tid$ at time $\timepoint'$
  and finishes at time $\timepoint''$.
  We distinguish between two cases.

  \casei: $\timepoint < \timepoint''$. There are two cases.
    If $\timepoint' < \timepoint$, we have $\wttxnglobal[\wtsid] \neq \wttidnone$ at time $\timepoint$.
      Therefore, $\txnvar'.\tid \in \txnvar.\concur$.
      Since $\txnvar.\concur$ does not change, $\txnvar'$ is invisible to $\txnvar$ over the lifecycle of $\txnvar$.
    If $\timepoint < \timepoint'$, then $\txnvar'.\tid \ge \txnvar.\upperlimit$.
      Since $\txnvar.\upperlimit$ does not change, $\txnvar'$ is invisible to $\txnvar$ over the lifecycle of $\txnvar$.

  \caseii: $\timepoint'' < \timepoint$.
    Therefore, $\txnvar'.\tid \notin \txnvar.\concur \land \txnvar'.\tid < \txnvar.\upperlimit$.
    Since either $\txnvar.\concur$ or $\txnvar.\upperlimit$ does not change,
    $\txnvar'$ is visible to $\txnvar$ over the lifecycle of $\txnvar$.
    % \red{TODO: aborted}
\end{proof}

By Lemma~\ref{lemma:wtvis-welldefined},
we can extend the definition of $\wtvis$ to events.

\begin{definition}[$\wtvis$ for Events] \label{def:wtvis-events}
  Let $\evar$ be an event of transaction $\txnvar$.
  We abuse the notation $\wtvis(\evar)$ to denote the set of update transactions
  that are visible to $\evar$ when it occurs.
  We have $\wtvis(\evar) = \wtvis(\txnvar)$.
\end{definition}
%%%%%%%%%%%%%%%
\begin{lemma} \label{lemma:viswt-wtvis}
  $\forall \txnvar \in \WTTXN.\; \viswt^{-1}(\txnvar) \cap \WTTXNUPDATE = \wtvis(\txnvar).$
\end{lemma}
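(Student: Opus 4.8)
The plan is to reduce the claim to the case analysis already carried out in the proof of Lemma~\ref{lemma:wtvis-welldefined}. First I would unfold the left-hand side: since $\viswt \triangleq \rbwt$ (Definition~\ref{def:viswt}) and $\rbwt$ is irreflexive, the set $\viswt^{-1}(\txnvar) \cap \WTTXNUPDATE$ consists exactly of those update transactions $\txnvar' \neq \txnvar$ with $\committimewt(\txnvar') < \starttimewt(\txnvar)$. The right-hand side $\wtvis(\txnvar)$ (Definition~\ref{def:wtvis}) consists of the update transactions $\txnvar' \neq \txnvar$ that pass the visibility test $\txnvis$; discarding the aborted disjunct (committed update transactions have $\txnvar'.\tid \neq -1$) and the timestamp conjunct (inactive for \wtalg, being underlined), these are the $\txnvar'$ satisfying $\txnvar'.\tid \notin \txnvar.\concur \land \txnvar'.\tid < \txnvar.\upperlimit$. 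Hence it suffices to prove, for each update transaction $\txnvar' \neq \txnvar$, the biconditional
\[
  \committimewt(\txnvar') < \starttimewt(\txnvar) \iff \big(\txnvar'.\tid \notin \txnvar.\concur \land \txnvar'.\tid < \txnvar.\upperlimit\big).
\]

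Second, I would identify the real-time oracle timestamps with the operational event times used in Lemma~\ref{lemma:wtvis-welldefined}: $\starttimewt(\txnvar)$ is the instant $\timepoint$ at which $\txnvar$'s \wtstart{} handler executes atomically, and $\committimewt(\txnvar')$ is the instant $\timepoint''$ at which the committed update transaction $\txnvar'$ finishes. Under this identification the biconditional is precisely the dichotomy established there: the case $\timepoint'' < \timepoint$ (``\caseii'') yields $\txnvar'.\tid \notin \txnvar.\concur$ and $\txnvar'.\tid < \txnvar.\upperlimit$, i.e.\ visibility, while the case $\timepoint < \timepoint''$ (``\casei'', split into $\timepoint' < \timepoint$ and $\timepoint < \timepoint'$) forces either $\txnvar'.\tid \in \txnvar.\concur$ or $\txnvar'.\tid \ge \txnvar.\upperlimit$, i.e.\ invisibility. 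Since the time oracle assigns distinct timestamps, $\timepoint \neq \timepoint''$, so these two cases are exhaustive.

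Third, I would assemble the two directions. For the forward inclusion, if $\committimewt(\txnvar') < \starttimewt(\txnvar)$ then $\timepoint'' < \timepoint$, and \caseii{} gives $\txnvar' \in \wtvis(\txnvar)$. Conversely, if $\txnvar' \in \wtvis(\txnvar)$, then \casei{} cannot hold (both its sub-cases force invisibility), so we must be in \caseii, whence $\timepoint'' < \timepoint$, i.e.\ $\committimewt(\txnvar') < \starttimewt(\txnvar)$ and $\txnvar' \in \viswt^{-1}(\txnvar)$. Because both sides are by construction subsets of $\WTTXNUPDATE$ excluding $\txnvar$ itself, double inclusion yields the stated equality.

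The main obstacle is the second step: rigorously justifying the identification of the real-time timestamps with the operational instants $\timepoint$ and $\timepoint''$, and confirming that the snapshot quantities $\txnvar.\concur$ and $\txnvar.\upperlimit$ are exactly the values read at $\starttimewt(\txnvar)$ (they are populated in \wtstart{} and, by Lemma~\ref{lemma:wtvis-welldefined}, never change over $\txnvar$'s lifecycle). I would also fix the boundary convention that the dedicated initial-value transaction commits before every other transaction starts, so that it lies on both sides for every $\txnvar \in \WTTXN$.
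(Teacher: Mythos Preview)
Your proposal is correct and follows essentially the same approach as the paper: both arguments unfold $\viswt$ as $\rbwt$ and then invoke the two cases (\casei{} and \caseii) from the proof of Lemma~\ref{lemma:wtvis-welldefined} to establish the two inclusions. The paper's version is simply terser, proving $\wtvis(\txnvar) \subseteq \rbwt^{-1}(\txnvar) \cap \WTTXNUPDATE$ by contradiction via \casei{} and the reverse inclusion directly via \caseii, without spelling out the intermediate biconditional or the timestamp identifications that you make explicit.
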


\begin{proof} \label{proof:viswt-wtvis}
  We need to show $\rbwt^{-1}(\txnvar) \cap \WTTXNUPDATE = \wtvis(\txnvar).$

  We first show that $\wtvis(\txnvar) \subseteq \rbwt^{-1}(\txnvar) \cap \WTTXNUPDATE$.
  Consider an update transaction $\txnvar' \in \wtvis(\txnvar)$.
  We need to show that $\txnvar' \rel{\rbwt} \txnvar$.
  Suppose by contradiction that $\txnvar$ starts before $\txnvar'$ commits.
  By the same argument in (\casei) of the proof of Lemma~\ref{lemma:wtvis-welldefined},
  $\txnvar' \notin \wtvis(\txnvar)$. Contradiction.

  Next we show that $\rbwt^{-1}(\txnvar) \cap \WTTXNUPDATE \subseteq \wtvis(\txnvar)$.
  Consider an update transaction $\txnvar' \in \rbwt^{-1}(\txnvar)$.
  By the same argument in (\caseii) of the proof of Lemma~\ref{lemma:wtvis-welldefined},
  $\txnvar' \in \wtvis(\txnvar)$.
\end{proof}
%%%%%%%%%%%%%%%

%%%%%%%%%%%%%%%
\begin{lemma} \label{lemma:wt-noconflictaxiom}
  $\ae \models \noconflictaxiom.$
\end{lemma}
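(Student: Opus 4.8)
The plan is to unfold \noconflictaxiom{} for the abstract execution $\ae = (\h, \viswt, \arwt)$, recalling that here $\viswt = \rbwt$ (Definition~\ref{def:viswt}) and that $S \conflict T$ means $S$ and $T$ write some common key. So the goal reduces to the following real-time statement: for any two committed update transactions $S, T \in \WTTXNUPDATE$ that both write a common key $\keyvar$, either $S \rel{\rbwt} T$ or $T \rel{\rbwt} S$, i.e.\ their lifecycles are disjoint in real time. I would establish this by arguing that WiredTiger's conflict checking in \wtupdate{} forbids two conflicting transactions from both committing while overlapping in real time, and then converting the resulting runtime visibility into the $\viswt$ relation via Lemma~\ref{lemma:viswt-wtvis}.

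Since $S$ and $T$ both write $\keyvar$, each has at least one update event on $\keyvar$. I would pick one update event $u_S$ of $S$ on $\keyvar$ and one $u_T$ of $T$ on $\keyvar$; these occur at distinct real times, so WLOG $u_S$ precedes $u_T$ (otherwise swap the roles of $S$ and $T$ and prove $T \rel{\viswt} S$ by the symmetric argument). Let $\timepoint$ be the time $u_T$ executes. I then claim that $S$'s write is present in $\store[\keyvar]$ at time $\timepoint$ and is not marked aborted: when $u_S$ executed (strictly before $\timepoint$) it inserted an entry $\tuple{S.\tid, \_, \ldots}$ at the front of $\store[\keyvar]$ (line~\code{\ref{alg:wt-db}}{\ref{line:wtupdate-store-key}}) with $S.\tid \neq -1$ already assigned; entries are never deleted, and the only way this entry's tid becomes $-1$ is a rollback of $S$ (line~\code{\ref{alg:wt-db}}{\ref{line:wtrollback-store}}). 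As $S$ is committed it never rolls back, so at $\timepoint$ the entry still carries tid $S.\tid \neq -1$.

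Because $T$ commits it does not roll back during $u_T$, so its conflict check (lines~\code{\ref{alg:wt-db}}{\ref{line:wtupdate-forall}}--\code{\ref{alg:wt-db}}{\ref{line:wtupdate-return-rollback}}) encounters no entry of $\store[\keyvar]$ that is simultaneously invisible to $T$ and non-aborted. Applying this to $S$'s entry (which is non-aborted) forces $\Call{\txnvis}{T, S.\tid}$ to hold, i.e.\ $S \in \wtvis(T)$ (Definition~\ref{def:wtvis}). Finally, Lemma~\ref{lemma:viswt-wtvis} gives $\wtvis(T) = \viswt^{-1}(T) \cap \WTTXNUPDATE$, and since $S \in \WTTXNUPDATE$ we conclude $S \rel{\viswt} T$, which is one disjunct of \noconflictaxiom{}; the swapped case yields the other.

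The main obstacle is the middle step: pinning down that $S$'s update on $\keyvar$ is already recorded in $\store[\keyvar]$ and still carries a non-$-1$ tid exactly when $T$ runs its conflict check. This hinges on the chosen real-time ordering $u_S$ before $u_T$ together with the observation that committed transactions never have their store entries reset by \wtrollback{}. Once this is in place, the remaining steps are a direct reading of the conflict-checking code and an appeal to Lemma~\ref{lemma:viswt-wtvis}.
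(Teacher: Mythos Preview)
Your argument is correct and follows essentially the same approach as the paper's proof: both fix a common key, order the two transactions by which one updates that key first, and then use the conflict check in \wtupdate{} together with Lemma~\ref{lemma:viswt-wtvis} to conclude that the earlier writer is $\viswt$-visible to the later one. The paper phrases the last step as a contrapositive (assume $\lnot(S \rel{\viswt} T)$ and derive that $T$ would abort), whereas you argue directly and spell out why $S$'s store entry is present and non-aborted when $T$'s check runs; this extra detail is helpful but does not change the underlying reasoning.
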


\begin{proof} \label{proof:wt-nonconflictaxiom}
  Consider any two update transactions $\txnvar, \txnvar' \in \WTTXNUPDATE$
  such that $\txnvar \conflict \txnvar'$.
  Without loss of generality, assume that they both update key $\keyvar$,
  and that $\txnvar$ updates $\keyvar$ before $\txnvar'$ does.
  By Definition~\ref{def:viswt} of $\viswt$, $\lnot(\txnvar' \rel{\viswt} \txnvar)$.
  We show that $\txnvar \rel{\viswt} \txnvar'$.
  Suppose by contradiction that $\lnot(\txnvar \rel{\viswt} \txnvar')$.
  By Lemma~\ref{lemma:viswt-wtvis}, $\txnvar \notin \wtvis(\txnvar')$.
  Therefore, $\txnvar'$ would abort when it updates $\keyvar$.
\end{proof}
%%%%%%%%%%%%%%%
%%%%%%%%%%%%%%%%%%%%
\subsubsection{The Arbitrary Relation} \label{sss:ar-wt}

%%%%%%%%%%%%%%%
\begin{definition}[$\arwt$] \label{def:arwt}
  $\arwt \triangleq \cbwt.$
\end{definition}
%%%%%%%%%%%%%%%

%%%%%%%%%%%%%%%
\begin{lemma} \label{lemma:wt-cbaxiom}
  $\ae \models \cbaxiom.$
\end{lemma}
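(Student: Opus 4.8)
The plan is to unfold Definition~\ref{def:arwt} and observe that the axiom collapses to a trivial containment. By Definition~\ref{def:arwt}, the arbitration relation for \wtalg{} is set to be exactly the real-time commit-before relation, $\arwt \triangleq \cbwt$. The axiom \cbaxiom{}, as listed in Table~\ref{table:axioms}, is $\cb \subseteq \ar$; instantiated for the WiredTiger deployment this reads $\cbwt \subseteq \arwt$. Since $\arwt$ was \emph{defined} to equal $\cbwt$, the required containment $\cbwt \subseteq \arwt$ holds immediately by reflexivity of set inclusion. Hence $\ae \models \cbaxiom$.

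The entire content of this lemma therefore resides in the definitional choice made in Definition~\ref{def:arwt}, rather than in any combinatorial argument about the protocol: once arbitration is taken to be the commit order, the \cbaxiom{} axiom is satisfied by construction. So there is no genuine obstacle to overcome in proving this particular statement.

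The only auxiliary point worth noting—though it is strictly outside the scope of this lemma—is that $\cbwt$ must be a legitimate strict total order for $\arwt$ to be admissible as the arbitration component of an abstract execution in the sense of Definition~\ref{def:ae}. This follows from the assumption that the time oracle assigns \emph{distinct} commit timestamps, so that $\committimewt$ is injective and $\cbwt$ is irreflexive, transitive, and total. I would relegate this observation to the surrounding text (where $\arwt$ is introduced) and keep the proof of Lemma~\ref{lemma:wt-cbaxiom} itself to the one-line definitional argument above.
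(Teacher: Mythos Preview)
Your proposal is correct and matches the paper's approach: the paper states this lemma immediately after Definition~\ref{def:arwt} without any proof, treating it as an immediate consequence of defining $\arwt \triangleq \cbwt$. Your one-line unfolding of the definition is exactly the intended argument.
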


% \begin{proof} \label{proof:wt-cbaxiom}
%   By Definition~\ref{def:arwt}, $\cbwt \subseteq \arwt$.
% \end{proof}
%%%%%%%%%%%%%%%

%%%%%%%%%%%%%%%
\begin{lemma} \label{lemma:viswt-arwt}
  $\viswt \subseteq \arwt.$
\end{lemma}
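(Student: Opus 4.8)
The plan is to reduce the statement to a purely real-time fact by unfolding the two definitions. By Definition~\ref{def:viswt} we have $\viswt = \rbwt$, and by Definition~\ref{def:arwt} we have $\arwt = \cbwt$. Hence the claim $\viswt \subseteq \arwt$ is equivalent to showing that the returns-before relation refines the commits-before relation, i.e. $\rbwt \subseteq \cbwt$. This is the single inclusion I would establish.

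To prove $\rbwt \subseteq \cbwt$, I would take an arbitrary pair with $S \rel{\rbwt} T$. By Definition~\ref{def:rb} this means $\committimewt(S) < \starttimewt(T)$. I would then invoke the basic property of the time oracle: every transaction starts strictly before it commits, so $\starttimewt(T) < \committimewt(T)$ (the oracle assigns distinct start and commit timestamps to each transaction, and a transaction's start necessarily precedes its commit). Chaining these two strict inequalities by transitivity of $<$ over real time yields $\committimewt(S) < \committimewt(T)$, which by Definition~\ref{def:cb} is exactly $S \rel{\cbwt} T$. Therefore $\rbwt \subseteq \cbwt$, and the lemma follows.

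There is essentially no hard part here: the result is a direct consequence of the definitions together with the self-evident ordering $\starttimewt(T) < \committimewt(T)$. The only point that merits a moment's care is making explicit that this start-before-commit fact is exactly what licenses the step from $\committimewt(S) < \starttimewt(T)$ to $\committimewt(S) < \committimewt(T)$; once that is stated, the argument is immediate. I note that this inclusion is precisely the side condition $\vis \subseteq \ar$ required for $\ae = (\h, \viswt, \arwt)$ to qualify as an abstract execution in the sense of Definition~\ref{def:ae}, so establishing it is a prerequisite for the subsequent use of $\ae$ in the \strongsi{} proof.
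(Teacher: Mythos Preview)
Your proposal is correct and follows essentially the same approach as the paper: unfold $\viswt = \rbwt$ and $\arwt = \cbwt$, then use $\rbwt \subseteq \cbwt$. The paper's proof is a one-liner that simply asserts $\rbwt \subseteq \cbwt$ without spelling out the $\starttimewt(T) < \committimewt(T)$ step, whereas you make that step explicit; but the substance is identical.
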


\begin{proof} \label{proof:viswt-arwt}
  By Definition~\ref{def:viswt} of $\viswt$ and Definition~\ref{def:arwt} of $\arwt$,
  $\viswt = \rbwt \subseteq \cbwt = \arwt$.
\end{proof}
%%%%%%%%%%%%%%%

%%%%%%%%%%%%%%%
\begin{lemma} \label{lemma:wt-prefixaxiom}
  $\ae \models \prefixaxiom.$
\end{lemma}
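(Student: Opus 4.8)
The plan is to unfold the \prefixaxiom{} axiom, which for the constructed execution $\ae$ reads $\arwt \comp \viswt \subseteq \viswt$, and then reduce it to the transitivity of the strict order $<$ on real-time timestamps. By Definition~\ref{def:arwt} we have $\arwt = \cbwt$, and by Definition~\ref{def:viswt} we have $\viswt = \rbwt$, so the goal becomes showing $\cbwt \comp \rbwt \subseteq \rbwt$.

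First I would take an arbitrary pair $(S, T) \in \cbwt \comp \rbwt$. By the definition of relational composition, there exists a transaction $U$ with $S \rel{\cbwt} U$ and $U \rel{\rbwt} T$. Unfolding Definition~\ref{def:cb} gives $\committimewt(S) < \committimewt(U)$, and unfolding Definition~\ref{def:rb} gives $\committimewt(U) < \starttimewt(T)$. Chaining these two inequalities through the common term $\committimewt(U)$ yields $\committimewt(S) < \starttimewt(T)$, which is exactly the condition $S \rel{\rbwt} T$, i.e., $(S, T) \in \rbwt = \viswt$. Since $(S,T)$ was arbitrary, this establishes $\arwt \comp \viswt \subseteq \viswt$, and hence $\ae \models \prefixaxiom$.

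I expect this lemma to be essentially immediate rather than difficult: the entire argument rests on the fact that the time oracle assigns real-valued timestamps that are totally ordered, so the strict inequality $<$ is transitive, and both $\cbwt$ and $\rbwt$ are defined purely in terms of these timestamps. The only point requiring any care is bookkeeping the two \emph{different} endpoints being compared---$\cbwt$ compares two \emph{commit} times while $\rbwt$ compares a \emph{commit} time against a \emph{start} time---and observing that the shared term $\committimewt(U)$ is precisely what lets the two inequalities compose. There is no genuine obstacle here; unlike \noconflictaxiom{} (Lemma~\ref{lemma:wt-noconflictaxiom}), this proof does not need to appeal to the protocol's conflict-checking or visibility-rule machinery at all, only to the real-time definitions of $\cbwt$ and $\rbwt$.
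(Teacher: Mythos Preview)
Your proposal is correct and matches the paper's own proof, which also unfolds $\arwt = \cbwt$ and $\viswt = \rbwt$ and then observes $\cbwt \comp \rbwt \subseteq \rbwt$ directly from the timestamp definitions. The paper simply compresses your element-chasing argument into the single line $\arwt \comp \viswt = \cbwt \comp \rbwt \subseteq \rbwt = \viswt$.
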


\begin{proof} \label{proof:wt-prefixaxiom}
  By Definition~\ref{def:viswt} of $\viswt$, Definition~\ref{def:arwt} of $\arwt$,
  Definition~\ref{def:rb} of $\rb$, and Definition~\ref{def:cb} of $\cb$,
  $\arwt \comp \viswt = \cbwt \comp \rbwt \subseteq \rbwt = \viswt$.
\end{proof}
%%%%%%%%%%%%%%%

%%%%%%%%%%%%%%%
\begin{lemma} \label{lemma:wt-conflict-arwt-tid}
  $\forall \txnvar, \txnvar' \in \WTTXN.\;
    \txnvar \conflict \txnvar' \implies (\txnvar \rel{\arwt} \txnvar'
      \iff \txnvar.\tid < \txnvar'.\tid).$
\end{lemma}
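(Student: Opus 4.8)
The plan is to reduce the biconditional to three facts already in hand: that conflicting transactions are $\viswt$-comparable, that a $\viswt$-edge respects the $\tid$ order, and that $\viswt \subseteq \arwt$ with $\arwt$ a strict total order. First I would observe that if $\txnvar \conflict \txnvar'$ then both transactions write the same key, so both lie in $\WTTXNUPDATE$ and carry distinct, non-zero identifiers (each a fresh value of $\currenttid$ assigned at lines~\code{\ref{alg:wt-db}}{\ref{line:wtupdate-txn-tid}}--\code{\ref{alg:wt-db}}{\ref{line:wtupdate-wttxnglobal-currentid}}). This is precisely the regime in which Lemma~\ref{lemma:viswt-tid} applies.

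Next I would invoke the no-conflict axiom (Lemma~\ref{lemma:wt-noconflictaxiom}): since $\txnvar \conflict \txnvar'$, either $\txnvar \rel{\viswt} \txnvar'$ or $\txnvar' \rel{\viswt} \txnvar$, and these two cases are mutually exclusive because $\viswt = \rbwt$ is a strict partial order and hence asymmetric. By Lemma~\ref{lemma:viswt-tid} the first case gives $\txnvar.\tid < \txnvar'.\tid$ and the second gives $\txnvar'.\tid < \txnvar.\tid$; by Lemma~\ref{lemma:viswt-arwt} the first case also gives $\txnvar \rel{\arwt} \txnvar'$ and the second $\txnvar' \rel{\arwt} \txnvar$. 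Thus a single $\viswt$-edge simultaneously pins down both the $\arwt$ order and the $\tid$ order in the same direction.

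To assemble the equivalence I would handle each implication by eliminating the wrong case. For the forward direction, assume $\txnvar \rel{\arwt} \txnvar'$; the possibility $\txnvar' \rel{\viswt} \txnvar$ is ruled out because it would yield $\txnvar' \rel{\arwt} \txnvar$, contradicting the antisymmetry of the strict total order $\arwt$; hence $\txnvar \rel{\viswt} \txnvar'$ and $\txnvar.\tid < \txnvar'.\tid$. For the converse, assume $\txnvar.\tid < \txnvar'.\tid$; the possibility $\txnvar' \rel{\viswt} \txnvar$ is ruled out because Lemma~\ref{lemma:viswt-tid} would then give $\txnvar'.\tid < \txnvar.\tid$, contradicting distinctness of identifiers; hence $\txnvar \rel{\viswt} \txnvar'$ and, via $\viswt \subseteq \arwt$, $\txnvar \rel{\arwt} \txnvar'$.

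I expect no real obstacle: the content is packaged in the three supporting lemmas, and the only care needed is the bookkeeping that makes \emph{both} implications go through. Concretely, I rely on two trichotomy facts — that $\arwt$ is a strict total order, so exactly one of the two $\arwt$-directions holds, and that conflicting update transactions have distinct identifiers, so exactly one of the two $\tid$-inequalities holds — which together let the single $\viswt$-edge determine the equivalence in each direction.
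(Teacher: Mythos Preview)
Your proposal is correct and follows essentially the same route as the paper: invoke Lemma~\ref{lemma:wt-noconflictaxiom} to obtain $\viswt$-comparability of the two conflicting transactions, then in each direction rule out the wrong $\viswt$-edge using Lemma~\ref{lemma:viswt-arwt} (forward) and Lemma~\ref{lemma:viswt-tid} (converse), concluding via the remaining edge. Your added remarks about distinctness of identifiers and the strictness of $\arwt$ make explicit what the paper leaves implicit, but the argument is the same.
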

\begin{proof} \label{proof:wt-conflict-arwt-tid}
  Consider any two transactions $\txnvar, \txnvar' \in \WTTXN$
  such that $\txnvar \conflict \txnvar'$.
  By Lemma~\ref{lemma:wt-noconflictaxiom},
  $\txnvar \rel{\viswt} \txnvar' \lor \txnvar' \rel{\viswt} \txnvar.$
  In the following, we proceed in two directions.

  Suppose that $\txnvar \rel{\arwt} \txnvar'$.
  By Lemma~\ref{lemma:viswt-arwt}, $\lnot(\txnvar' \rel{\viswt} \txnvar)$.
  Therefore, $\txnvar \rel{\viswt} \txnvar'$.
  By Lemma~\ref{lemma:viswt-tid}, $\txnvar.\tid < \txnvar'.\tid$.

  Suppose that $\txnvar.\tid < \txnvar'.\tid$.
  By Lemma~\ref{lemma:viswt-tid}, it cannot be that $\lnot(\txnvar' \rel{\viswt} \txnvar)$.
  Therefore, $\txnvar \rel{\viswt} \txnvar'$.
  By Lemma~\ref{lemma:viswt-arwt}, $\txnvar \rel{\arwt} \txnvar'$.
\end{proof}
%%%%%%%%%%%%%%%

%%%%%%%%%%%%%%%
\begin{lemma} \label{lemma:wt-intaxiom}
  $\ae \models \intaxiom.$
\end{lemma}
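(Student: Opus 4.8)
The plan is to unfold the internal consistency axiom $\intaxiom$ and reduce it to the following operational statement: whenever a transaction $\txnvar$ issues a read $\evar = \readevent(\keyvar, \valvar)$ that is preceded in program order by at least one operation on $\keyvar$, the value $\valvar$ returned by $\evar$ coincides with the value carried by the $\po$-latest such operation $\fvar$. Since every read in \wtalg{} is served by \wtread{}, which scans $\store[\keyvar]$ from the front and returns the value of the first $\txnvis$-visible entry, the whole argument reduces to identifying this first visible entry at the moment $\evar$ executes and comparing it against $\fvar$.

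I would case-split on the kind of $\fvar$. First, suppose $\fvar = \writeevent(\keyvar, \valvar')$. When $\txnvar$ executed $\fvar$ through \wtupdate{}, it prepended $\tuple{\txnvar.\tid, \valvar', \ldots}$ to $\store[\keyvar]$ (line~\code{\ref{alg:wt-db}}{\ref{line:wtupdate-store-key}}), and by the maximality of $\fvar$ no further $\txnvar$-operation touches $\keyvar$ before $\evar$. I then argue that this entry is exactly the first visible one when $\evar$ runs: it is visible to $\txnvar$ because $\txnvis$ exempts the transaction's own identifier via the conjunct $\tidvar \neq \txnvar.\tid$ (line~\code{\ref{alg:wt-db}}{\ref{line:txnvis-invisible}}), and any entry prepended ahead of it after $\txnvar$ started belongs to a transaction that is invisible to $\txnvar$. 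Hence \wtread{} returns $\valvar'$, giving $\valvar = \valvar'$. Second, suppose $\fvar = \readevent(\keyvar, \valvar')$. Here I invoke Lemma~\ref{lemma:wtvis-welldefined}: the visible set $\wtvis(\txnvar)$ is constant over $\txnvar$'s lifecycle, and by Lemma~\ref{lemma:viswt-wtvis} every visible update transaction $\rbwt$-precedes $\txnvar$, so all its writes on $\keyvar$ are already present in $\store[\keyvar]$ before $\txnvar$ starts. Because $\txnvar$ performs no write to $\keyvar$ between $\fvar$ and $\evar$, the first visible entry seen by $\evar$ is identical to the one seen by $\fvar$, so the two reads return the same value $\valvar = \valvar'$.

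The crux of both cases is the same stability claim, which I expect to be the main obstacle: \emph{after $\txnvar$ starts, no transaction visible to $\txnvar$ can prepend a new entry to $\store[\keyvar]$ ahead of $\txnvar$'s own writes}. To establish it I would combine two facts. A transaction inserts into $\store[\keyvar]$ only while it is active (inside \wtupdate{}), acquiring its identifier at that time; hence any such inserter $\txnvar''$ either obtained its identifier after $\txnvar$ started, forcing $\txnvar''.\tid \ge \txnvar.\upperlimit$ by the monotonicity of $\currenttid$, or was already active when $\txnvar$ started, forcing $\txnvar''.\tid \in \txnvar.\concur$; in either case the predicate $\txnvis$ rules $\txnvar''$ invisible. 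This is essentially the timing analysis already carried out in the proof of Lemma~\ref{lemma:wtvis-welldefined}, so I would phrase the argument as a direct appeal to that lemma together with Lemma~\ref{lemma:viswt-wtvis}, rather than repeating the case split on $\timepoint$. With the stability claim in hand, both cases close immediately and $\ae \models \intaxiom$ follows.
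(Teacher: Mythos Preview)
Your proposal is correct and follows essentially the same approach as the paper: both unfold $\intaxiom$, fix the $\po$-latest preceding operation on $\keyvar$, case-split on whether it is a read or a write, and appeal to Lemma~\ref{lemma:wtvis-welldefined} to argue that the set of visible updates is stable across the two events. The paper's proof is terser and does not spell out your ``stability claim'' or the list-scan mechanics of \wtread{}; your version makes those steps explicit (and additionally invokes Lemma~\ref{lemma:viswt-wtvis} to pin down that visible writers already committed before $\txnvar$ started), but the underlying argument is the same.
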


\begin{proof} \label{proof:wt-intaxiom}
  Consider a transaction $\txnvar \in \WTTXN$.
  Let $\evar$ be an event such that $\op(\evar) = \readevent(\keyvar, \valvar)$
  is an \emph{internal} $\wtread$ operation in $\txnvar$.
  Let $\evar' \triangleq \max_{\po}\set{\fvar \mid \op(\fvar) = \_(\keyvar, \_) \land \fvar \rel{\po} e}$.
  We need to show that $\evar' = \_(\keyvar, \valvar)$.

  By Lemma~\ref{lemma:wtvis-welldefined}, $\wtvis(e') = \wtvis(e)$.
  If $\op(\evar') = \readevent(\keyvar, \_)$, $\evar$ obtains the same value as $\evar'$.
  Therefore, $\op(\evar') = \readevent(\keyvar, \valvar)$.
  If $\op(\evar') = \writeevent(\keyvar, \_)$, by the visibility rule,
  $\evar'$ is visible when $\evar$ occurs.
  Thus, $\evar$ reads from $\evar'$.
  Therefore, $\op(\evar') = \writeevent(\keyvar, \valvar)$.
\end{proof}
%%%%%%%%%%%%%%%

%%%%%%%%%%%%%%%
\begin{lemma} \label{lemma:wt-extaxiom}
  $\ae \models \extaxiom.$
\end{lemma}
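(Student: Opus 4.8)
The plan is to track, for the first external read of $\keyvar$ in $T$, exactly which store entry the protocol returns, and to show that its writing transaction is precisely $\max_{\arwt}(\viswt^{-1}(T) \cap \WriteTx_{\keyvar})$. First I would fix a transaction $T \in \h$ with $T \vdash \readevent(\keyvar, \valvar)$ and let $\evar$ be the event of the first read of $\keyvar$ in $T$, which by definition returns $\valvar$. Procedure $\wtread$ scans $\store[\keyvar]$ from the front (line~\code{\ref{alg:wt-db}}{\ref{line:wtread-forall}}) and returns the value of the first entry whose writer is visible to $\evar$ (line~\code{\ref{alg:wt-db}}{\ref{line:wtread-call-txnvis}}); call this writing transaction $S$. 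Since updates are prepended to $\store[\keyvar]$ (line~\code{\ref{alg:wt-db}}{\ref{line:wtupdate-store-key}}), the front-to-back order of the list is the reverse real-time order of the update operations. The returned entry is therefore $S$'s most recent write to $\keyvar$ (all of $S$'s entries carry tid $S.\tid$, and its front-most one is its latest), so $\valvar$ is the last value $S$ writes to $\keyvar$; that is, $S \vdash \writeevent(\keyvar, \valvar)$.

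Next I would establish $S \in \viswt^{-1}(T) \cap \WriteTx_{\keyvar}$. By construction $S$ writes $\keyvar$, so $S \in \WriteTx_{\keyvar}$ and $S \in \WTTXNUPDATE$. Because $\wtvis$ is stable over the lifetime of $T$ (Lemma~\ref{lemma:wtvis-welldefined}), we have $S \in \wtvis(\evar) = \wtvis(T)$, and Lemma~\ref{lemma:viswt-wtvis} then yields $S \in \viswt^{-1}(T) \cap \WTTXNUPDATE$. Combining these, $S \in \viswt^{-1}(T) \cap \WriteTx_{\keyvar}$.

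The core of the argument is to show that $S$ is the $\arwt$-maximal element of $\viswt^{-1}(T) \cap \WriteTx_{\keyvar}$. I would take an arbitrary $S' \in \viswt^{-1}(T) \cap \WriteTx_{\keyvar}$ with $S' \neq S$. Since $S$ and $S'$ both write $\keyvar$, they conflict, and Lemma~\ref{lemma:wt-noconflictaxiom} forces either $S \rel{\viswt} S'$ or $S' \rel{\viswt} S$. The first case I would rule out: $S \rel{\viswt} S'$ unfolds to $S \rel{\rbwt} S'$, so $S$ commits before $S'$ starts, whence $S$'s write to $\keyvar$ precedes $S'$'s in real time and $S'$'s entry sits \emph{in front of} $S$'s in $\store[\keyvar]$; but $S'$ is visible to $\evar$, contradicting that $S$ is the first visible entry the scan encounters. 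Hence $S' \rel{\viswt} S$, and Lemma~\ref{lemma:viswt-arwt} gives $S' \rel{\arwt} S$. As $S'$ was arbitrary, $S = \max_{\arwt}(\viswt^{-1}(T) \cap \WriteTx_{\keyvar})$, and together with $S \vdash \writeevent(\keyvar, \valvar)$ from the first step this establishes $\extaxiom$.

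The main obstacle is the list-ordering step that rules out $S \rel{\viswt} S'$: it relies on the invariant that prepending updates makes list position agree with reverse real-time write order, combined with $\viswt = \rbwt$, which converts the visibility relation into a real-time separation of the two writes. I expect the remaining subtlety to be in arguing cleanly that the scan returns $S$'s \emph{last} write rather than an earlier one; alternatively, Lemma~\ref{lemma:wt-conflict-arwt-tid} lets one phrase maximality directly in terms of transaction identifiers, since among conflicting $\keyvar$-writers $\arwt$ order coincides with tid order, giving a tid-based reformulation of the same conclusion if preferred.
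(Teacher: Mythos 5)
Your proof is correct, and it reaches the paper's conclusion by a genuinely different argument for the key maximality step. Both proofs share the same skeleton: identify $W \triangleq \viswt^{-1}(T) \cap \WriteTx_{\keyvar}$ with the set of protocol-visible $\keyvar$-writers via Lemma~\ref{lemma:viswt-wtvis} (together with Lemma~\ref{lemma:wtvis-welldefined}), and analyze the forward scan of the prepended list $\store[\keyvar]$. The paper then finishes in one stroke with Lemma~\ref{lemma:wt-conflict-arwt-tid}: on $W$ the $\arwt$ order coincides with increasing $\tid$ order, which is exactly the back-to-front order of the list at line~\code{\ref{alg:wt-db}}{\ref{line:wtread-call-txnvis}}, so the first visible entry belongs to $\max_{\arwt} W$. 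You never use the tid lemma (except as a closing remark); instead you name the returned writer $S$, invoke Lemma~\ref{lemma:wt-noconflictaxiom} to get $\viswt$-comparability of $S$ with any other $S' \in W$, rule out $S \rel{\viswt} S'$ by unfolding $\viswt = \rbwt$ into a real-time separation of the two transactions' writes (so $S'$'s entries, all visible to the read, would sit in front of $S$'s, contradicting where the scan stopped), and conclude $S' \rel{\arwt} S$ from Lemma~\ref{lemma:viswt-arwt}. The paper's route buys brevity, since Lemma~\ref{lemma:wt-conflict-arwt-tid} prepackages the order coincidence; your route buys independence from transaction identifiers (you work purely with the real-time relations $\rbwt$ and $\cbwt$) and is more explicit on a point the paper glosses over, namely that the returned entry is $S$'s front-most one, so $\valvar$ is indeed the \emph{last} value $S$ wrote to $\keyvar$. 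The only omission is the degenerate case handled by the paper's ``$W = \emptyset$'' line: your argument as written presupposes that the scan returns some entry, whereas the paper disposes of the no-visible-writer case by noting the read obtains the initial value of $\keyvar$ (covered by the dedicated initializing transaction); a one-sentence remark would close this.
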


\begin{proof} \label{proof:wt-extaxiom}
  Consider a transaction $\txnvar \in \WTTXN$.
  Let $\evar$ be an event such that $\op(\evar) = \readevent(\keyvar, \valvar)$
  is an \emph{external} $\wtread$ operation in $\txnvar$.
  Let $W \triangleq \viswt^{-1}(\txnvar) \cap \WriteTx_{\keyvar}$
  be the set of transactions that update key $\keyvar$ and are visible to $\txnvar$.
  If $W = \emptyset$, obviously $e$ obtains the initial value of $\keyvar$.
  Now suppose that $W \neq \emptyset$.
  By Lemma~\ref{lemma:viswt-wtvis}, $W = \wtvis(\txnvar) \cap \WriteTx_{\keyvar}$.
  By Lemma~\ref{lemma:wt-conflict-arwt-tid},
  the $\arwt|_{W}$ order is consistent with the increasing $\tid$ order of the transactions in $W$,
  which is also the list order at line~\code{\ref{alg:wt-db}}{\ref{line:wtread-call-txnvis}}.
  Therefore, $\evar$ reads from $\max_{\arwt} W$.
  Thus, $\max_{\arwt} W \vdash \writeevent(\keyvar, \valvar)$.
\end{proof}
%%%%%%%%%%%%%%%

%%%%%%%%%%%%%%%
\begin{theorem} \label{thm:wtalg-strongsi}
  $\wtalg \models \strongsi.$
\end{theorem}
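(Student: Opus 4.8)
The plan is to observe that the theorem is essentially an assembly of the lemmas already proved for $\ae = (\h, \viswt, \arwt)$. By Definition~\ref{def:strongsi}, $\strongsi = \si \land \realtimesnapshotaxiom \land \cbaxiom$, and by Definition~\ref{def:si}, $\si = \intaxiom \land \extaxiom \land \prefixaxiom \land \noconflictaxiom$. So to establish $\wtalg \models \strongsi$ it suffices to exhibit one abstract execution over $\h$ that simultaneously satisfies the six axioms $\intaxiom$, $\extaxiom$, $\prefixaxiom$, $\noconflictaxiom$, $\realtimesnapshotaxiom$, and $\cbaxiom$.

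First I would take the candidate $\ae \triangleq (\h, \viswt, \arwt)$ with $\viswt = \rbwt$ (Definition~\ref{def:viswt}) and $\arwt = \cbwt$ (Definition~\ref{def:arwt}). Before invoking the axiom lemmas, I must check that $\ae$ is genuinely an abstract execution in the sense of Definition~\ref{def:ae}: that $\viswt$ is a strict partial order, that $\arwt$ is a strict total order, and that $\viswt \subseteq \arwt$. The first holds because $\rbwt$ is a real-time returns-before order (Definition~\ref{def:rb}), hence irreflexive and transitive; the second holds because commit timestamps are distinct, so $\cbwt$ totally orders the committed transactions in $\WTTXN$; and the inclusion $\viswt \subseteq \arwt$ is exactly Lemma~\ref{lemma:viswt-arwt}. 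This confirms $\ae$ is a well-formed abstract execution.

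The remaining step is simply to discharge each conjunct of $\strongsi$ by citing the corresponding lemma: $\intaxiom$ by Lemma~\ref{lemma:wt-intaxiom}, $\extaxiom$ by Lemma~\ref{lemma:wt-extaxiom}, $\prefixaxiom$ by Lemma~\ref{lemma:wt-prefixaxiom}, $\noconflictaxiom$ by Lemma~\ref{lemma:wt-noconflictaxiom}, $\realtimesnapshotaxiom$ by Lemma~\ref{lemma:wt-realtimesnapshotaxiom}, and $\cbaxiom$ by Lemma~\ref{lemma:wt-cbaxiom}. Conjoining these gives $\ae \models \strongsi$, and since $\ae$ is built over the arbitrary history $\h$ of committed \wtalg{} transactions, we conclude $\wtalg \models \strongsi$.

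At the level of the theorem itself there is no real obstacle — it is a direct corollary once the lemmas are in hand. The genuine difficulty lives one layer down, in the lemmas the theorem consumes, and I expect the hardest of these to be $\extaxiom$ (Lemma~\ref{lemma:wt-extaxiom}): its proof must argue that the external read returns the value written by the $\arwt$-maximal visible writer, which in turn depends on Lemma~\ref{lemma:wt-conflict-arwt-tid} tying the $\arwt$ order on conflicting transactions to their transaction-identifier order, and on that identifier order coinciding with the traversal order of the update list $\store[\keyvar]$ at line~\code{\ref{alg:wt-db}}{\ref{line:wtread-call-txnvis}}. Since all of this machinery is already available, the theorem follows immediately by the assembly above.
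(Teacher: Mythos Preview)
Your proposal is correct and follows essentially the same approach as the paper: construct $\ae = (\h, \viswt, \arwt)$ from Definitions~\ref{def:viswt} and~\ref{def:arwt} and discharge each axiom of $\strongsi$ by citing the corresponding lemma. If anything, you are slightly more careful than the paper in explicitly verifying that $\ae$ is a well-formed abstract execution (via Lemma~\ref{lemma:viswt-arwt} for $\viswt \subseteq \arwt$), which the paper's proof leaves implicit.
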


\begin{proof} \label{proof:wtalg-strongsi}
  For any history $\h$ of $\wtalg$,
  we construct an abstract execution $\ae = (\h, \viswt, \arwt)$,
  where $\viswt$ and $\arwt$ are given in Definitions~\ref{def:viswt} and \ref{def:arwt}, respectively.
  By Lemmas~\ref{lemma:wt-realtimesnapshotaxiom},
  \ref{lemma:wt-noconflictaxiom}, \ref{lemma:wt-cbaxiom}, \ref{lemma:wt-prefixaxiom},
  \ref{lemma:wt-intaxiom}, and \ref{lemma:wt-extaxiom}, $\ae \models \strongsi$.
  Since $\h$ is arbitrary, $\wtalg \models \strongsi$.
\end{proof}
%%%%%%%%%%%%%%%

% rs-correctness.tex

%%%%%%%%%%%%%%%%%%%%%%%%%
\subsection{Correctness of \rsalg} \label{ss:rs-correctness}

Consider a history $\h = (\RSTXN, \sors)$ of \rsalg{},
where we restrict $\RSTXN$ to be set of committed transactions.
Denote by $\RSTXNUPDATE \subseteq \RSTXN$ be the set of update transactions in $\RSTXN$.
We prove that $\h$ satisfies \rtsi{} by constructing an abstract execution $\ae = (\h, \visrs, \arrs)$
(Theorem~\ref{thm:rsalg-rtsi}).
%%%%%%%%%%%%%%%

%%%%%%%%%%%%%%%
\begin{lemma} \label{lemma:rs-majority-commit}
  Transactions are majority committed in the increasing order of their commit timestamps.
\end{lemma}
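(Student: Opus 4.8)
The plan is to reduce the claim to the monotonicity of $\lastmajoritycommitted$ over time. Recall from \rscommit{} (line~\code{\ref{alg:rs-primary}}{\ref{line:rscommit-wait-majoritycommitted}}) that a transaction with commit timestamp $\ctvar$ is declared majority committed exactly at the first instant when $\lastmajoritycommitted \ge \ctvar$. Hence, if I can show that $\lastmajoritycommitted$ is a monotonically non-decreasing function of time, then for any two transactions with commit timestamps $\ctvar_1 < \ctvar_2$, the instant at which $\lastmajoritycommitted$ first reaches $\ctvar_1$ is no later than the instant it first reaches $\ctvar_2$, which is precisely the desired ordering.

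First I would establish that commit timestamps are assigned in strictly increasing order and that the primary's $\oplog$ is append-only and sorted by commit timestamp. This follows because the cluster time $\ct$ is ticked atomically when it is taken as a commit timestamp (line~\code{\ref{alg:rs-primary}}{\ref{line:rscommit-tick}}) and is strictly monotonic, and each oplog entry is appended immediately afterwards. Consequently, the entries a secondary retrieves in \pulloplog{} (``entries after $\lastpulledvar$'', line~\code{\ref{alg:replication}}{\ref{line:pulloplog-oplog}}) always extend its local prefix of the primary's oplog, so whenever a secondary's $\lastpulled$ equals some value $\tsvar$, it has already replicated every oplog entry with commit timestamp at most $\tsvar$.

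Second, I would argue that each per-secondary progress value $\lastpulledacks[\secondaryvar]$ is monotonically non-decreasing. On a secondary, $\lastpulled$ is set to the primary's cluster time $\ctvar$ carried by the \pushoplog{} reply (line~\code{\ref{alg:replication}}{\ref{line:replicate-lastpulled}}); since the primary's cluster time never decreases, successive pulls report non-decreasing values, and these are forwarded to the primary in \replicateack{} messages (line~\code{\ref{alg:replication}}{\ref{line:replicateack-lastpulledacks}}). Given non-decreasing inputs, the order statistic computed in line~\code{\ref{alg:replication}}{\ref{line:replicateack-lastmajoritycommitted}} --- the $\lfloor n/2 \rfloor$-th largest among the $\lastpulledacks$ values --- is itself non-decreasing, because raising any single element of a multiset can only raise (or leave unchanged) its $k$-th largest element. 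This yields the monotonicity of $\lastmajoritycommitted$ and hence the lemma.

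The main obstacle is the second step, and specifically the interaction between the indirect definition of $\lastpulled$ (it is set to the primary's \emph{cluster time} $\ctvar$, not to an oplog-entry timestamp) and the update rule for $\lastpulledacks[\secondaryvar]$, which simply overwrites with the received value. To make monotonicity rigorous I must rule out stale acknowledgements driving a $\lastpulledacks$ entry backwards; this requires the standing assumption that replication messages are delivered reliably and in FIFO order along each primary--secondary channel (or, equivalently, that out-of-order acks are discarded), which I would state explicitly as part of the replication model before carrying out the monotonicity argument.
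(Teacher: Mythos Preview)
Your proposal is correct and is essentially an explicit unpacking of the paper's own argument, which consists of the single sentence ``By the replication mechanism (Algorithm~\ref{alg:replication}).'' Your reduction to the monotonicity of $\lastmajoritycommitted$, together with the per-secondary monotonicity of $\lastpulledacks$ and the order-statistic observation, is exactly the content the paper leaves implicit; the FIFO/reliable-channel caveat you flag is consistent with the paper's standing simplifications (atomic handlers, no failures).
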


\begin{proof} \label{proof:rs-majority-commit}
  By the replication mechanism (Algorithm~\ref{alg:replication}).
\end{proof}
%%%%%%%%%%%%%%%

%%%%%%%%%%%%%%%
\begin{definition}[$\visrs$] \label{def:visrs}
  $\forall \txnvar, \txnvar' \in \RSTXN.\;
    \txnvar \rel{\visrs} \txnvar' \iff
      \txnvar.\committs \le \txnvar'.\readts.$
\end{definition}
%%%%%%%%%%%%%%%

%%%%%%%%%%%%%%%
\begin{lemma} \label{lemma:visrs-acyclic}
  $\visrs$ is acyclic.
\end{lemma}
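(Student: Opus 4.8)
The plan is to reduce acyclicity of $\visrs$ to a single per-transaction invariant — that every committed transaction takes its snapshot strictly before it commits — and then chain inequalities around a hypothetical cycle. Concretely, I would first establish: for all $\txnvar \in \RSTXN$, $\txnvar.\readts < \txnvar.\committs$. The read timestamp is fixed once, at the transaction's first operation, where \openwtsession calls \allcommitted (line~\code{\ref{alg:rs-primary}}{\ref{line:openwtsession-call-allcommitted}}); by line~\code{\ref{alg:wt-db}}{\ref{line:allcommitted-return}} the returned value is strictly smaller than the current $\maxcommitts$, so $\txnvar.\readts < \maxcommitts$ at start. The commit timestamp is obtained later, in \rscommit, by ticking the cluster time $\ct$ (lines~\code{\ref{alg:rs-primary}}{\ref{line:rscommit-tick}}--\code{\ref{alg:rs-primary}}{\ref{line:rscommit-txn-committs}}).

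Because $\ct$ only advances (via \tick{} or by taking the maximum with a larger incoming clock), and $\maxcommitts$ always equals the commit timestamp of some already-committed transaction and hence never exceeds the current value of $\ct$, the value of $\maxcommitts$ observed at $\txnvar$'s start is at most the value of $\ct$ at that moment, which is in turn strictly below the $\committs$ produced by the \tick{} at $\txnvar$'s commit. Combining with $\txnvar.\readts < \maxcommitts$ yields $\txnvar.\readts < \txnvar.\committs$. All these comparisons are well-defined since HLCs are totally ordered.

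Second, I would prove acyclicity by contradiction. Suppose $\visrs$ has a cycle $\txnvar_1 \rel{\visrs} \txnvar_2 \rel{\visrs} \cdots \rel{\visrs} \txnvar_n \rel{\visrs} \txnvar_1$. By Definition~\ref{def:visrs}, each edge gives $\txnvar_i.\committs \le \txnvar_{i+1}.\readts$ (indices taken mod $n$). Interleaving these edge inequalities with the invariant $\txnvar_{i+1}.\readts < \txnvar_{i+1}.\committs$ produces the chain
\[
  \txnvar_1.\committs \le \txnvar_2.\readts < \txnvar_2.\committs \le \cdots \le \txnvar_n.\readts < \txnvar_n.\committs \le \txnvar_1.\readts < \txnvar_1.\committs,
\]
hence $\txnvar_1.\committs < \txnvar_1.\committs$, a contradiction. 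Therefore $\visrs$ is acyclic.

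The hard part will be pinning down the invariant $\txnvar.\readts < \txnvar.\committs$ rigorously from the protocol, since it rests on several moving pieces: that the read timestamp is frozen at the first operation and never revised, that \allcommitted returns a value strictly below $\maxcommitts$, and above all that the cluster time is monotone so the $\maxcommitts$ seen at start is dominated by the commit timestamp produced by the later \tick. Once this invariant is in hand, the remainder is a routine chaining of $\le$ and strict $<$ relations over the totally ordered HLCs, with the single strictness per cycle edge sufficing to close the argument.
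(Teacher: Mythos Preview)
Your proposal is correct and matches the paper's approach: the paper's proof is a single sentence asserting the invariant $\forall \txnvar \in \RSTXN.\; \txnvar.\readts < \txnvar.\committs$ (citing only line~\code{\ref{alg:rs-primary}}{\ref{line:rscommit-tick}}) and leaving the cycle-chaining implicit. You spell out both halves in more detail, but the key idea and the route are the same.
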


\begin{proof} \label{proof:visrs-acyclic}
  This holds due to the property that
  $\forall \txnvar \in \RSTXN.\; \txnvar.\readts < \txnvar.\committs$
  (line~\code{\ref{alg:rs-primary}}{\ref{line:rscommit-tick}}).
\end{proof}
%%%%%%%%%%%%%%%

%%%%%%%%%%%%%%%
\begin{lemma} \label{lemma:rs-rbaxiom}
  $\ae \models \rbaxiom.$
\end{lemma}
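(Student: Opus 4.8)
The goal is to establish the inclusion $\rbrs \subseteq \visrs$. Unfolding Definition~\ref{def:rb} and Definition~\ref{def:visrs}, the plan is to show that for any two committed transactions $S, T \in \RSTXN$, if $\committimers(S) < \starttimers(T)$ then $S.\committs \le T.\readts$. First I would connect the abstract real-time endpoints to concrete points of the protocol in Algorithm~\ref{alg:rs-primary}: $\committimers(S)$ corresponds to the moment $\rscommit$ returns to the client for $S$ (line~\code{\ref{alg:rs-primary}}{\ref{line:rscommit-return}}), which is only reached after the majority-commit wait succeeds, and $\starttimers(T)$ is no later than the moment $T$'s read timestamp is assigned by $\allcommitted$ inside $\openwtsession$ (line~\code{\ref{alg:rs-primary}}{\ref{line:openwtsession-call-allcommitted}}). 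Hence the hypothesis guarantees that $S$ has already returned by the time $T$ computes its read timestamp.

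The first key step is to observe that when $S$ returns it is \emph{majority committed}: $\rscommit$ blocks until $\lastmajoritycommitted \ge S.\committs$ (line~\code{\ref{alg:rs-primary}}{\ref{line:rscommit-wait-majoritycommitted}}). By Lemma~\ref{lemma:rs-majority-commit}, transactions become majority committed in increasing order of their commit timestamps, so at the instant $T$ invokes $\allcommitted$, every transaction whose commit timestamp is at most $S.\committs$ has also been majority committed, and therefore locally committed in WiredTiger via $\wtcommit$. Consequently each such transaction has had its entry in $\wttxnglobal$ reset to $\tuple{\wttidnone, \wttsnone}$ (line~\code{\ref{alg:wt-db}}{\ref{line:wtcommit-wttxnglobal-states}}), so it no longer contributes to the set of ``in-flight'' commit timestamps scanned in $\allcommitted$.

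The second step analyzes the value returned by $\allcommitted$ (line~\code{\ref{alg:wt-db}}{\ref{line:allcommitted-return}}). On one hand, $\wtsetcommitts$ folded $S.\committs$ into $\maxcommitts$ before $S$ returned (line~\code{\ref{alg:wt-db}}{\ref{line:wtsetcommitts-maxcommitts}}), so $\maxcommitts \ge S.\committs$ when $T$ starts. On the other hand, by the previous step every surviving in-flight commit timestamp recorded in $\wttxnglobal$ with value $\neq \wttsnone$ is strictly greater than $S.\committs$, so the minimum over such timestamps exceeds $S.\committs$. Since $\allcommitted$ returns the largest gap-free timestamp below both of these bounds, and everything up to $S.\committs$ has been committed without holes, the returned value satisfies $T.\readts \ge S.\committs$. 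This gives $S.\committs \le T.\readts$, that is $S \rel{\visrs} T$, as required.

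I expect the main obstacle to be the ``no holes below $S.\committs$'' argument: I must rule out the possibility that some transaction with a commit timestamp below $S.\committs$ has reserved that timestamp via $\tick$ and $\wtsetcommitts$ but has not yet locally committed, which would otherwise pull $\allcommitted$ below $S.\committs$. This is precisely where Lemma~\ref{lemma:rs-majority-commit} is indispensable, since majority commit in commit-timestamp order forces every such transaction to have already finished $\wtcommit$ once $S$ is majority committed, so no residual hole can remain below $S.\committs$ at the time $T$ reads the clock.
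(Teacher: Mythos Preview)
Your proposal is correct and follows essentially the same route as the paper: use the majority-commit wait (line~\code{\ref{alg:rs-primary}}{\ref{line:rscommit-wait-majoritycommitted}}) together with Lemma~\ref{lemma:rs-majority-commit} to conclude that every transaction with commit timestamp at most $S.\committs$ has already locally committed and cleared its $\wttxnglobal$ entry, so the in-flight commit timestamps scanned by $\allcommitted$ all exceed $S.\committs$ and hence $T.\readts \ge S.\committs$. Your write-up is in fact more explicit than the paper's in spelling out the role of $\maxcommitts$ and the ``no holes'' obstacle, but the argument is the same.
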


\begin{proof} \label{proof:rs-rbaxiom}
  Consider any two transactions $\txnvar, \txnvar' \in \RSTXN$
  such that $\txnvar \rel{\rbrs} \txnvar'$.
  By line~\code{\ref{alg:rs-primary}}{\ref{line:rscommit-wait-majoritycommitted}},
  when $\txnvar'$ starts, $\txnvar$ has been majority committed.
  By Lemma~\ref{lemma:rs-majority-commit},
  all transactions with commit timestamps $\le \txnvar.\committs$
  have been majority committed and thus locally committed on the primary.
  Therefore, when $\txnvar'.\readts$ is computed
  (line~\code{\ref{alg:rs-primary}}{\ref{line:openwtsession-call-allcommitted}}),
  all $\pinnedcommittedtsvar \neq \wttsnone$ in $\wttxnglobal$ are larger than $\txnvar.\committs$
  (line~\code{\ref{alg:wt-db}}{\ref{line:allcommitted-return}}).
  Thus, $\txnvar.\committs \le \txnvar'.\readts$.
  By Definition~\ref{def:visrs} of $\visrs$, $\txnvar \rel{\visrs} \txnvar'$.
\end{proof}
%%%%%%%%%%%%%%%

Since $\rsalg$ transactions are encapsulated into $\wtalg$ transactions,
we can extend $\viswt$ over $\rsalg$ transactions.
Generally speaking, the following lemma shows that
the timestamps of $\rsalg$ overrides the transaction identifiers of $\wtalg$.

%%%%%%%%%%%%%%%
\begin{lemma} \label{lemma:visrs-viswt}
  $\visrs \subseteq \viswt.$
\end{lemma}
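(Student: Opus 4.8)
The plan is to unfold both sides of the inclusion in terms of the protocol's real-time events and to bridge the \emph{logical} commit/read timestamps of \rsalg{} with the \emph{real-time} visibility $\viswt = \rbwt$ at the WiredTiger level. Concretely, I would take arbitrary $\txnvar, \txnvar' \in \RSTXN$ with $\txnvar \rel{\visrs} \txnvar'$, which by Definition~\ref{def:visrs} means $\txnvar.\committs \le \txnvar'.\readts$, and aim to establish $\txnvar \rel{\rbwt} \txnvar'$, i.e.\ that the WiredTiger transaction encapsulating $\txnvar$ finishes $\wtcommit$ in real time strictly before the one encapsulating $\txnvar'$ calls $\wtstart$. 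First I would fix $t_0$ to be the real time at which $\txnvar'$ evaluates $\readtsvar \gets \Call{\allcommitted}{\null}$ in $\openwtsession$ (line~\code{\ref{alg:rs-primary}}{\ref{line:openwtsession-call-allcommitted}}), and record the two facts guaranteed by $\allcommitted$ (line~\code{\ref{alg:wt-db}}{\ref{line:allcommitted-return}}): the computed value $\txnvar'.\readts$ is strictly below $\maxcommitts$, and strictly below every pending commit timestamp recorded in $\wttxnglobal$ at $t_0$ (those of transactions currently sitting between $\wtsetcommitts$ and $\wtcommit$).

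The crux is a monotonicity argument. Commit timestamps are obtained by ticking the cluster time (line~\code{\ref{alg:rs-primary}}{\ref{line:rscommit-tick}}), which is monotone, so every commit timestamp assigned so far equals a past cluster-time value and hence $\maxcommitts \le C_0$, where $C_0$ is the cluster time at $t_0$; thus $\txnvar.\committs \le \txnvar'.\readts < \maxcommitts \le C_0$. Since any transaction ticking after $t_0$ would receive a timestamp exceeding $C_0$, the transaction $\txnvar$ must have executed its atomic $\tick$/$\wtsetcommitts$ block — and therefore placed $\txnvar.\committs$ into $\wttxnglobal$ — before $t_0$. But $\txnvar.\committs \le \txnvar'.\readts$ lies strictly below every pending commit timestamp at $t_0$, so $\txnvar.\committs$ is \emph{not} pending there; as the only way an entry leaves $\wttxnglobal$ is by completing $\wtcommit$, which resets $\wttxnglobal[\wtsidvar]$ (line~\code{\ref{alg:wt-db}}{\ref{line:wtcommit-wttxnglobal-states}}), $\txnvar$ must already have finished $\wtcommit$ by time $t_0$. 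Finally, because $\openwtsession$ executes atomically, no event interleaves between $\txnvar'$'s $\wtstart$ and its evaluation of $\allcommitted$, so $\txnvar$'s $\wtcommit$ completing before $t_0$ forces it to complete before $\txnvar'$'s $\wtstart$, giving $\committimewt(\txnvar) < \starttimewt(\txnvar')$, i.e.\ $\txnvar \rel{\rbwt} \txnvar'$; by Definition~\ref{def:viswt} this is $\txnvar \rel{\viswt} \txnvar'$.

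The main obstacle I anticipate is exactly this bridging between the purely logical hypothesis $\txnvar.\committs \le \txnvar'.\readts$ and a real-time ordering of WiredTiger operations: one must argue carefully that ``$\txnvar$ has had a commit timestamp recorded, yet that timestamp is not pending'' forces the \emph{physical} completion of $\wtcommit$ strictly before $\txnvar'$ starts. This hinges jointly on the monotonicity of the cluster time (to locate $\txnvar$'s tick before $t_0$) and on the atomicity of the $\tick$/$\wtsetcommitts$ block and of $\openwtsession$ (to rule out any interleaving that would place $\txnvar$'s commit after $\txnvar'$'s WiredTiger start); getting these timing quantifiers in the right order is the delicate part of the proof.
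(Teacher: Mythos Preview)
Your proposal is correct and follows essentially the same route as the paper: fix the instant at which $\txnvar'$ computes its read timestamp, argue from $\allcommitted$'s semantics that $\txnvar$ must already have ticked and registered $\txnvar.\committs$ in $\wttxnglobal$, and then that since this timestamp is no longer pending there, $\txnvar$'s $\wtcommit$ must have completed before $\txnvar'$'s $\wtstart$. The paper presents both steps as contradictions (introducing time points $\timepoint_1,\timepoint_2,\timepoint_3$ and supposing first $\timepoint_3<\timepoint_1$, then $\timepoint_3<\timepoint_2$) and appeals directly to the strict monotonicity of commit timestamps rather than going through your cluster-time bound $\maxcommitts\le C_0$, but the underlying argument is the same.
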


\begin{proof} \label{proof:visrs-viswt}
  Consider two transactions $\txnvar$ and $\txnvar'$ in $\RSTXN$ such that $\txnvar \rel{\visrs} \txnvar'$.
  By Definition~\ref{def:visrs} of $\visrs$, $\txnvar.\committs \le \txnvar'.\readts$.
  We need to show that $\txnvar \rel{\viswt} \txnvar'$,
  which, by Definition~\ref{def:viswt} of $\viswt$, is $\committimewt(\txnvar) < \starttimewt(\txnvar')$.

  Consider the following three time points:
  $\timepoint_{1}$ when $\txnvar$ obtained its commit timestamp $\txnvar.\committs$
  (line~\code{\ref{alg:rs-primary}}{\ref{line:rscommit-txn-committs}})
  and atomically set $\txnvar.\committs$ in WiredTiger
  (line~\code{\ref{alg:wt-db}}{\ref{line:wtsetcommitts-wttxnglobal-states}}),
  $\timepoint_{2} \triangleq \committimewt(\txnvar)$
  when $\txnvar$ was locally committed in WiredTiger on the primary
  (line~\code{\ref{alg:wt-db}}{\ref{line:procedure-wtcommit}}),
  and $\timepoint_{3} \triangleq \starttimewt(\txnvar')$
  when $\txnvar'$ computed its read timestamp $\txnvar'.\readts$
  (lines~\code{\ref{alg:rs-primary}}{\ref{line:openwtsession-call-allcommitted}}
  and~\code{\ref{alg:wt-db}}{\ref{line:procedure-allcommitted}}).
  We need to show that $\timepoint_{2} < \timepoint_{3}$.
  Note that $\timepoint_{1} < \timepoint_{2}$.

  We first show that $\timepoint_{1} < \timepoint_{3}$.
  Suppose by contradiction that $\timepoint_{3} < \timepoint_{1}$.
  Since the commit timestamps of $\rsalg$ transactions are strictly monotonically increasing,
  by the way $\txnvar'.\readts$ is computed,
  $\txnvar'.\readts < \txnvar.\committs$. Contradiction.

  Next we show that $\timepoint_{2} < \timepoint_{3}$.
  Suppose by contradiction that $\timepoint_{3} < \timepoint_{2}$.
  Since $\timepoint_{1} < \timepoint_{3}$,
  $\timepoint_{1} < \timepoint_{3} < \timepoint_{2}$.
  That is, at time $\timepoint_{3}$,
  $\txnvar$ has obtained its commit timestamp, but has not been locally committed.
  % Thus, at time $\timepoint_{3}$, the condition
  % at line~\code{\ref{alg:wt-db}}{\ref{line:allcommitted-pinnedcommittedts}} held.
  By the way $\txnvar'.\readts$ is computed,
  $\txnvar'.\readts < \txnvar.\committs$. Contradiction.
\end{proof}
%%%%%%%%%%%%%%%

%%%%%%%%%%%%%%%
\begin{definition} \label{def:rsvis}
  For a transaction $\txnvar \in \RSTXN$,
  we define $\rsvis(\txnvar) \subseteq \RSTXNUPDATE$
  to be the set of update transactions in $\RSTXN$
  that are visible to $\txnvar$ according to the visibility rule in $\txnvis$.
\end{definition}
%%%%%%%%%%%%%%%

%%%%%%%%%%%%%%%
\begin{lemma} \label{lemma:rsvis-welldefined}
  For any transaction $\txnvar \in \RSTXN$,
  $\rsvis(\txnvar)$ is well-defined.
  That is, $\rsvis(\txnvar)$ does not change over the lifecycle of $\txnvar$.
\end{lemma}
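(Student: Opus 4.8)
The plan is to mirror the structure of the proof of Lemma~\ref{lemma:wtvis-welldefined}, while accounting for the additional timestamp conjunct that \txnvis{} applies to \rsalg{} transactions (line~\code{\ref{alg:wt-db}}{\ref{line:txnvis-invisible}}). The visibility of another transaction $\txnvar' \neq \txnvar$ to $\txnvar$ is the conjunction of two parts: the WiredTiger identifier-based condition (involving $\txnvar.\concur$, $\txnvar.\upperlimit$, and $\txnvar.\tid$) and the timestamp condition $\tsvar \neq \wttsnone \land \tsvar \le \txnvar.\readts$, where $\tsvar$ is the commit timestamp of $\txnvar'$ recorded in $\store$. Since $\rsvis(\txnvar)$ ranges over committed update transactions in $\RSTXNUPDATE$, it suffices to show that each conjunct is stable throughout the lifecycle of $\txnvar$.

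First I would dispatch the identifier-based conjunct by invoking the case analysis of Lemma~\ref{lemma:wtvis-welldefined}: because $\txnvar.\concur$ and $\txnvar.\upperlimit$ are fixed at start and the assignment of $\txnvar.\tid$ affects only self-visibility (and here $\txnvar' \neq \txnvar$), the identifier component of $\txnvar'$'s visibility to $\txnvar$ does not change. The remaining work is the timestamp conjunct. The key facts are that $\txnvar.\readts$ is set exactly once, when $\txnvar$ starts (line~\code{\ref{alg:rs-primary}}{\ref{line:openwtsession-call-wtsetreadts}}), and never changes afterward; and that $\txnvar'$'s store entries carry $\tsvar = \wttsnone$ while $\txnvar'$ is active, and are stamped with the fixed value $\txnvar'.\committs$ exactly when $\txnvar'$ commits (line~\code{\ref{alg:wt-db}}{\ref{line:wtcommit-store}}).

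The only way the timestamp conjunct could flip during $\txnvar$'s lifecycle is for a transaction $\txnvar'$ that is timestamp-invisible at $\txnvar$'s start (still carrying $\wttsnone$) to commit later with $\txnvar'.\committs \le \txnvar.\readts$, thereby becoming timestamp-visible. I would rule this out exactly as in the contradiction arguments of Lemma~\ref{lemma:visrs-viswt}: commit timestamps are obtained by strictly monotone \tick{} calls (line~\code{\ref{alg:rs-primary}}{\ref{line:rscommit-tick}}), and $\txnvar.\readts$ is computed by \allcommitted{} from the timestamps already in existence when $\txnvar$ starts. Hence any $\txnvar'$ that has not yet fixed its commit timestamp at that moment will later receive $\txnvar'.\committs > \txnvar.\readts$, so it remains timestamp-invisible. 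Conversely, no committed transaction can turn from visible to invisible, since both $\txnvar'.\committs$ and $\txnvar.\readts$ are then frozen. Combining the two stable conjuncts yields that $\rsvis(\txnvar)$ is invariant over the lifecycle of $\txnvar$.

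I expect the main obstacle to be the timestamp conjunct, and specifically the monotonicity-plus-all-committed argument showing that a transaction committing after $\txnvar$ starts necessarily acquires a commit timestamp strictly above $\txnvar.\readts$. This is where the precise semantics of \allcommitted{} and the ordering of the \tick{}, \wtsetcommitts{}, and \wtcommit{} steps must be tracked carefully, reusing the timeline reasoning (the $\timepoint_{1}, \timepoint_{2}, \timepoint_{3}$ analysis) already developed for Lemma~\ref{lemma:visrs-viswt}.
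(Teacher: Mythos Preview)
Your proposal is correct, but it does more work than the paper's proof and takes a somewhat different decomposition. The paper's proof is a single sentence: it invokes Lemma~\ref{lemma:wtvis-welldefined} together with the fact that a transaction's read timestamp and commit timestamp do not change once set. Implicitly, the paper is leaning on the \emph{case structure} of Lemma~\ref{lemma:wtvis-welldefined} rather than proving stability of each conjunct separately: in \casei{} (where $\txnvar'$ has not yet finished in WiredTiger when $\txnvar$ starts), the identifier conjunct is already false, so the overall visibility is false irrespective of what happens to the timestamp conjunct; in \caseii{} (where $\txnvar'$ has finished before $\txnvar$ starts), both $\txnvar'.\committs$ and $\txnvar.\readts$ are already fixed, so the timestamp conjunct is immediately stable.

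Your route instead establishes that each conjunct is stable \emph{on its own}, which forces you into the monotonicity-of-\tick{} and \allcommitted{} argument (borrowed from Lemma~\ref{lemma:visrs-viswt}) to rule out a late-committing $\txnvar'$ receiving $\txnvar'.\committs \le \txnvar.\readts$. That argument is sound, but it is unnecessary here: the very situation you worry about---$\txnvar'$ still carrying $\wttsnone$ when $\txnvar$ starts---is exactly \casei{}, where the identifier conjunct already forces invisibility. So the paper's proof is more economical; yours is more self-contained and does not rely on the reader tracking how the two conjuncts interact across the cases. Either is acceptable, but you can shorten your write-up considerably by noting the \casei{}/\caseii{} shortcut instead of the timeline analysis.
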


\begin{proof} \label{proof:rsvis-welldefined}
  By Lemma~\ref{lemma:wtvis-welldefined} and the fact that the read timestamp
  and commit timestamp of a transaction do not change over its lifecycle.
  % Suppose that $\txnvar$ obtains its read timestamp $\txnvar.\readts$ at time $\timepoint$
  % (line~\code{\ref{alg:rs-primary}}{\ref{line:openwtsession-call-allcommitted}}).
  % Let $\txnvar' \neq \txnvar$ be an update transaction
  % that obtains its commit timestamp $\txnvar'.\committs$ at time $\timepoint'$
  % (line~\code{\ref{alg:rs-primary}}{\ref{line:rscommit-call-wtsetcommitts}}).
  % If $\timepoint' < \timepoint$, since $\txnvar.\readts$ and $\txnvar'.\committs$ do not change over time,
  % the fact whether $\txnvar'$ is visible to $\txnvar$ or not remains over the lifecycle of $\txnvar$.
  % If $\timepoint < \timepoint'$, by the way $\txnvar.\readts$ and $\txnvar'.\committs$ are computed on the primary
  % (lines~\code{\ref{alg:wt-db}}{\ref{line:procedure-allcommitted}}
  % and~\code{\ref{alg:rs-primary}}{\ref{line:rscommit-tick}}),
  % $\txnvar.\readts < \txnvar'.\committs$.
  % Therefore, $\txnvar'$ is invisible to $\txnvar$ over the lifecycle of $\txnvar$.
\end{proof}
%%%%%%%%%%%%%%%

As with $\wtvis$, by Lemma~\ref{lemma:rsvis-welldefined},
we can extend the definition of $\rsvis$ to events.

\begin{definition}[$\rsvis$ for Events] \label{def:rsvis-events}
  Let $\evar$ be an event of transaction $\txnvar$.
  We abuse the notation $\rsvis(\evar)$ to denote the set of update transactions
  that are visible to $\evar$ when it occurs.
  We have $\rsvis(\evar) = \rsvis(\txnvar)$.
\end{definition}

%%%%%%%%%%%%%%%
\begin{lemma} \label{lemma:visrs-rsvis}
  $\forall \txnvar \in \RSTXN.\; \visrs^{-1}(\txnvar) \cap \RSTXNUPDATE = \rsvis(\txnvar).$
\end{lemma}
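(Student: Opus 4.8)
The plan is to establish the set equality by mutual inclusion, after first rewriting both sides in terms of the concrete predicate computed by $\txnvis$. Recall that for an $\rsalg$ transaction the visibility rule in $\txnvis$ (line~\code{\ref{alg:wt-db}}{\ref{line:txnvis-invisible}}) is the conjunction of the WiredTiger identifier-based rule and the timestamp conjunct $\tsvar \neq \wttsnone \land \tsvar \le \txnvar.\readts$. For a \emph{committed} update transaction $\txnvar'$, its entry in $\store$ carries the timestamp $\txnvar'.\committs \neq \wttsnone$ (set at line~\code{\ref{alg:wt-db}}{\ref{line:wtcommit-store}}), so membership $\txnvar' \in \rsvis(\txnvar)$ is equivalent to: the WiredTiger rule holds for $\txnvar'$ with respect to $\txnvar$, \emph{and} $\txnvar'.\committs \le \txnvar.\readts$. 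On the other side, by Definition~\ref{def:visrs}, the set $\visrs^{-1}(\txnvar) \cap \RSTXNUPDATE$ is exactly $\set{\txnvar' \in \RSTXNUPDATE \mid \txnvar'.\committs \le \txnvar.\readts}$.

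First I would prove $\rsvis(\txnvar) \subseteq \visrs^{-1}(\txnvar) \cap \RSTXNUPDATE$. This direction is immediate: any $\txnvar' \in \rsvis(\txnvar)$ is an update transaction satisfying the timestamp conjunct $\txnvar'.\committs \le \txnvar.\readts$, which by Definition~\ref{def:visrs} means $\txnvar' \rel{\visrs} \txnvar$.

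For the reverse inclusion, I would take $\txnvar' \in \visrs^{-1}(\txnvar) \cap \RSTXNUPDATE$, so that $\txnvar'.\committs \le \txnvar.\readts$, and show that $\txnvar'$ additionally satisfies the WiredTiger identifier-based rule. This is where the earlier work pays off: by Lemma~\ref{lemma:visrs-viswt} we have $\txnvar' \rel{\viswt} \txnvar$, and since every $\rsalg$ update transaction is encapsulated in a $\wtalg$ update transaction (so $\RSTXNUPDATE \subseteq \WTTXNUPDATE$), Lemma~\ref{lemma:viswt-wtvis} yields $\txnvar' \in \wtvis(\txnvar)$, i.e., the WiredTiger rule indeed holds. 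Conjoining this with the timestamp condition already in hand gives $\txnvar' \in \rsvis(\txnvar)$. The excluded self-case is harmless, since $\txnvar.\readts < \txnvar.\committs$ by the argument of Lemma~\ref{lemma:visrs-acyclic}, whence $\txnvar \notin \visrs^{-1}(\txnvar)$.

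The genuine content sits entirely in the reverse inclusion, and specifically in the appeal to Lemma~\ref{lemma:visrs-viswt}: the assertion that timestamp-visibility implies WiredTiger identifier-visibility is not syntactic and rests on the subtle timing argument (the local WiredTiger commit of a transaction precedes the computation of any later transaction's read timestamp whenever the commit-timestamp ordering holds) that is discharged there. Once that lemma and Lemma~\ref{lemma:viswt-wtvis} are assumed, the present statement reduces to the bookkeeping sketched above, and I do not anticipate any further difficulty.
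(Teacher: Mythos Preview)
Your proposal is correct and takes essentially the same approach as the paper: both reduce each side to the set $\set{\txnvar' \in \RSTXNUPDATE \mid \txnvar'.\committs \le \txnvar.\readts}$, with the non-trivial direction resting on Lemma~\ref{lemma:visrs-viswt}. The paper's version is more compressed (citing the visibility rule together with Lemmas~\ref{lemma:visrs-viswt} and~\ref{lemma:rsvis-welldefined}), while you spell out the mutual inclusion and additionally route through Lemma~\ref{lemma:viswt-wtvis}, but the substance is the same.
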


\begin{proof} \label{proof:visrs-rsvis}
  Consider a transaction $\txnvar \in \RSTXN$.
  By Definition~\ref{def:visrs} of $\visrs$,
  $\visrs^{-1}(\txnvar) \cap \RSTXNUPDATE =
    \set{\txnvar' \in \RSTXNUPDATE \mid \txnvar'.\committs \le \txnvar.\readts}$.
  By the visibility rule (line~\code{\ref{alg:wt-db}}{\ref{line:procedure-txnvis}})
  and Lemmas~\ref{lemma:visrs-viswt} and \ref{lemma:rsvis-welldefined},
  $\rsvis(\txnvar) =
    \set{\txnvar' \in \RSTXNUPDATE \mid \txnvar'.\committs \le \txnvar.\readts}$.
  Thus, $\visrs^{-1}(\txnvar) \cap \RSTXNUPDATE = \rsvis(\txnvar)$.
\end{proof}
%%%%%%%%%%%%%%%

%%%%%%%%%%%%%%%
\begin{lemma} \label{lemma:rs-noconflictaxiom}
  $\ae \models \noconflictaxiom.$
\end{lemma}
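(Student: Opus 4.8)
The plan is to adapt the proof of the WiredTiger no-conflict property (Lemma~\ref{lemma:wt-noconflictaxiom}) to the timestamp-based visibility rule of \rsalg{}, using Lemma~\ref{lemma:visrs-rsvis} to translate ``visible under $\txnvis$'' into the relation $\visrs$. Concretely, I would take two committed transactions $\txnvar, \txnvar' \in \RSTXNUPDATE$ with $\txnvar \conflict \txnvar'$, so that both write some common key $\keyvar$. Without loss of generality, assume that $\txnvar$ executes its $\wtupdate$ on $\keyvar$ before $\txnvar'$ does (invoked from \rsupdate). Then by the time $\txnvar'$ iterates over $\store[\keyvar]$ in its own $\wtupdate$, the entry written by $\txnvar$ is already present in the list and carries $\txnvar.\tid \neq -1$, since $\txnvar$ commits rather than aborting. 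This fixes the direction of the eventual edge: I will aim to conclude $\txnvar \rel{\visrs} \txnvar'$.

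First I would argue that $\txnvar$ must be visible to $\txnvar'$ at the moment of $\txnvar'$'s update. Because $\txnvar'$ commits rather than rolling back, the conflict check at line~\code{\ref{alg:wt-db}}{\ref{line:wtupdate-call-txnvis}} cannot fire on $\txnvar$'s entry; as that entry is non-aborted, the only remaining possibility is $\txnvis(\txnvar', \txnvar.\tid, \tsvar) = \true$. In the \rsalg{} instantiation of $\txnvis$ this forces the underlined timestamp conjunct $\tsvar \neq \wttsnone \land \tsvar \le \txnvar'.\readts$ to hold on $\txnvar$'s store entry. The store entry acquires a non-$\wttsnone$ timestamp only when $\txnvar$ is locally committed (line~\code{\ref{alg:wt-db}}{\ref{line:wtcommit-store}}), at which point it equals $\txnvar.\committs$; hence $\txnvar$ is already committed and $\txnvar.\committs \le \txnvar'.\readts$. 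By Definition~\ref{def:visrs} of $\visrs$, this is precisely $\txnvar \rel{\visrs} \txnvar'$, which establishes the disjunction required by \noconflictaxiom. Equivalently, one may phrase the same step via Lemma~\ref{lemma:visrs-rsvis} together with the well-definedness of $\rsvis$ (Lemma~\ref{lemma:rsvis-welldefined}): $\txnvar \in \rsvis(\txnvar')$ holds iff $\txnvar \rel{\visrs} \txnvar'$, so that ``$\txnvar'$ did not roll back on $\txnvar$'s entry'' maps directly onto the visibility edge.

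The main obstacle I anticipate is the timing interplay between WiredTiger's $\tid$-based conflict detection and the \rsalg{} timestamp-based visibility rule: the store entry for $\txnvar$ transiently carries $\wttsnone$ in the window between the moment $\txnvar$ is assigned its commit timestamp (line~\code{\ref{alg:rs-primary}}{\ref{line:rscommit-txn-committs}}) and the moment \wtcommit{} rewrites the entry's timestamp to $\txnvar.\committs$. I would need to verify carefully that if $\txnvar'$ encountered $\txnvar$'s entry during this window it would read $\wttsnone$, fail the timestamp conjunct, and therefore roll back --- contradicting that $\txnvar'$ commits. Ruling out this degenerate case is exactly what guarantees that in any surviving execution $\txnvar$ is fully committed with $\txnvar.\committs \le \txnvar'.\readts$, thereby pinning down both the existence and the orientation of the $\visrs$ edge between the conflicting transactions.
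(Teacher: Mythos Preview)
Your argument is correct and reaches the same conclusion, but the route differs from the paper's in two respects. First, the paper argues by contradiction: it assumes neither $\visrs$ edge holds, unpacks this as $\txnvar'.\readts < \txnvar.\committs$ and $\txnvar.\readts < \txnvar'.\committs$, and then derives that $\txnvar'$ would abort at the conflict check. You instead give a direct proof, fixing the order of the two $\wtupdate$s and reading off $\txnvar.\committs \le \txnvar'.\readts$ from the fact that $\txnvar'$ survives the check. Second, the paper invokes Lemma~\ref{lemma:wt-noconflictaxiom} (the WiredTiger-level \noconflictaxiom) to obtain that one transaction's $\wtcommit$ finishes before the other's $\wtstart$, which guarantees the store entry already carries $\txnvar.\committs$; you avoid this lemma entirely and instead case-split on whether the entry's timestamp is still $\wttsnone$, observing that the \rsalg{} timestamp conjunct of $\txnvis$ then fails and forces $\txnvar'$ to roll back. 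Your approach is more self-contained and isolates the timestamp conjunct as the operative mechanism; the paper's approach is more modular, reusing the WT-level result rather than re-deriving the needed real-time ordering. The ``obstacle'' you flag in the last paragraph is exactly the right point to check, and your resolution of it (an entry with $\wttsnone$ causes rollback) is precisely what makes the direct argument go through without appealing to Lemma~\ref{lemma:wt-noconflictaxiom}.
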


\begin{proof} \label{proof:rs-noconflictaxiom}
  Consider any two transactions $\txnvar, \txnvar' \in \RSTXN$ such that $\txnvar \conflict \txnvar'$.
  Suppose that they both update key $\keyvar$.
  Suppose by contradiction that
  $\lnot (\txnvar \rel{\visrs} \txnvar' \lor \txnvar' \rel{\visrs} \txnvar)$.
  By Definition~\ref{def:visrs} of $\visrs$, that is
  $(\txnvar.\readts < \txnvar'.\committs) \land (\txnvar'.\readts < \txnvar.\committs)$.

  By Lemma~\ref{lemma:wt-noconflictaxiom},
  $\txnvar$'s $\wtcommit$ finishes before $\txnvar'$'s $\wtstart$ starts
  \emph{or} $\txnvar'$'s $\wtcommit$ finishes before $\txnvar$'s $\wtstart$ starts.
  These two cases are symmetric.
  In the following, we consider the first case
  which implies that $\txnvar$ updates $\keyvar$ before $\txnvar'$ does.
  When $\txnvar'$ updates $\keyvar$,
  $\tuple{\txnvar.\tid, \_, \txnvar.\committs} \in \store[\kvar]$
  (line~\code{\ref{alg:wt-db}}{\ref{line:wtupdate-forall}}).
  However, the check $\txnvis(\txnvar', \_, \txnvar.\committs)$
  (line~\code{\ref{alg:wt-db}}{\ref{line:wtupdate-call-txnvis}}) fails
  because $\txnvar'.\readts < \txnvar.\committs$.
  Therefore, $\txnvar'$ would abort
  (line~\code{\ref{alg:wt-db}}{\ref{line:wtupdate-call-rollback}}).
\end{proof}
%%%%%%%%%%%%%%%
%%%%%%%%%%%%%%%%%%%%%%%%%
\subsubsection{The Arbitrary Relation} \label{sss:rs-ar}

%%%%%%%%%%%%%%%
\begin{definition}[$\arrs$] \label{def:arrs}
  $\forall \txnvar_{1}, \txnvar_{2} \in \RSTXN.\;
    \txnvar_{1} \rel{\arrs} \txnvar_{2} \iff
      \txnvar_{1}.\committs < \txnvar_{2}.\committs.$
\end{definition}
%%%%%%%%%%%%%%%

%%%%%%%%%%%%%%%
\begin{lemma} \label{lemma:arrs-is-total}
  $\arrs$ is a total order.
\end{lemma}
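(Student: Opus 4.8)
The plan is to verify the three defining properties of a total order in turn: irreflexivity, transitivity, and totality. Since $\arrs$ is defined purely in terms of the relation $<$ on commit timestamps (Definition~\ref{def:arrs}), and commit timestamps are HLCs, which by the discussion in Section~\ref{sss:rs-hlc} are compared lexicographically and are thus totally ordered, the first two properties come essentially for free. Concretely, irreflexivity of $\arrs$ follows from the irreflexivity of $<$ on $\TS$, since $\lnot(\txnvar.\committs < \txnvar.\committs)$; and transitivity of $\arrs$ follows from the transitivity of $<$, since $\txnvar_{1}.\committs < \txnvar_{2}.\committs$ and $\txnvar_{2}.\committs < \txnvar_{3}.\committs$ together imply $\txnvar_{1}.\committs < \txnvar_{3}.\committs$.

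The substantive step is totality: for any two distinct transactions $\txnvar_{1}, \txnvar_{2} \in \RSTXN$, I must show $\txnvar_{1} \rel{\arrs} \txnvar_{2}$ or $\txnvar_{2} \rel{\arrs} \txnvar_{1}$, i.e., that their commit timestamps are not merely comparable but in fact distinct. Since $<$ is already total on $\TS$, what remains is to rule out $\txnvar_{1}.\committs = \txnvar_{2}.\committs$ for distinct transactions. The key fact, which I would establish by inspecting the commit path, is that every committed \rsalg{} transaction obtains its commit timestamp by a single $\tick$ of the cluster time $\ct$ on the \emph{unique} primary, executed inside the atomic block at lines~\code{\ref{alg:rs-primary}}{\ref{line:rscommit-tick}}--\code{\ref{alg:rs-primary}}{\ref{line:rscommit-call-wtsetcommitts}}. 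Because this block serializes all commits on the primary and $\tick$ strictly increases the HLC, the commit timestamps assigned to committed transactions are strictly monotonically increasing, hence pairwise distinct---the same property already invoked in the proof of Lemma~\ref{lemma:visrs-viswt}.

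Putting these together, distinct transactions have distinct commit timestamps, so exactly one of $\txnvar_{1}.\committs < \txnvar_{2}.\committs$ or $\txnvar_{2}.\committs < \txnvar_{1}.\committs$ holds, which gives totality. Combined with irreflexivity and transitivity, and following the paper's definitions (a strict partial order is irreflexive and transitive; a total order is such a partial order that is also total), this establishes that $\arrs$ is a total order. I expect the main obstacle to be the totality argument, and specifically the justification that commit-timestamp assignment is both serialized on a single primary and strictly monotone under $\tick$; this relies on the atomicity assumption for the tick/commit-timestamp block and on the stated monotonicity of HLCs under $\tick$, both of which are supplied by the protocol description rather than reproved here.
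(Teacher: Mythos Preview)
Your proposal is correct and takes essentially the same approach as the paper: the paper's proof is a one-liner observing that by line~\code{\ref{alg:rs-primary}}{\ref{line:rscommit-tick}} the commit timestamps of all $\RSTXN$ transactions are unique, which is precisely the substantive totality step you isolate. Your treatment simply spells out the irreflexivity and transitivity that the paper leaves implicit (since $<$ on $\TS$ is already a strict total order).
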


\begin{proof} \label{proof:arrs-is-total}
  By line~\code{\ref{alg:rs-primary}}{\ref{line:rscommit-tick}},
  the commit timestamps of all $\RSTXN$ transactions are unique.
\end{proof}
%%%%%%%%%%%%%%%

%%%%%%%%%%%%%%%
\begin{lemma} \label{lemma:visrs-arrs}
  $\visrs \subseteq \arrs.$
\end{lemma}

\begin{proof} \label{proof:visrs-arrs}
  Consider any two transactions $\txnvar, \txnvar' \in \RSTXN$ such that $\txnvar \rel{\visrs} \txnvar'$.
  By Definition~\ref{def:visrs} of $\visrs$, $\txnvar.\committs \le \txnvar'.\readts$.
  Since $\txnvar'.\readts < \txnvar'.\committs$,
  we have $\txnvar.\committs < \txnvar'.\committs$.
  By Definition~\ref{def:arrs} of $\arrs$, $\txnvar \rel{\arrs} \txnvar'$.
\end{proof}
%%%%%%%%%%%%%%%

%%%%%%%%%%%%%%%
\begin{lemma} \label{lemma:rs-prefixaxiom}
  $\ae \models \prefixaxiom.$
\end{lemma}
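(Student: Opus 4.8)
The plan is to unfold the composition $\arrs \comp \visrs$ directly and chain the defining inequalities of $\arrs$ and $\visrs$. Recall that the prefix axiom $\prefixaxiom$ requires $\arrs \comp \visrs \subseteq \visrs$, so it suffices to take an arbitrary pair in the composition and show it lies in $\visrs$.

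First I would take an arbitrary pair $(\txnvar, \txnvar') \in \arrs \comp \visrs$. By the definition of relational composition, there exists a witness transaction $\txnvar'' \in \RSTXN$ such that $\txnvar \rel{\arrs} \txnvar''$ and $\txnvar'' \rel{\visrs} \txnvar'$. Next I would apply Definition~\ref{def:arrs} to the first relation to obtain $\txnvar.\committs < \txnvar''.\committs$, and Definition~\ref{def:visrs} to the second to obtain $\txnvar''.\committs \le \txnvar'.\readts$. Chaining these gives $\txnvar.\committs < \txnvar''.\committs \le \txnvar'.\readts$, hence $\txnvar.\committs \le \txnvar'.\readts$. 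By Definition~\ref{def:visrs} once more, this is exactly $\txnvar \rel{\visrs} \txnvar'$, which establishes the desired inclusion.

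There is essentially no obstacle here: the argument reduces to a one-line transitivity chain once the definitions are expanded, reflecting the fact that both $\visrs$ and $\arrs$ are defined purely in terms of the scalar commit and read timestamps attached to each transaction. The only minor point to watch is the interplay between the strict inequality ($<$) in $\arrs$ and the non-strict inequality ($\le$) in $\visrs$; but since a strict inequality composed with a non-strict one still yields an inequality consistent with the $\le$ required by $\visrs$, the two definitions fit together cleanly and the direction matches.
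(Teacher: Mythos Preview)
Your proof is correct and follows essentially the same approach as the paper: pick three transactions with $\txnvar_{1} \rel{\arrs} \txnvar_{2} \rel{\visrs} \txnvar_{3}$, unfold Definitions~\ref{def:arrs} and~\ref{def:visrs} to get $\txnvar_{1}.\committs < \txnvar_{2}.\committs \le \txnvar_{3}.\readts$, and conclude $\txnvar_{1} \rel{\visrs} \txnvar_{3}$. The only difference is notational (your $\txnvar, \txnvar'', \txnvar'$ versus the paper's $\txnvar_{1}, \txnvar_{2}, \txnvar_{3}$).
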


\begin{proof} \label{proof:rs-prefixaxiom}
  Consider transactions $\txnvar_{1}, \txnvar_{2}, \txnvar_{3} \in \RSTXN$
  such that $\txnvar_{1} \rel{\arrs} \txnvar_{2} \rel{\visrs} \txnvar_{3}$.
  By Definition~\ref{def:arrs} of $\arrs$ and Definition~\ref{def:visrs} of $\visrs$,
  $\txnvar_{1}.\committs < \txnvar_{2}.\committs \le \txnvar_{3}.\readts$.
  By Definition~\ref{def:visrs} of $\visrs$,
  $\txnvar_{1} \rel{\visrs} \txnvar_{3}$.
\end{proof}
%%%%%%%%%%%%%%%

%%%%%%%%%%%%%%%
\begin{lemma} \label{lemma:rs-cbaxiom}
  $\ae \models \cbaxiom.$
\end{lemma}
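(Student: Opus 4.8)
The plan is to establish the inclusion $\cbrs \subseteq \arrs$, i.e.\ that the real-time commit order refines the logical commit-timestamp order used to define $\arrs$ (Definition~\ref{def:arrs}). The key link I would exploit is that every $\rsalg$ transaction obtains its commit timestamp $\txnvar.\committs$ by a single \tick{} of the primary's cluster time inside the atomic block at line~\code{\ref{alg:rs-primary}}{\ref{line:rscommit-tick}}, executed at the very start of \rscommit{}, whereas its real-time commit point is the return of \rscommit{}, which is gated by the majority-commit wait at line~\code{\ref{alg:rs-primary}}{\ref{line:rscommit-wait-majoritycommitted}}.

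First I would fix two transactions $\txnvar, \txnvar' \in \RSTXN$ with $\txnvar \rel{\cbrs} \txnvar'$, i.e.\ $\committimers(\txnvar) < \committimers(\txnvar')$ (Definition~\ref{def:cb}), and argue by contradiction, assuming $\lnot(\txnvar \rel{\arrs} \txnvar')$. By Lemma~\ref{lemma:arrs-is-total} the commit timestamps are distinct, so this assumption yields $\txnvar'.\committs < \txnvar.\committs$. I would then invoke Lemma~\ref{lemma:rs-majority-commit} to conclude that $\txnvar'$ becomes majority committed strictly before $\txnvar$ does in real time.

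Next I would close the loop between ``majority committed'' and ``commits in real time.'' Since $\lastmajoritycommitted$ is monotonically non-decreasing (it is recomputed only by the replication handlers, which only advance the acknowledged timestamps), and since \rscommit{} returns precisely when its wait $\lastmajoritycommitted \ge \ct$ first succeeds, the real-time commit point $\committimers(\cdot)$ of a transaction coincides with the instant at which $\lastmajoritycommitted$ first reaches its commit timestamp. Combining this with the fact that $\txnvar'$ is majority committed before $\txnvar$, I obtain $\committimers(\txnvar') \le \committimers(\txnvar)$, contradicting $\committimers(\txnvar) < \committimers(\txnvar')$. Hence $\txnvar \rel{\arrs} \txnvar'$, which establishes $\ae \models \cbaxiom$.

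I expect the main obstacle to be the precise treatment of the real-time commit point. The commit timestamp is assigned at the \emph{start} of \rscommit{} (the \tick{}), while the client-visible commit occurs only after the majority-commit wait, so the two instants are separated in time; the argument must rule out that a transaction with a larger commit timestamp returns earlier. This is handled by the order-preservation supplied by Lemma~\ref{lemma:rs-majority-commit} together with the monotonicity of $\lastmajoritycommitted$, under the modeling assumption that a transaction returns as soon as its majority-commit wait is unblocked. In fact the same reasoning yields the stronger $\cbrs = \arrs$, but only the inclusion is needed here.
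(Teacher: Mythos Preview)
Your approach differs from the paper's, which is a direct two-line argument: from $\txnvar \rel{\cbrs} \txnvar'$ the paper simply invokes the monotonicity of \tick{} at line~\code{\ref{alg:rs-primary}}{\ref{line:rscommit-tick}} to conclude $\txnvar.\committs < \txnvar'.\committs$, and then applies Definition~\ref{def:arrs}. This relies on the paper's standing simplification that each procedure executes atomically (stated in Section~\ref{ss:wt-protocol} and reiterated in Section~\ref{section:conclusion}): two \rscommit{} invocations do not interleave, so their ticks occur in the same order as their returns. No detour through replication or Lemma~\ref{lemma:rs-majority-commit} is needed.

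Your contradiction argument via majority-commit ordering is not only longer but has a genuine gap. The key step ``$\committimers(\cdot)$ coincides with the instant at which $\lastmajoritycommitted$ first reaches its commit timestamp'' is not justified: a transaction may arrive at the wait on line~\code{\ref{alg:rs-primary}}{\ref{line:rscommit-wait-majoritycommitted}} \emph{after} $\lastmajoritycommitted$ has already passed its commit timestamp, in which case it returns upon arrival, not at that earlier instant. Concretely, if $\txnvar'$ ticks first (obtaining the smaller timestamp), then $\txnvar$ ticks, $\txnvar$ races through its oplog append, \wtcommit{}, and wait while $\txnvar'$ is still delayed before reaching its wait, then $\txnvar$ returns first even though $\txnvar'.\committs < \txnvar.\committs$. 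Your stated modeling assumption (``a transaction returns as soon as its majority-commit wait is unblocked'') does not exclude this scenario, because $\txnvar'$'s wait is not yet \emph{being waited on} when the condition becomes true. So without the paper's atomicity assumption your argument does not close; with that assumption, the paper's direct route is both simpler and sound.
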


\begin{proof} \label{proof:rs-cbaxiom}
  Consider two transactions $\txnvar, \txnvar' \in \RSTXN$
  such that $\txnvar \rel{\cbrs} \txnvar'$.
  By line~\code{\ref{alg:rs-primary}}{\ref{line:rscommit-tick}},
  $\txnvar.\committs < \txnvar'.\committs$.
  By Definition~\ref{def:arrs} of $\arrs$, $\txnvar \rel{\arrs} \txnvar'$.
\end{proof}
%%%%%%%%%%%%%%%

%%%%%%%%%%%%%%%
\begin{lemma} \label{lemma:rs-conflict-arrs-tid}
  $\forall \txnvar, \txnvar' \in \RSTXN.\;
    \txnvar \conflict \txnvar' \implies (\txnvar \rel{\arrs} \txnvar'
      \iff \txnvar.\tid < \txnvar'.\tid).$
\end{lemma}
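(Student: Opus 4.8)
The plan is to mirror the proof of Lemma~\ref{lemma:wt-conflict-arwt-tid} for \wtalg, lifting the argument from the WiredTiger layer to the replica-set layer via the bridge Lemma~\ref{lemma:visrs-viswt}. First I would observe that two conflicting transactions $\txnvar \conflict \txnvar'$ both write on a common key, hence both are update transactions and, when viewed as their encapsulating \wtalg{} transactions, lie in $\WTTXNUPDATE$; this is what makes Lemma~\ref{lemma:viswt-tid} applicable to them and guarantees they carry well-defined nonzero identifiers. I would also invoke Lemma~\ref{lemma:rs-noconflictaxiom} (\noconflictaxiom) to conclude that the two transactions are $\visrs$-comparable, i.e., $\txnvar \rel{\visrs} \txnvar'$ or $\txnvar' \rel{\visrs} \txnvar$. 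This comparability, together with the two one-directional lemmas relating $\visrs$ to $\arrs$ and to $\viswt$, supplies everything needed for both directions of the biconditional.

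For the forward direction, assuming $\txnvar \rel{\arrs} \txnvar'$, I would rule out the case $\txnvar' \rel{\visrs} \txnvar$: by Lemma~\ref{lemma:visrs-arrs} it would give $\txnvar' \rel{\arrs} \txnvar$, contradicting $\txnvar \rel{\arrs} \txnvar'$ since $\arrs$ is a strict total order (Lemma~\ref{lemma:arrs-is-total}). Hence $\txnvar \rel{\visrs} \txnvar'$, and chaining Lemma~\ref{lemma:visrs-viswt} to obtain $\txnvar \rel{\viswt} \txnvar'$ and then Lemma~\ref{lemma:viswt-tid} yields $\txnvar.\tid < \txnvar'.\tid$.

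For the backward direction, assuming $\txnvar.\tid < \txnvar'.\tid$, I would again use the $\visrs$-comparability: if $\txnvar' \rel{\visrs} \txnvar$ held, then Lemma~\ref{lemma:visrs-viswt} would give $\txnvar' \rel{\viswt} \txnvar$ and Lemma~\ref{lemma:viswt-tid} would force $\txnvar'.\tid < \txnvar.\tid$, a contradiction. Therefore $\txnvar \rel{\visrs} \txnvar'$, and Lemma~\ref{lemma:visrs-arrs} delivers $\txnvar \rel{\arrs} \txnvar'$.

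The only subtle point—rather than a genuine obstacle—is justifying that Lemma~\ref{lemma:viswt-tid}, which is stated for the WiredTiger relation $\viswt$ on $\WTTXNUPDATE$, may be transferred to \rsalg{} transactions. This hinges on the fact that the read and commit timestamps of \rsalg{} refine the WiredTiger visibility, formalized in Lemma~\ref{lemma:visrs-viswt}, so that every $\visrs$-edge between conflicting (update) transactions is a genuine $\viswt$-edge and the identifier $\txnvar.\tid$ is exactly that of the encapsulating \wtalg{} transaction. Once that transfer is in place, both directions reduce to routine symmetry-and-contradiction bookkeeping over the strict total order $\arrs$, so I do not expect any substantive difficulty beyond citing the right earlier lemmas in sequence.
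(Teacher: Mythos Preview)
Your proposal is correct and follows essentially the same route as the paper's own proof: invoke Lemma~\ref{lemma:rs-noconflictaxiom} for $\visrs$-comparability, then in each direction eliminate the wrong alternative using Lemma~\ref{lemma:visrs-arrs} or the chain Lemma~\ref{lemma:visrs-viswt} $+$ Lemma~\ref{lemma:viswt-tid}, and conclude. Your explicit remark that conflicting transactions lie in $\WTTXNUPDATE$ (so that Lemma~\ref{lemma:viswt-tid} applies) is a welcome clarification the paper leaves implicit.
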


\begin{proof} \label{proof:rs-conflict-arrs-tid}
  Consider any two transactions $\txnvar, \txnvar' \in \RSTXN$
  such that $\txnvar \conflict \txnvar'$.
  By Lemma~\ref{lemma:rs-noconflictaxiom},
  $\txnvar \rel{\visrs} \txnvar' \lor \txnvar' \rel{\visrs} \txnvar$.
  In the following, we proceed in two directions.

  Suppose that $\txnvar \rel{\arrs} \txnvar'$.
  By Lemma~\ref{lemma:visrs-arrs}, $\lnot(\txnvar' \rel{\visrs} \txnvar)$.
  Therefore, $\txnvar \rel{\visrs} \txnvar'$.
  By Lemma~\ref{lemma:visrs-viswt}, $\txnvar \rel{\viswt} \txnvar'$.
  By Lemma~\ref{lemma:viswt-tid}, $\txnvar.\tid < \txnvar'.\tid$.

  Suppose that $\txnvar.\tid < \txnvar'.\tid$.
  By Lemma~\ref{lemma:viswt-tid}, $\lnot(\txnvar' \rel{\viswt} \txnvar)$.
  By Lemma~\ref{lemma:visrs-viswt}, $\lnot(\txnvar' \rel{\visrs} \txnvar)$.
  Therefore, $\txnvar \rel{\visrs} \txnvar'$.
  By Lemma~\ref{lemma:visrs-arrs}, $\txnvar \rel{\arrs} \txnvar'$.
\end{proof}
%%%%%%%%%%%%%%%

%%%%%%%%%%%%%%%
\begin{lemma} \label{lemma:rs-intaxiom}
  $\ae \models \intaxiom.$
\end{lemma}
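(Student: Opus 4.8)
The plan is to reduce this to the corresponding WiredTiger result, Lemma~\ref{lemma:wt-intaxiom}, since every \rsalg{} read and update operation is delegated to the underlying \wtalg{} operation: \rsread{} invokes \wtread{} (line~\code{\ref{alg:rs-primary}}{\ref{line:rsread-call-wtread}}) and \rsupdate{} invokes \wtupdate{} (line~\code{\ref{alg:rs-primary}}{\ref{line:rsupdate-call-wtupdate}}). In the replica-set setting the \scalg-specific highlighted lines are ignored, so the \texttt{repeat}--\texttt{until} loop in \rsread{} collapses to a single call to \wtread{}; hence the value returned by any internal \rsalg{} read is exactly the value returned by the corresponding \wtread{} on the encapsulating \wtalg{} session.

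First I would fix a transaction $\txnvar \in \RSTXN$ and an internal read event $\evar$ with $\op(\evar) = \readevent(\keyvar, \valvar)$, and set $\evar' \triangleq \max_{\po}\set{\fvar \mid \op(\fvar) = \_(\keyvar, \_) \land \fvar \rel{\po} \evar}$, the $\po$-latest operation on $\keyvar$ preceding $\evar$ in $\txnvar$. The goal is to show $\op(\evar') = \_(\keyvar, \valvar)$. By Lemma~\ref{lemma:rsvis-welldefined} together with Definition~\ref{def:rsvis-events}, the set of update transactions visible to $\txnvar$ is stable over its lifecycle, so $\rsvis(\evar') = \rsvis(\evar)$.

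Next I would case-split on $\op(\evar')$, mirroring the proof of Lemma~\ref{lemma:wt-intaxiom}. If $\evar'$ is itself a read on $\keyvar$, then $\evar$ and $\evar'$ scan the same (unchanged) update list $\store[\keyvar]$ under the same visibility predicate, so they return the same value and $\op(\evar') = \readevent(\keyvar, \valvar)$. If $\evar'$ is a write on $\keyvar$, then $\evar'$ inserts $\txnvar$'s pair at the front of $\store[\keyvar]$ (line~\code{\ref{alg:wt-db}}{\ref{line:wtupdate-store-key}}), and $\txnvar$ observes its own writes because the visibility check in \txnvis{} carries the exception $\tidvar \neq \txnvar.\id$ (line~\code{\ref{alg:wt-db}}{\ref{line:txnvis-invisible}}); hence $\evar$ reads from $\evar'$ and $\op(\evar') = \writeevent(\keyvar, \valvar)$.

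The main obstacle is justifying that the delegation preserves internal-read semantics, i.e., that no replica-set-level step interferes with which WiredTiger modification an internal read observes. This is handled by noting that both $\txnvar.\readts$ and $\txnvar.\committs$ are fixed for the lifetime of $\txnvar$ (as already used in Lemma~\ref{lemma:rsvis-welldefined}), so the visibility predicate evaluated inside \wtread{} is identical for $\evar$ and $\evar'$; the argument then collapses to the WiredTiger case and the lemma follows.
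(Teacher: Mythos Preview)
Your proposal is correct and follows essentially the same approach as the paper's proof: fix an internal read $\evar$ and its $\po$-predecessor $\evar'$, invoke Lemma~\ref{lemma:rsvis-welldefined} to get $\rsvis(\evar') = \rsvis(\evar)$, then case-split on whether $\evar'$ is a read or a write. Your write-up is more explicit about the delegation to \wtalg{} and the self-visibility clause in \txnvis{}, but the underlying argument is identical.
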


\begin{proof} \label{proof:rs-intaxiom}
  Consider a transaction $\txnvar \in \RSTXN$.
  Let $\evar$ be an event such that $\op(\evar) = \readevent(\keyvar, \valvar)$
  is an \emph{internal} $\rsread$ operation in $\txnvar$.
  Let $\evar' \triangleq \max_{\po}\set{\fvar \mid \op(\fvar) = \_(\keyvar, \_) \land \fvar \rel{\po} e}$.
  We need to show that $\evar' = \_(\keyvar, \valvar)$.

  By Lemma~\ref{lemma:rsvis-welldefined}, $\rsvis(e') = \rsvis(e)$.
  If $\op(\evar') = \readevent(\keyvar, \_)$, $\evar$ obtains the same value as $\evar'$.
  Therefore, $\op(\evar') = \readevent(\keyvar, \valvar)$.
  If $\op(\evar') = \writeevent(\keyvar, \_)$, by the visibility rule,
  $\evar'$ is visible when $\evar$ occurs.
  Thus, $\evar$ reads from $\evar'$.
  Therefore, $\op(\evar') = \writeevent(\keyvar, \valvar)$.
\end{proof}
%%%%%%%%%%%%%%%

%%%%%%%%%%%%%%%
\begin{lemma} \label{lemma:rs-extaxiom}
  $\ae \models \extaxiom.$
\end{lemma}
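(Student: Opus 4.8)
The plan is to mirror the proof of Lemma~\ref{lemma:wt-extaxiom} for WiredTiger, replacing each WiredTiger notion by its \rsalg{} counterpart. Fix a transaction $\txnvar \in \RSTXN$ and let $\evar$ be an event with $\op(\evar) = \readevent(\keyvar, \valvar)$ that is an \emph{external} $\rsread$ operation in $\txnvar$. Set $W \triangleq \visrs^{-1}(\txnvar) \cap \WriteTx_{\keyvar}$, the set of transactions visible to $\txnvar$ under $\visrs$ that write $\keyvar$. If $W = \emptyset$, then $\evar$ must return the initial value of $\keyvar$, which discharges the axiom; so assume $W \neq \emptyset$.

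The core step is to show that $\evar$ reads from $\max_{\arrs} W$. First I would invoke Lemma~\ref{lemma:visrs-rsvis} to rewrite $W = \rsvis(\txnvar) \cap \WriteTx_{\keyvar}$, so that $W$ is exactly the set of key-$\keyvar$ writers that $\txnvar$ deems visible according to the visibility rule of $\txnvis$ (including the timestamp predicate added for \rsalg). Next, since any two transactions in $W$ conflict on $\keyvar$, Lemma~\ref{lemma:rs-conflict-arrs-tid} gives that on $W$ the order $\arrs$ coincides with the increasing transaction-identifier order. Because \rsalg{} reads delegate to $\wtread$, which scans the update list $\store[\keyvar]$ from its most recent entry and returns the first visible one (lines~\code{\ref{alg:wt-db}}{\ref{line:wtread-forall}}--\code{\ref{alg:wt-db}}{\ref{line:wtread-call-txnvis}}), and since list position reflects recency and hence $\tid$ order among writers of $\keyvar$, the first visible entry is precisely the $\tid$-maximal, equivalently $\arrs$-maximal, element of $W$. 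Therefore $\evar$ returns the value written by $\max_{\arrs} W$, so $\max_{\arrs} W \vdash \writeevent(\keyvar, \valvar)$, as required.

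The main obstacle is bridging the declarative relations $\visrs$ and $\arrs$ with the operational store traversal inside $\wtread$: I must argue that (i) the set of visible key-$\keyvar$ writers encountered during the scan equals $W$, and (ii) the scan returns the $\arrs$-greatest of them. Point (i) is supplied by Lemma~\ref{lemma:visrs-rsvis}, which already reconciles $\visrs$ with the WiredTiger visibility predicate under the \rsalg{} timestamp check, and point (ii) follows from Lemma~\ref{lemma:rs-conflict-arrs-tid} together with the fact that update-list position agrees with $\tid$ order. A minor point to dispatch is the read loop at line~\code{\ref{alg:rs-primary}}{\ref{line:rsread-while-status}}: since the history is restricted to committed transactions, no store entry carries the $\wtprepareinprogress$ flag, so the loop exits after a single $\wtread$ and does not alter the returned value.
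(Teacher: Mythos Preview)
Your proposal is correct and follows essentially the same approach as the paper: invoke Lemma~\ref{lemma:visrs-rsvis} to identify $W$ with $\rsvis(\txnvar)\cap\WriteTx_{\keyvar}$, then use Lemma~\ref{lemma:rs-conflict-arrs-tid} to align $\arrs|_{W}$ with the $\tid$ order and hence with the list order scanned by $\wtread$, concluding that $\evar$ reads from $\max_{\arrs} W$. Your extra remark about the read loop is harmless, though the cleaner justification is simply that in pure \rsalg{} the prepare phase is absent (the highlighted code is \scalg-only), so the loop condition is vacuous.
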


\begin{proof} \label{proof:rs-extaxiom}
  Consider a transaction $\txnvar \in \RSTXN$.
  Let $\evar$ be an event such that $\op(\evar) = \readevent(\keyvar, \valvar)$
  is an \emph{external} $\rsread$ operation in $\txnvar$.
  Let $W \triangleq \visrs^{-1}(\txnvar) \cap \WriteTx_{\keyvar}$
  be the set of transactions that update key $\keyvar$ and are visible to $\txnvar$.
  If $W = \emptyset$, obviously $e$ obtains the initial value of $\keyvar$.
  Now suppose that $W \neq \emptyset$.
  By Lemma~\ref{lemma:visrs-rsvis}, $W = \rsvis(\txnvar) \cap \WriteTx_{\keyvar}$.
  By Lemma~\ref{lemma:rs-conflict-arrs-tid},
  the $\arrs|_{W}$ order is consistent with the increasing $\tid$ order of the transactions in $W$,
  which is also the list order at line~\code{\ref{alg:wt-db}}{\ref{line:wtread-call-txnvis}}.
  Therefore, $\evar$ reads from $\max_{\arrs} W$.
  Thus, $\max_{\arrs} W \vdash \writeevent(\keyvar, \valvar)$.

  Consider a transaction $\txnvar \in \RSTXN$.
  Let $e = (\_, \readevent(k, v))$
  be an \emph{external} $\rsread$ operation in $\txnvar$.
  Consider the set
  $W = \visrs^{-1}(\txnvar) \cap \set{\txnvar \mid \txnvar \vdash \Write k : \_}$
  of transactions that update key $\kvar$ and are visible to $\txnvar$.
  If $W$ is empty, obviously $e$ obtains the initial value of $\kvar$.
  Now suppose that $W \neq \emptyset$.
  By Lemma~\ref{lemma:visrs-rsvis},
  it is also the set of transactions that pass the check of $\txnvis(\txnvar, \_)$
  for key $\kvar$ (line~\code{\ref{alg:wt-db}}{\ref{line:wtread-call-txnvis}}).
\end{proof}
%%%%%%%%%%%%%%%

%%%%%%%%%%%%%%%
\begin{theorem} \label{thm:rsalg-rtsi}
  $\rsalg \models \rtsi.$
\end{theorem}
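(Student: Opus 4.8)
The plan is to mirror the structure used for Theorem~\ref{thm:wtalg-strongsi}: take an arbitrary history $\h$ of \rsalg{} (restricted to committed transactions), exhibit a single canonical abstract execution, and verify that it meets every axiom demanded by \rtsi{}. Concretely, I would instantiate $\ae = (\h, \visrs, \arrs)$ with the visibility and arbitration relations of Definitions~\ref{def:visrs} and \ref{def:arrs}, both of which are phrased purely in terms of the read and commit timestamps the protocol assigns. The point of this choice is that every consistency obligation then reduces to an arithmetic statement about $\txnvar.\readts$ and $\txnvar.\committs$, which is exactly the form in which the supporting lemmas are already proved.

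Before invoking those lemmas, I would first check that $(\h, \visrs, \arrs)$ is a legitimate abstract execution in the sense of Definition~\ref{def:ae}. The arbitration $\arrs$ is a strict total order by Lemma~\ref{lemma:arrs-is-total}, since commit timestamps are unique and totally ordered. Visibility $\visrs$ is irreflexive by Lemma~\ref{lemma:visrs-acyclic}, and its transitivity follows by combining $\visrs \subseteq \arrs$ (Lemma~\ref{lemma:visrs-arrs}) with the prefix property $\arrs \comp \visrs \subseteq \visrs$ (Lemma~\ref{lemma:rs-prefixaxiom}), which yields $\visrs \comp \visrs \subseteq \arrs \comp \visrs \subseteq \visrs$. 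Finally $\visrs \subseteq \arrs$ is again Lemma~\ref{lemma:visrs-arrs}, so the well-formedness requirement $\vis \subseteq \ar$ holds.

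The remaining step is to unfold $\rtsi = \si \land \rbaxiom \land \cbaxiom = \intaxiom \land \extaxiom \land \prefixaxiom \land \noconflictaxiom \land \rbaxiom \land \cbaxiom$ (Definition~\ref{def:rtsi}) and discharge each conjunct against the established lemmas: \intaxiom{} by Lemma~\ref{lemma:rs-intaxiom}, \extaxiom{} by Lemma~\ref{lemma:rs-extaxiom}, \prefixaxiom{} by Lemma~\ref{lemma:rs-prefixaxiom}, \noconflictaxiom{} by Lemma~\ref{lemma:rs-noconflictaxiom}, \rbaxiom{} by Lemma~\ref{lemma:rs-rbaxiom}, and \cbaxiom{} by Lemma~\ref{lemma:rs-cbaxiom}. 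Assembling these gives $\ae \models \rtsi$, and since $\h$ is arbitrary, the theorem follows.

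Because all the hard reasoning is pre-packaged in the lemmas, the assembly itself is routine; the genuine difficulty lives one level down, in the real-time lemmas. I expect Lemma~\ref{lemma:rs-rbaxiom} to be the crux, as it must translate the physical-time relation $\txnvar \rel{\rbrs} \txnvar'$ into the timestamp condition $\txnvar.\committs \le \txnvar'.\readts$; this hinges on the replication invariant (Lemma~\ref{lemma:rs-majority-commit}) that transactions become majority committed in commit-timestamp order, so that a returned transaction leaves no gap below the next transaction's read timestamp. A close second is Lemma~\ref{lemma:visrs-viswt} ($\visrs \subseteq \viswt$), whose delicate three-timepoint argument is what lets the \rsalg{}-level timestamp ordering override the underlying \wtalg{} identifier ordering; were either of these to fail, the timestamp-based abstract execution above would not be sound.
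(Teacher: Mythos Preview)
Your proposal is correct and follows essentially the same approach as the paper: construct $\ae = (\h, \visrs, \arrs)$ from Definitions~\ref{def:visrs} and~\ref{def:arrs} and discharge each \rtsi{} axiom by the corresponding supporting lemma. Your additional explicit check that $(\h, \visrs, \arrs)$ is a well-formed abstract execution (in particular the transitivity of $\visrs$ via $\visrs \subseteq \arrs$ and \prefixaxiom) is a welcome elaboration that the paper leaves implicit in its list of cited lemmas.
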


\begin{proof} \label{proof:rsalg-rtsi}
  For any history $\h$ of $\rsalg$,
  we construct an abstract execution $\ae = (\h, \visrs, \arrs)$,
  where $\visrs$ and $\arrs$ are given in Definitions~\ref{def:visrs}
  and \ref{def:arrs}, respectively.
  By Lemmas~\ref{lemma:visrs-acyclic}, \ref{lemma:rs-rbaxiom}, \ref{lemma:rs-noconflictaxiom},
  \ref{lemma:arrs-is-total}, \ref{lemma:visrs-arrs}, \ref{lemma:rs-prefixaxiom}, \ref{lemma:rs-cbaxiom},
  \ref{lemma:rs-intaxiom}, and \ref{lemma:rs-extaxiom}, $\ae \models \rtsi$.
  Since $\h$ is arbitrary, $\rsalg \models \rtsi$.
\end{proof}
%%%%%%%%%%%%%%%

%%%%%%%%%%%%%%%
Additionally, we illustrate by counterexample that the abstract execution $\ae$
constructed in the proof of Theorem~\ref{thm:rsalg-rtsi} does not satisfy \inrbaxiom.
\begin{example}[$\ae \not\models \realtimesnapshotaxiom$] \label{ex:rs-not-inrbaxiom}
  Consider a transaction $\txnvar \in \RSTXN$ that has committed locally,
  but has not been majority committed.
  Let $\txnvar' \neq \txnvar$ be a transaction that starts before $\txnvar$ finishes.
  That is, $\lnot (\txnvar \rel{\rbrs} \txnvar')$.
  Due to the ``speculative majority'' strategy,
  it is possible for $\txnvar'$ to read from $\txnvar$ so that $\txnvar \rel{\visrs} \txnvar'$.
  Therefore, $\ae \not\models \inrbaxiom$.
\end{example}
%%%%%%%%%%%%%%%
%%%%%%%%%%%%%%%%%%%%%%%%%%%%%%

% sc-correctness.tex

%%%%%%%%%%%%%%%%%%%%%%%%%%%%%%
\subsection{Correctness of \scalg} \label{ss:sc-correctness}

Consider a history $\h = (\SCTXN, \sosc)$ of \scalg{},
where we restrict $\SCTXN$ to be set of committed transactions.
Denote by $\SCTXNUPDATE \subseteq \SCTXN$ and $\SCTXNRO \subseteq \SCTXN$
be the set of update transactions and read-only transactions in $\SCTXN$, respectively.
We prove that $\h$ satisfies \sessionsi{}
by constructing an abstract execution $\ae = (\h, \vissc, \arsc)$ (Theorem~\ref{thm:scalg-sessionsi}).
%%%%%%%%%%%%%%%

In \scalg, all transactions have a read timestamp.
However, only update transactions will be assigned a commit timestamp in 2PC.
For the proof, we need to assign commit timestamps to read-only transactions as well.

%%%%%%%%%%%%%%%
\begin{definition}[Commit Timestamps for Read-only \scalg{} Transactions] \label{def:sc-committs}
  Let $\txnvar \in \SCTXNRO$ be a read-only transaction.
  We define $\txnvar.\committs = \txnvar.\readts$.
\end{definition}
%%%%%%%%%%%%%%%

We also use Lamport clocks~\cite{Time:CACM1978} of transactions as usual to break ties when necessary.
Let $\txnvar \in \RSTXN$ be a transaction.
We denote by $\lclock(\txnvar)$ its Lamport clock.
%%%%%%%%%%%%%%%%%%%%%%%%%
\subsubsection{The Visibility Relation} \label{sss:vis-sc}

%%%%%%%%%%%%%%%
\begin{definition}[$\vissc$] \label{def:vissc}
  \begin{align*}
    &\forall \txnvar, \txnvar' \in \SCTXN.\;
      \txnvar \rel{\vissc} \txnvar' \iff \\
      &\quad (\txnvar.\committs < \txnvar'.\readts) \;\lor \\
      &\quad (\txnvar.\committs = \txnvar'.\readts \land \lclock(\txnvar) < \lclock(\txnvar')).
  \end{align*}
\end{definition}
%%%%%%%%%%%%%%%

%%%%%%%%%%%%%%%
\begin{lemma} \label{lemma:vissc-acyclic}
  $\vissc$ is acyclic.
\end{lemma}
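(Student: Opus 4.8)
The plan is to argue by contradiction: assume $\vissc$ contains a cycle $\txnvar_{1} \rel{\vissc} \txnvar_{2} \rel{\vissc} \cdots \rel{\vissc} \txnvar_{n} \rel{\vissc} \txnvar_{1}$ and derive an impossibility. The entire argument hinges on one auxiliary fact tying the two timestamps of a single transaction together:
\[
  \forall \txnvar \in \SCTXN.\; \txnvar.\readts \le \txnvar.\committs.
\]
For a read-only transaction this is an equality by Definition~\ref{def:sc-committs}. For an update transaction it follows from the protocol: \waitforreadconcern{} forces the participant's cluster time up to at least $\txnvar.\readts$ (line~\code{\ref{alg:sc-primary}}{\ref{line:waitforreadconcern-call-noopwrite}}), every subsequent \tick{} only increases it (line~\code{\ref{alg:sc-primary}}{\ref{line:prepare-call-tick}}), so each prepare timestamp is $\ge \txnvar.\readts$, and the commit timestamp, being the maximum of the prepare timestamps (line~\code{\ref{alg:sc-primary}}{\ref{line:twopc-committs}}), is $\ge \txnvar.\readts$. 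I expect this to be the main obstacle, since it is the only step that requires unfolding the 2PC and clock-propagation mechanism rather than merely manipulating Definition~\ref{def:vissc}.

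Granting the auxiliary fact, the rest is a two-stage collapse. First I would show the commit timestamps are non-decreasing along every $\vissc$ edge: for $\txnvar_{i} \rel{\vissc} \txnvar_{i+1}$, both disjuncts of Definition~\ref{def:vissc} give $\txnvar_{i}.\committs \le \txnvar_{i+1}.\readts$, and chaining with the auxiliary fact yields $\txnvar_{i}.\committs \le \txnvar_{i+1}.\committs$. Running this around the cycle (indices taken modulo $n$) forces $\txnvar_{1}.\committs \le \txnvar_{2}.\committs \le \cdots \le \txnvar_{n}.\committs \le \txnvar_{1}.\committs$, so all commit timestamps on the cycle coincide; call the common value $c$.

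Second, I would observe that once all commit timestamps equal $c$, the strict disjunct of Definition~\ref{def:vissc} is impossible on any edge, since it would give $c = \txnvar_{i}.\committs < \txnvar_{i+1}.\readts \le \txnvar_{i+1}.\committs = c$. Hence every edge must use the tie-breaking disjunct, which supplies $\lclock(\txnvar_{i}) < \lclock(\txnvar_{i+1})$. Chaining these around the cycle produces $\lclock(\txnvar_{1}) < \lclock(\txnvar_{1})$, contradicting irreflexivity of $<$ on Lamport clocks; no distinctness assumption on the $\lclock$ values is needed, only that $<$ is a strict order. This contradiction shows no cycle exists, so $\vissc$ is acyclic. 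Note that the argument uses only $\le$ in the auxiliary fact; the strictness it ultimately needs is manufactured by the case split on the edges, not by a read-to-commit gap, which is essential here because that gap degenerates to equality for read-only transactions, unlike the situation exploited in Lemma~\ref{lemma:visrs-acyclic}.
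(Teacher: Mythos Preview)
Your argument is correct and follows the same skeleton as the paper's proof, which is a one-line appeal to the fact that $\txnvar.\readts < \txnvar.\committs$ for update transactions together with the total order on Lamport clocks. Your version is considerably more explicit and in one respect cleaner: you work uniformly with the weak inequality $\txnvar.\readts \le \txnvar.\committs$ for \emph{all} transactions and let the case split on the two disjuncts of Definition~\ref{def:vissc} manufacture the needed strictness, whereas the paper's sketch singles out the strict gap for update transactions. One small inaccuracy in your justification of the auxiliary fact: line~\code{\ref{alg:sc-primary}}{\ref{line:waitforreadconcern-call-noopwrite}} appends a $\noop$ oplog entry but does not itself advance $\ct$; the participant's cluster time reaches at least $\readtsvar$ because of the cluster-time distribution mechanism (Section~\ref{sss:rs-hlc}), after which the \tick{} at line~\code{\ref{alg:sc-primary}}{\ref{line:prepare-call-tick}} makes the prepare timestamp strictly exceed $\readtsvar$. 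The conclusion you draw is unaffected.
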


\begin{proof} \label{proof:vissc-acyclic}
  By the property that $\forall \txnvar \in \SCTXNUPDATE.\; \txnvar.\readts < \txnvar.\committs$
  and that Lamport clocks are totally ordered.
\end{proof}
%%%%%%%%%%%%%%%

%%%%%%%%%%%%%%%
\begin{lemma} \label{lemma:sc-session}
  $\ae \models \sessionaxiom.$
\end{lemma}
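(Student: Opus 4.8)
The plan is to unfold the definition of $\sessionaxiom$, namely $\sosc \subseteq \vissc$, and verify the inclusion directly for an arbitrary pair $\txnvar \rel{\sosc} \txnvar'$, where $\txnvar$ and $\txnvar'$ belong to the same session and $\txnvar$ precedes $\txnvar'$. By Definition~\ref{def:vissc} it suffices to establish the single inequality $\txnvar.\committs \le \txnvar'.\readts$ and then to break the potential tie. Concretely, I would reduce the goal to this inequality: if it holds strictly, the first disjunct of $\vissc$ applies immediately; if it holds with equality, I fall back on the second disjunct, which I discharge by observing that the session order is contained in the happens-before order, so the monotonicity of Lamport clocks gives $\lclock(\txnvar) < \lclock(\txnvar')$.

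The heart of the argument is the inequality $\txnvar.\committs \le \txnvar'.\readts$, which I would derive from how cluster time propagates through a session routed by a single mongos. The read timestamp of every transaction is taken from the mongos's cluster time at the moment the transaction starts (Section~\ref{ss:sc-key-designs}), and cluster times are only ever advanced, both locally and upon receiving a larger value in an incoming message (Section~\ref{sss:rs-hlc}). Since $\txnvar$ returns to the session before $\txnvar'$ starts, it suffices to show that by the time $\txnvar$'s commit is acknowledged to the mongos, the mongos's cluster time has reached at least $\txnvar.\committs$; monotonicity then forces $\txnvar'.\readts \ge \txnvar.\committs$. For an update transaction, $\txnvar.\committs$ is fixed during 2PC as the maximum of the participants' prepare timestamps, and this value is carried back through the \commit, \decack, and \twopcack messages, so the mongos updates its cluster time to at least $\txnvar.\committs$ before returning control to the session. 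For a read-only transaction, $\txnvar.\committs$ is \emph{defined} to equal $\txnvar.\readts$ (Definition~\ref{def:sc-committs}), which was itself drawn from the mongos's cluster time when $\txnvar$ started, so the bound is immediate from monotonicity.

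Combining these, $\txnvar'.\readts \ge \txnvar.\committs$ in all cases, and the case split above yields $\txnvar \rel{\vissc} \txnvar'$, establishing $\sosc \subseteq \vissc$. I expect the main obstacle to be the cluster-time propagation step: because the pseudocode suppresses the mongos and omits the explicit distribution of cluster time, I must argue at the level of the protocol's described message flow that $\txnvar$'s commit timestamp is genuinely reflected in the mongos's clock before the next transaction in the session is assigned its read timestamp. Making this rigorous requires pinning down which messages carry the commit timestamp back to the mongos and confirming that no intervening step can lower the clock below $\txnvar.\committs$, together with a careful treatment of the read-only case, where the ``commit timestamp'' is only a proof artifact rather than a value computed by the protocol.
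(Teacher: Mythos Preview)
Your proposal is correct and follows essentially the same approach as the paper: the paper's proof is a one-sentence sketch appealing to cluster-time propagation on every message (Section~\ref{sss:rs-hlc}) together with Lamport-clock properties, and your argument is precisely a detailed unpacking of those two ingredients, including the case split on update versus read-only transactions and the explicit tie-breaking via Lamport clocks.
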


\begin{proof} \label{proof:sc-session}
  This holds due to the way how nodes and clients maintain their cluster time
  (namely, they distribute their latest cluster time when sending any messages
  and update it when receiving a larger one in incoming messages; see Section~\ref{sss:rs-hlc})
  and the properties of Lamport clocks.
\end{proof}
%%%%%%%%%%%%%%%

%%%%%%%%%%%%%%%
\begin{definition} \label{def:scvis}
  For a transaction $\txnvar \in \SCTXN$,
  we define $\scvis(\txnvar) \subseteq \SCTXNUPDATE$ to be the set of update transactions in $\SCTXN$
  that are visible to $\txnvar$ according to the visibility rule
  in $\txnvis$ (line~\code{\ref{alg:wt-db}}{\ref{line:procedure-txnvis}}).
\end{definition}
%%%%%%%%%%%%%%%

%%%%%%%%%%%%%%%
\begin{lemma} \label{lemma:scvis-welldefined}
  For any transaction $\txnvar \in \SCTXN$,
  $\scvis(\txnvar)$ is well-defined.
  That is, $\scvis(\txnvar)$ does not change over the lifecycle of $\txnvar$.
\end{lemma}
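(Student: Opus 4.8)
The plan is to mirror the two analogous well-definedness lemmas, Lemma~\ref{lemma:wtvis-welldefined} for \wtalg{} and Lemma~\ref{lemma:rsvis-welldefined} for \rsalg. The first observation is that the three fields of $\txnvar$ that $\txnvis$ (line~\code{\ref{alg:wt-db}}{\ref{line:procedure-txnvis}}) consults---$\txnvar.\concur$, $\txnvar.\upperlimit$, and $\txnvar.\readts$---are each assigned exactly once when $\txnvar$ starts, in $\wtstart$ and $\wtsetreadts$ (called from $\scstart$), and are never rewritten. Hence I would fix an arbitrary update transaction $\txnvar' \in \SCTXNUPDATE$ and show that its membership in $\scvis(\txnvar)$ is independent of the instant at which $\txnvis$ is evaluated. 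The verdict splits into a tid part and a timestamp part, and I would treat them separately.

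For the tid part I would reuse the case analysis of Lemma~\ref{lemma:wtvis-welldefined}: whether $\txnvar'$'s identifier lies in $\txnvar.\concur$ or is $\ge \txnvar.\upperlimit$ is settled once and for all at $\txnvar$'s start, because both sets are frozen then. The only \scalg-specific wrinkle is that a transaction which has already been \emph{prepared} by the time $\txnvar$ starts has had its entry in $\wttxnglobal$ reset to $\wttidnone$ (line~\code{\ref{alg:wt-db}}{\ref{line:wtprepare-wttxnglobal-states}}) and so is absent from the snapshot scan that builds $\txnvar.\concur$; but this merely changes \emph{which} one-time verdict is recorded for $\txnvar'$ at $\txnvar$'s start, not the fact that the verdict is frozen. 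For a committed $\txnvar'$ the timestamp part reduces to the comparison $\txnvar'.\committs \le \txnvar.\readts$, and since $\txnvar'.\committs$ is fixed by 2PC (written into the store under the flag $\wtprepareresolved$ by $\wtcommitprepare$ at line~\code{\ref{alg:wt-db}}{\ref{line:wtcommitprepare-store}}) and $\txnvar.\readts$ is fixed, this comparison is time-invariant as well.

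The main obstacle, and the genuinely new ingredient relative to \wtalg{} and \rsalg, is the \emph{prepared-but-unresolved} window: while $\txnvar'$ is prepared the store carries its prepare timestamp under the flag $\wtprepareinprogress$ (line~\code{\ref{alg:wt-db}}{\ref{line:wtprepare-store}}), and because $\txnvar'.\preparets \le \txnvar'.\committs$ one may have $\txnvar'.\preparets \le \txnvar.\readts < \txnvar'.\committs$, so the transient timestamp passes $\txnvis$ although the final one will not. I would close this gap with the two delaying mechanisms the protocol installs for exactly this purpose. First, \waitforreadconcern{} (\caseholes) blocks $\txnvar$ at line~\code{\ref{alg:sc-primary}}{\ref{line:waitforreadconcern-wait-until-allcommitted}} until $\Call{allcommitted}{\null} \ge \txnvar.\readts$, so no committed transaction below the read timestamp is still missing when $\txnvar$ issues operations. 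Second, and crucially, \casependingcommitread{}/\casependingcommitupdate{}: whenever $\wtread$ returns an entry tagged $\wtprepareinprogress$, the loop of $\rsread$ at line~\code{\ref{alg:rs-primary}}{\ref{line:rsread-while-status}} retries, so $\txnvar$ never finalizes a visibility decision against an unresolved transaction and only ever acts on the resolved store, where the stored timestamp equals $\txnvar'.\committs$. Combining the frozen tid verdict with the fixed comparison $\txnvar'.\committs \le \txnvar.\readts$, I would conclude that $\scvis(\txnvar)$ equals the time-independent set of update transactions whose commit timestamp is at most $\txnvar.\readts$ and whose tid verdict holds, establishing that it does not change over $\txnvar$'s lifecycle.
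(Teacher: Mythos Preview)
The paper states this lemma without proof, so there is no reference argument to compare against. Your proposal is sound and is the natural extension of the pattern set by Lemmas~\ref{lemma:wtvis-welldefined} and~\ref{lemma:rsvis-welldefined}: freeze the tid-side verdict via the \wtalg{} case analysis, and freeze the timestamp-side verdict via the immutability of $\txnvar.\readts$ and $\txnvar'.\committs$. You also correctly isolate the one genuinely new obstacle in \scalg, namely that the stored timestamp for $\txnvar'$ transitions through $\txnvar'.\preparets$ (under flag $\wtprepareinprogress$) before settling at $\txnvar'.\committs$, and you resolve it with the retry loop at line~\code{\ref{alg:rs-primary}}{\ref{line:rsread-while-status}}. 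This is exactly the mechanism the protocol provides for \casependingcommitread{}/\casependingcommitupdate{}, and it is the right tool here.

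One small sharpening: your invocation of \caseholes{} (\waitforreadconcern) is not doing real work for this particular lemma. A prepared-but-uncommitted $\txnvar'$ has its $\wttxnglobal$ entry reset to $\wttidnone$ (line~\code{\ref{alg:wt-db}}{\ref{line:wtprepare-wttxnglobal-states}}), so it does not contribute to $\allcommitted$ and the wait at line~\code{\ref{alg:sc-primary}}{\ref{line:waitforreadconcern-wait-until-allcommitted}} will not block on it. The entire burden of neutralizing the prepared window is carried by the $\rsread$ retry loop, as you yourself signal with ``Second, and crucially.'' You could drop the \caseholes{} sentence without loss. Otherwise the argument is complete and, if anything, more explicit than the paper's one-line treatment of the analogous \rsalg{} lemma.
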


% \begin{proof} \label{proof:scvis-welldefined}
%   \hfwei{20211020}{TODO:}
% \end{proof}
%%%%%%%%%%%%%%%

%%%%%%%%%%%%%%%
\begin{lemma} \label{lemma:vissc-scvis}
  $\forall \txnvar \in \SCTXN.\; \vissc^{-1}(\txnvar) \cap \SCTXNUPDATE = \scvis(\txnvar).$
\end{lemma}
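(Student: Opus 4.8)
The plan is to follow the template of Lemma~\ref{lemma:visrs-rsvis} for the replica-set case, but now reconcile the Lamport-clock tie-break built into \vissc{} with the timestamp guard $\tsvar \le \txnvar.\readts$ used by \txnvis{}. First I would unfold both sides. By Definition~\ref{def:vissc}, an update transaction $\txnvar'$ lies in $\vissc^{-1}(\txnvar) \cap \SCTXNUPDATE$ exactly when $\txnvar'.\committs < \txnvar.\readts$, or $\txnvar'.\committs = \txnvar.\readts$ and $\lclock(\txnvar') < \lclock(\txnvar)$. On the other side, I would observe that for a committed \scalg{} update transaction the timestamp stored in the WiredTiger $\store$ is precisely its commit timestamp (set at line~\code{\ref{alg:wt-db}}{\ref{line:wtcommitprepare-store}}), so that the guard $\tsvar \le \txnvar.\readts$ in \txnvis{} (line~\code{\ref{alg:wt-db}}{\ref{line:procedure-txnvis}}) reduces to $\txnvar'.\committs \le \txnvar.\readts$. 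Lemma~\ref{lemma:scvis-welldefined} guarantees that $\scvis(\txnvar)$ is a fixed set, so it suffices to decide membership at the moment $\txnvar$ takes its snapshot.

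For the strict part $\txnvar'.\committs < \txnvar.\readts$, I would argue that the two characterizations coincide, exactly as in the replica-set proof. The call to \waitforreadconcern{} inside \scstart{} (the \caseholes{} handling, line~\code{\ref{alg:sc-primary}}{\ref{line:waitforreadconcern-wait-until-allcommitted}}) blocks until the value returned by \allcommitted{} is at least $\txnvar.\readts$, so there are no holes below the read timestamp; hence every update transaction with commit timestamp strictly below $\txnvar.\readts$ has already been committed in WiredTiger with its commit timestamp recorded and is not concurrent with $\txnvar$, so the $\txnvar.\concur$ and $\txnvar.\upperlimit$ conjuncts of \txnvis{} are satisfied while the timestamp guard holds strictly. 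The \caseclockskew{} and pending-commit handling (\casependingcommitread{}, \casependingcommitupdate{}) additionally ensure that such a transaction has been resolved to phase \wtprepareresolved{} rather than \wtprepareinprogress{}, so that it is genuinely visible and the two sides agree on this range.

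The crux is the boundary $\txnvar'.\committs = \txnvar.\readts$. Here the timestamp guard of \txnvis{} is satisfied with equality, so whether $\txnvar'$ belongs to $\scvis(\txnvar)$ is decided entirely by the tid-based conjuncts $\txnvar'.\tid \notin \txnvar.\concur$ and $\txnvar'.\tid < \txnvar.\upperlimit$, which record the local start/commit order on the shard. I must show that this local order agrees with the Lamport-clock order, i.e.\ that $\txnvar'$ passes the tid checks iff $\lclock(\txnvar') < \lclock(\txnvar)$. I expect this to follow from the way cluster time (an HLC) is ticked on commit and propagated on every message (Section~\ref{sss:rs-hlc}), together with the Lamport-clock monotonicity already exploited for Lemma~\ref{lemma:sc-session}: equal commit timestamps can only arise across distinct shards, and in that case the HLC/Lamport discipline forces the transaction with the smaller Lamport clock to have been prepared-and-committed before $\txnvar$ fixed its snapshot, hence out of $\txnvar.\concur$ and below $\txnvar.\upperlimit$.

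The main obstacle is precisely this last step: lifting WiredTiger's per-shard, tid-based tie-break to the global Lamport-clock tie-break of \vissc{}, while keeping track of the per-shard nature of the $\store$ for transactions spanning several shards (so that the characterization is established per shard and then aggregated consistently). Once both inclusions are verified at the boundary, combining them with the strict case yields the claimed equality $\vissc^{-1}(\txnvar) \cap \SCTXNUPDATE = \scvis(\txnvar)$.
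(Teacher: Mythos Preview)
The paper does not actually supply a proof for Lemma~\ref{lemma:vissc-scvis}; it is stated and then immediately followed by Lemma~\ref{lemma:sc-noconflictaxiom}. The intended argument is presumably the obvious analogue of the very short proof of Lemma~\ref{lemma:visrs-rsvis}, and your plan to mirror that template is the right starting point. So there is nothing in the paper to ``compare'' against beyond that analogy; what follows is an assessment of your proposal on its own.

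Your treatment of the strict part $\txnvar'.\committs < \txnvar.\readts$ is essentially sound and uses the right ingredients (the \caseholes{} wait at line~\code{\ref{alg:sc-primary}}{\ref{line:waitforreadconcern-wait-until-allcommitted}}, the resolution of \wtprepareinprogress{} via the retry loop, and Lemma~\ref{lemma:scvis-welldefined}). The genuine gap is exactly where you locate it: the boundary $\txnvar'.\committs = \txnvar.\readts$. There your argument rests on two claims that are not justified. First, ``equal commit timestamps can only arise across distinct shards'' conflates two different equalities; what you actually need is a statement about when an \emph{update} transaction's 2PC commit timestamp can coincide with a mongos-assigned read timestamp, and nothing in the protocol excludes this on a single shard. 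Second, and more seriously, the assertion that the per-shard tid predicates $\txnvar'.\tid \notin \txnvar.\concur$ and $\txnvar'.\tid < \txnvar.\upperlimit$ coincide with the global Lamport-clock order $\lclock(\txnvar') < \lclock(\txnvar)$ is not established: Lamport clocks are introduced in the paper only as an external tie-breaker in Definitions~\ref{def:vissc} and~\ref{def:arsc}, and no lemma ties them to WiredTiger's local snapshot bookkeeping. Your appeal to ``the HLC/Lamport discipline'' and to Lemma~\ref{lemma:sc-session} does not bridge this, since that lemma concerns $\sosc$, not arbitrary pairs with coinciding timestamps.

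To close the gap you would need either (i) an argument that for $\txnvar' \in \SCTXNUPDATE$ the equality $\txnvar'.\committs = \txnvar.\readts$ forces a specific causal/real-time ordering between $\txnvar'$'s commit in WiredTiger and $\txnvar$'s \wtstart{} on every common shard (so that the tid predicates are determined), together with a proof that this ordering agrees with the chosen Lamport clocks; or (ii) an argument that the boundary case is vacuous for update transactions, so that the Lamport tie-break in $\vissc$ only ever fires for read-only $\txnvar'$ (which are excluded by the intersection with $\SCTXNUPDATE$). Either route requires making precise what $\lclock(\cdot)$ is attached to (start event, commit event, or something else), which the paper leaves informal; without that, the boundary equivalence you need cannot be argued rigorously.
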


% \begin{proof} \label{proof:vissc-scvis}
% \end{proof}
%%%%%%%%%%%%%%%

%%%%%%%%%%%%%%%
\begin{lemma} \label{lemma:sc-noconflictaxiom}
  $\ae \models \noconflictaxiom.$
\end{lemma}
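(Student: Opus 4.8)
The plan is to mirror the contradiction arguments used for Lemmas~\ref{lemma:wt-noconflictaxiom} and~\ref{lemma:rs-noconflictaxiom}, adapting them to the two-phase-commit machinery of $\scalg$. First I would observe that two conflicting transactions $\txnvar, \txnvar' \in \SCTXN$ write to the same key $\keyvar$; since data is sharded by key, both writes are executed on the primary of the \emph{single} shard that owns $\keyvar$. Hence their conflict is detected locally by that shard's WiredTiger engine through the $\txnvis$ rule, exactly as in the replica-set case, so that the structure of Lemma~\ref{lemma:rs-noconflictaxiom} can largely be reused at the level of a single shard.

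Next I would assume for contradiction that $\lnot(\txnvar \rel{\vissc} \txnvar') \land \lnot(\txnvar' \rel{\vissc} \txnvar)$. Unfolding Definition~\ref{def:vissc} and using that Lamport clocks are totally ordered (so $\lclock(\txnvar) \neq \lclock(\txnvar')$ breaks any tie), the first conjunct gives either $\txnvar.\committs > \txnvar'.\readts$, or $\txnvar.\committs = \txnvar'.\readts$ with $\lclock(\txnvar) > \lclock(\txnvar')$; symmetrically for the second. In particular $\txnvar.\committs \ge \txnvar'.\readts$ and $\txnvar'.\committs \ge \txnvar.\readts$, and at most one of the two boundary equalities can hold (otherwise the Lamport clocks would be forced into a cycle). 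Without loss of generality, suppose $\txnvar$ inserts its update to $\keyvar$ into $\store[\keyvar]$ on the shard primary before $\txnvar'$ does.

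Then I would invoke the conflict check at line~\code{\ref{alg:wt-db}}{\ref{line:wtupdate-call-txnvis}}. When $\txnvar'$ updates $\keyvar$, the store already holds $\txnvar$'s entry, whose stored timestamp $\tsvar$ is $\wttsnone$ (if $\txnvar$ is not yet prepared), its prepare timestamp, or its final commit timestamp. Because $\scalg$ resolves \casependingcommitupdate{} by first issuing an $\rsread$ on the same key (line~\code{\ref{alg:rs-primary}}{\ref{line:rsupdate-call-rsread}}), which loops while the entry carries $\wtprepareinprogress$, the check sees either $\wttsnone$ or the resolved commit timestamp. Since in 2PC the commit timestamp is the maximum of the participants' prepare timestamps, any intermediate prepare timestamp is at most $\txnvar.\committs$; combined with $\txnvar.\committs \ge \txnvar'.\readts$, the guard $\tsvar \neq \wttsnone \land \tsvar \le \txnvar'.\readts$ fails, so $\txnvis(\txnvar', \_, \tsvar)$ returns $\false$. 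As $\txnvar$ is not aborted (it belongs to the committed set), $\txnvar'$ would roll back, contradicting $\txnvar' \in \SCTXN$. The boundary case $\txnvar.\committs = \txnvar'.\readts$ is handled separately via the Lamport-clock disambiguation, which then yields one of the two visibility directions directly.

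The main obstacle will be the interplay with the prepare/commit phases: I must pin down precisely which timestamp resides in $\store[\keyvar]$ at the instant $\txnvar'$ performs its conflict check and bound it by $\txnvar.\committs$, even though the commit timestamp is computed only at the end of 2PC as a maximum over prepare timestamps. This requires a careful case split on whether $\txnvar$ commits as a single-shard transaction (via $\wtcommit$) or as a 2PC participant (via $\wtprepare$ then $\wtcommitprepare$), and on whether $\txnvar'$ observes $\txnvar$ in the $\wtprepareinprogress$ window. Reconciling the $\vissc$ tie-breaking on the boundary $\txnvar.\committs = \txnvar'.\readts$ with the actual insertion order into $\store[\keyvar]$ on the shard is the most delicate point, since I must ensure the Lamport-clock order is consistent with that physical order so that exactly one visibility direction holds.
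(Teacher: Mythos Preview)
Your approach coincides with the paper's: a contradiction argument localized to the single shard owning $\keyvar$, reusing the WiredTiger conflict check exactly as in Lemma~\ref{lemma:rs-noconflictaxiom}. The paper's proof is in fact briefer and less careful than your plan---it unfolds $\lnot(\txnvar \rel{\vissc} \txnvar')$ directly to the strict inequality $\txnvar'.\readts < \txnvar.\committs$ (eliding the Lamport-clock boundary case), invokes Lemma~\ref{lemma:wt-noconflictaxiom} to order the two transactions at the WiredTiger level, and then asserts that $\store[\keyvar]$ already holds $\txnvar.\committs$ when $\txnvar'$ performs its check, without ever discussing the prepare phase or the $\wtprepareinprogress$ window; the complications you flag as the ``main obstacle'' are genuine, but the paper simply does not engage with them.
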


\begin{proof} \label{proof:sc-noconflictaxiom}
  Consider any two transactions $\txnvar, \txnvar' \in \SCTXN$
  such that $\txnvar \conflict \txnvar'$.
  Suppose that they both update key $\kvar$ on the primary of some shard.
  Suppose by contradiction that
  $\lnot (\txnvar \rel{\vissc} \txnvar' \lor \txnvar' \rel{\vissc} \txnvar)$.
  By Definition~\ref{def:vissc} of $\vissc$, that is
  $(\txnvar.\readts < \txnvar'.\committs) \land (\txnvar'.\readts < \txnvar.\committs)$.

  By Lemma~\ref{lemma:wt-noconflictaxiom},
  $\txnvar$'s $\wtcommit$ finishes before $\txnvar'$'s $\wtstart$ starts
  \emph{or} $\txnvar'$'s $\wtcommit$ finishes before $\txnvar$'s $\wtstart$ starts.
  These two cases are symmetric.
  In the following, we consider the first case
  which implies that $\txnvar$ updates $\keyvar$ before $\txnvar'$ does.
  When $\txnvar'$ updates $\keyvar$,
  $\tuple{\txnvar.\tid, \_, \txnvar.\committs} \in \store[\kvar]$
  (line~\code{\ref{alg:wt-db}}{\ref{line:wtupdate-forall}}).
  However, the check $\txnvis(\txnvar', \_, \txnvar.\committs)$
  (line~\code{\ref{alg:wt-db}}{\ref{line:wtupdate-call-txnvis}}) fails
  because $\txnvar'.\readts < \txnvar.\committs$.
  Therefore, $\txnvar'$ would abort
  (line~\code{\ref{alg:wt-db}}{\ref{line:wtupdate-call-rollback}}).
\end{proof}
%%%%%%%%%%%%%%%
%%%%%%%%%%%%%%%%%%%%%%%%%
\subsubsection{The Arbitrary Relation} \label{sss:sc-ar}

%%%%%%%%%%%%%%%
\begin{definition}[$\arsc$] \label{def:arsc}
  \begin{align*}
    &\forall \txnvar, \txnvar' \in \SCTXN.\;
      \txnvar \rel{\arsc} \txnvar' \iff \\
      &\quad (\txnvar.\committs < \txnvar'.\committs) \;\lor \\
      &\quad (\txnvar.\committs = \txnvar'.\committs \land \lclock(\txnvar) < \lclock(\txnvar'))
  \end{align*}
  % \begin{align*}
  %   &\forall \txnvar, \txnvar' \in \SCTXN.\;
  %     \txnvar \rel{\arsc} \txnvar' \iff \\
  %     &\quad (\txnvar.\committs < \txnvar'.\committs) \;\lor \\
  %     &\quad \Big((\txnvar.\committs = \txnvar'.\committs) \implies \big( \\
  %       &\qquad \big(\txnvar \in \SCTXNUPDATE \land \txnvar' \in \SCTXNRO \big) \;\lor \\
  %       &\qquad \big(\txnvar' \in \SCTXNUPDATE \implies \lclock(\txnvar) < \lclock(\txnvar') \big) \;\lor \\
  %       &\qquad \big((\txnvar\in \SCTXNRO \land \txnvar' \in \SCTXNRO) \implies \lclock(\txnvar) < \lclock(\txnvar') \big)\Big).
  % \end{align*}
\end{definition}
%%%%%%%%%%%%%%%

%%%%%%%%%%%%%%%
\begin{lemma} \label{lemma:arsc-total}
  $\arsc$ is a total order.
\end{lemma}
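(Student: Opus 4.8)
The plan is to recognize $\arsc$ as the lexicographic order on the pairs $(\txnvar.\committs, \lclock(\txnvar))$ and to invoke the standard fact that the lexicographic product of two total orders is again a total order. Concretely, I would verify in turn the three properties implied by "total order" in Section~\ref{ss:relations}: irreflexivity, transitivity, and totality. Before that, I would record that both coordinates are drawn from totally ordered sets. Commit timestamps are HLCs, which by the assumption in Section~\ref{sss:rs-hlc} are compared lexicographically and hence totally ordered; moreover, every transaction in $\SCTXN$ has a well-defined commit timestamp, since update transactions obtain one in 2PC and read-only transactions are assigned $\txnvar.\committs = \txnvar.\readts$ by Definition~\ref{def:sc-committs}. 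Lamport clocks are totally ordered by construction.

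For irreflexivity, I would observe that with $\txnvar = \txnvar'$ neither disjunct of Definition~\ref{def:arsc} can hold: the first requires $\txnvar.\committs < \txnvar.\committs$ and the second requires $\lclock(\txnvar) < \lclock(\txnvar)$, both impossible. For transitivity, given $\txnvar \rel{\arsc} \txnvar' \rel{\arsc} \txnvar''$, I would do a short case split on whether the commit timestamps along the chain strictly increase or are tied: if any step is a strict increase then $\txnvar.\committs < \txnvar''.\committs$ by transitivity of $<$ on timestamps, so the first disjunct holds; otherwise all three commit timestamps coincide and the Lamport-clock comparisons chain by transitivity of $<$ on Lamport clocks, so the second disjunct holds. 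This yields $\txnvar \rel{\arsc} \txnvar''$.

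Totality is the only point that requires care, and it is where the tie-breaking device does real work. Given distinct $\txnvar, \txnvar' \in \SCTXN$, if their commit timestamps differ then totality of the timestamp order settles the comparison via the first disjunct; if they coincide, I must appeal to the \emph{distinctness} of their Lamport clocks so that exactly one of $\lclock(\txnvar) < \lclock(\txnvar')$ or $\lclock(\txnvar') < \lclock(\txnvar)$ holds, resolving the tie through the second disjunct. The main obstacle is thus not any computation but making this assumption explicit: the argument hinges on distinct transactions receiving distinct Lamport clocks (at least whenever their commit timestamps agree), which is precisely the role stated for Lamport clocks in the paragraph preceding Section~\ref{sss:vis-sc} ("to break ties when necessary"). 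I would state this assumption up front and note it is the standard construction for totalizing the commit-timestamp preorder, guaranteeing that the lexicographic combination in Definition~\ref{def:arsc} compares every pair of distinct transactions.
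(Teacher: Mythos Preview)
Your proposal is correct and takes essentially the same approach as the paper: both verify irreflexivity, transitivity, and totality directly from Definition~\ref{def:arsc}, with the paper simply declaring ``It is easy to show that $\arsc$ is irreflexive, transitive, and total'' while you spell out the lexicographic-order argument and make explicit the assumption that Lamport clocks break ties between distinct transactions.
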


\begin{proof} \label{proof:arsc-total}
  It is easy to show that $\arsc$ is irreflexive, transitive, and total.
\end{proof}
%%%%%%%%%%%%%%%

%%%%%%%%%%%%%%%
\begin{lemma} \label{lemma:vissc-arsc}
  $\vissc \subseteq \arsc.$
\end{lemma}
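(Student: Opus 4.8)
The plan is to establish the inclusion pointwise: fix $\txnvar, \txnvar' \in \SCTXN$ with $\txnvar \rel{\vissc} \txnvar'$ and show $\txnvar \rel{\arsc} \txnvar'$. The crucial ingredient is the uniform bound $\txnvar'.\readts \le \txnvar'.\committs$, which holds for \emph{every} transaction: for update transactions this is the strict inequality $\txnvar'.\readts < \txnvar'.\committs$ already invoked in the proof of Lemma~\ref{lemma:vissc-acyclic}, while for read-only transactions it is the equality $\txnvar'.\readts = \txnvar'.\committs$ fixed by Definition~\ref{def:sc-committs}. Since both $\vissc$ and $\arsc$ use the same Lamport-clock order $\lclock$ as their tie-breaker, the comparison on the secondary component carries over whenever the primary (timestamp) components coincide.

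First I would unfold $\txnvar \rel{\vissc} \txnvar'$ into its two disjuncts. In the first disjunct we have $\txnvar.\committs < \txnvar'.\readts$; chaining with $\txnvar'.\readts \le \txnvar'.\committs$ yields $\txnvar.\committs < \txnvar'.\committs$, which is exactly the first disjunct of $\arsc$, so $\txnvar \rel{\arsc} \txnvar'$.

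Next I would handle the second disjunct, $\txnvar.\committs = \txnvar'.\readts$ together with $\lclock(\txnvar) < \lclock(\txnvar')$, by a sub-split on whether $\txnvar'$ is an update or a read-only transaction. If $\txnvar'$ is an update transaction, then $\txnvar.\committs = \txnvar'.\readts < \txnvar'.\committs$ gives $\txnvar.\committs < \txnvar'.\committs$ and hence the first disjunct of $\arsc$. If $\txnvar'$ is read-only, then $\txnvar.\committs = \txnvar'.\readts = \txnvar'.\committs$, so the timestamp components are equal and the surviving hypothesis $\lclock(\txnvar) < \lclock(\txnvar')$ delivers the second disjunct of $\arsc$. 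Either way $\txnvar \rel{\arsc} \txnvar'$, completing the inclusion.

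I do not anticipate a genuine obstacle here; the lemma is essentially bookkeeping on timestamps. The only point requiring care is that the second disjunct of $\vissc$ does not always reduce to the second disjunct of $\arsc$: when $\txnvar'$ is an update transaction, the tie in $\txnvar.\committs = \txnvar'.\readts$ is \emph{broken} by the strict gap between $\txnvar'.\readts$ and $\txnvar'.\committs$, so it collapses into the first disjunct of $\arsc$ rather than the second. Keeping the update/read-only distinction explicit, rather than lumping the two disjuncts of $\vissc$ together, is what makes the argument go through cleanly.
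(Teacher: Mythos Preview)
Your proposal is correct and follows essentially the same approach as the paper, which simply states ``by a case analysis based on Definition~\ref{def:vissc} of $\vissc$ and Definition~\ref{def:arsc} of $\arsc$'' without spelling out the details. Your version fills in exactly the case analysis the paper leaves implicit, including the key observation that $\txnvar'.\readts \le \txnvar'.\committs$ holds uniformly (with equality for read-only transactions via Definition~\ref{def:sc-committs} and strict inequality for update transactions).
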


\begin{proof} \label{proof:vissc-arsc}
  By a case analysis based on Definition~\ref{def:vissc} of $\vissc$ and Definition~\ref{def:arsc} of $\arsc$.
\end{proof}
%%%%%%%%%%%%%%%

%%%%%%%%%%%%%%%
\begin{lemma} \label{lemma:sc-prefixaxiom}
  $\ae \models \prefixaxiom.$
\end{lemma}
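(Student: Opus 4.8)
The plan is to verify the inclusion $\arsc \comp \vissc \subseteq \vissc$ directly, mirroring the replica-set argument in Lemma~\ref{lemma:rs-prefixaxiom} but accounting for the Lamport-clock tie-break that appears in the sharded-cluster definitions. So I would take arbitrary transactions $\txnvar_{1}, \txnvar_{2}, \txnvar_{3} \in \SCTXN$ with $\txnvar_{1} \rel{\arsc} \txnvar_{2} \rel{\vissc} \txnvar_{3}$ and argue that $\txnvar_{1} \rel{\vissc} \txnvar_{3}$.

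First I would unfold Definition~\ref{def:arsc} of $\arsc$ and Definition~\ref{def:vissc} of $\vissc$; each is a disjunction of a strict-timestamp case and a timestamp-tie case broken by Lamport clocks, so the hypotheses produce a $2 \times 2$ case split. In the three cases where at least one of the two hypotheses holds via its strict-timestamp disjunct, I combine $\txnvar_{1}.\committs \le \txnvar_{2}.\committs$ with $\txnvar_{2}.\committs \le \txnvar_{3}.\readts$ — where at least one of these two comparisons is strict — to obtain $\txnvar_{1}.\committs < \txnvar_{3}.\readts$, which is exactly the first disjunct of $\txnvar_{1} \rel{\vissc} \txnvar_{3}$.

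The only remaining case is when both hypotheses hold via their tie-break disjuncts, i.e. $\txnvar_{1}.\committs = \txnvar_{2}.\committs$ with $\lclock(\txnvar_{1}) < \lclock(\txnvar_{2})$, and $\txnvar_{2}.\committs = \txnvar_{3}.\readts$ with $\lclock(\txnvar_{2}) < \lclock(\txnvar_{3})$. Here the timestamps collapse to $\txnvar_{1}.\committs = \txnvar_{3}.\readts$, and by transitivity of the totally ordered Lamport clocks I get $\lclock(\txnvar_{1}) < \lclock(\txnvar_{3})$, which is precisely the second disjunct of $\vissc$; hence $\txnvar_{1} \rel{\vissc} \txnvar_{3}$ again.

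I do not expect a real obstacle: once the definitions are unfolded the argument is a routine lexicographic chase over the pair $(\committs, \lclock)$. The one point requiring a little care is the fully-tied case, where I must not try to force the strict first disjunct but instead fall back on the Lamport-clock ordering; this is exactly where the tie-break in Definitions~\ref{def:vissc} and~\ref{def:arsc} earns its keep, and its transitivity — the same property invoked for acyclicity in Lemma~\ref{lemma:vissc-acyclic} — closes the case.
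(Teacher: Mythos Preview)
Your proposal is correct and follows exactly the approach the paper takes: the paper's own proof simply says to perform a case analysis on Definitions~\ref{def:arsc} and~\ref{def:vissc}, and you have carried out that $2\times 2$ split explicitly, including the lexicographic tie-break case the paper leaves implicit.
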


\begin{proof} \label{proof:sc-prefixaxiom}
  Consider transactions $\txnvar_{1}, \txnvar_{2}, \txnvar_{3} \in \SCTXN$
  such that $\txnvar_{1} \rel{\arsc} \txnvar_{2} \rel{\vissc} \txnvar_{3}$.
  It is easy to show that $\txnvar_{1} \rel{\vissc} \txnvar_{3}$
  by a case analysis based on Definition~\ref{def:arsc} of $\arsc$ and Definition~\ref{def:vissc} of $\vissc$.
\end{proof}
%%%%%%%%%%%%%%%

%%%%%%%%%%%%%%%
\begin{lemma} \label{lemma:sc-conflict-arsc-tid}
  $\forall \txnvar, \txnvar' \in \SCTXN.\;
    \txnvar \conflict \txnvar' \implies (\txnvar \rel{\arsc} \txnvar'
      \iff \txnvar.\tid < \txnvar'.\tid).$
\end{lemma}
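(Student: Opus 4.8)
The plan is to follow the template of Lemmas~\ref{lemma:wt-conflict-arwt-tid} and~\ref{lemma:rs-conflict-arrs-tid}, reducing the claimed equivalence first to a statement about $\vissc$ and then to the WiredTiger visibility rule on the shard carrying the conflict. Fix two conflicting transactions $\txnvar, \txnvar' \in \SCTXN$, and let $\keyvar$ be the key they both write; since a key lives on a single shard, both transactions are encapsulated into WiredTiger transactions on the primary of that shard, and $\txnvar.\tid$, $\txnvar'.\tid$ are the corresponding WiredTiger transaction identifiers there. By Lemma~\ref{lemma:sc-noconflictaxiom} the pair is $\vissc$-comparable, i.e. $\txnvar \rel{\vissc} \txnvar'$ or $\txnvar' \rel{\vissc} \txnvar$, and Lemma~\ref{lemma:vissc-acyclic} rules out both holding at once. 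Since conflicting transactions are update transactions, Lemma~\ref{lemma:vissc-scvis} identifies the $\vissc$ relation between them with the WiredTiger visibility set $\scvis$ of Definition~\ref{def:scvis}: $\txnvar \rel{\vissc} \txnvar'$ holds iff $\txnvar \in \scvis(\txnvar')$, i.e. iff $\txnvar$ is visible to $\txnvar'$ under $\txnvis$ on the shard.

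For the forward direction, I assume $\txnvar \rel{\arsc} \txnvar'$. By Lemma~\ref{lemma:vissc-arsc} together with acyclicity I get $\lnot(\txnvar' \rel{\vissc} \txnvar)$, so comparability forces $\txnvar \rel{\vissc} \txnvar'$; hence $\txnvar$ is visible to $\txnvar'$ in WiredTiger, and the argument of Lemma~\ref{lemma:viswt-tid} (monotone allocation of $\tid$ at lines~\code{\ref{alg:wt-db}}{\ref{line:wtupdate-txn-tid}} and~\code{\ref{alg:wt-db}}{\ref{line:wtupdate-wttxnglobal-currentid}}) yields $\txnvar.\tid < \txnvar'.\tid$. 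For the converse, I assume $\txnvar.\tid < \txnvar'.\tid$. The same monotonicity argument, applied with the roles of $\txnvar$ and $\txnvar'$ exchanged, shows $\txnvar'$ cannot be visible to $\txnvar$, i.e. $\lnot(\txnvar' \rel{\vissc} \txnvar)$; comparability then gives $\txnvar \rel{\vissc} \txnvar'$, and Lemma~\ref{lemma:vissc-arsc} upgrades this to $\txnvar \rel{\arsc} \txnvar'$.

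The main obstacle is the bridge between $\vissc$, which is phrased in terms of $\committs$, $\readts$, and a Lamport-clock tiebreak (Definition~\ref{def:vissc}), and the shard-local WiredTiger visibility rule; this is precisely what Lemma~\ref{lemma:vissc-scvis} supplies, so the remaining work is to transport the $\tid$-monotonicity reasoning of Lemma~\ref{lemma:viswt-tid} to the WiredTiger transactions underlying $\txnvar$ and $\txnvar'$ on that shard. One point deserving explicit care is the tiebreak clause shared by $\vissc$ and $\arsc$: I must verify it never decides a conflicting pair. This follows because, as in the proof of Lemma~\ref{lemma:sc-noconflictaxiom}, one of the two transactions WiredTiger-commits on the shard before the other even starts there, forcing $\txnvar.\committs < \txnvar'.\readts < \txnvar'.\committs$ (or the symmetric inequality); thus the two commit timestamps are strictly ordered and the $\lclock$ comparison is irrelevant, so $\arsc$ and $\vissc$ agree with the visibility direction on the pair.
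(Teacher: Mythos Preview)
Your proposal is correct and mirrors exactly the template the paper uses for the analogous replica-set result (Lemma~\ref{lemma:rs-conflict-arrs-tid}): use \noconflictaxiom{} to get $\vissc$-comparability, use $\vissc \subseteq \arsc$ to eliminate one direction, and reduce the remaining $\vissc$ edge to a WiredTiger-visibility fact so that the $\tid$-monotonicity argument of Lemma~\ref{lemma:viswt-tid} applies. The paper in fact omits the proof of Lemma~\ref{lemma:sc-conflict-arsc-tid} entirely (as it does for several other \scalg{} lemmas), so your argument is precisely the expected transposition of the \rsalg{} proof; your additional remark that the Lamport-clock tiebreak in Definitions~\ref{def:vissc} and~\ref{def:arsc} is never triggered on a conflicting pair is a nice clarification beyond what the paper spells out.
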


\begin{lemma} \label{lemma:sc-intaxiom}
  $\ae \models \intaxiom.$
\end{lemma}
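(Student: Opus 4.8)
The plan is to mirror exactly the structure of the analogous internal-consistency proofs for the other two deployments, namely the proofs of Lemma~\ref{lemma:wt-intaxiom} and Lemma~\ref{lemma:rs-intaxiom}, instantiated this time with $\scvis$ and with the well-definedness guarantee of Lemma~\ref{lemma:scvis-welldefined}. The key observation is that \intaxiom{} constrains only the operations \emph{within} a single transaction, so none of the distributed machinery of \scalg{} (2PC, cross-shard coordination, the prepare/commit phases) is visible to the argument; the whole burden is carried by the fact that a transaction's set of visible update transactions is fixed throughout its lifecycle.

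Concretely, I would fix a transaction $\txnvar \in \SCTXN$ and an event $\evar$ with $\op(\evar) = \readevent(\keyvar, \valvar)$ that is an \emph{internal} read, i.e.\ preceded in $\po$ by some access to $\keyvar$, and set $\evar' \triangleq \max_{\po}\set{\fvar \mid \op(\fvar) = \_(\keyvar, \_) \land \fvar \rel{\po} \evar}$. The goal is $\op(\evar') = \_(\keyvar, \valvar)$. The first step is to invoke Lemma~\ref{lemma:scvis-welldefined} to get $\scvis(\evar') = \scvis(\evar)$, so both accesses resolve against the same snapshot of committed update transactions. Then I split on the kind of $\evar'$. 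If $\op(\evar') = \readevent(\keyvar, \_)$, both reads skip exactly the same (invisible) entries of the store and hence return the same value, so $\op(\evar') = \readevent(\keyvar, \valvar)$. If $\op(\evar') = \writeevent(\keyvar, \_)$, then by the self-visibility conjunct $\tidvar \neq \txnvar.\tid$ of the rule in \txnvis{} (line~\code{\ref{alg:wt-db}}{\ref{line:txnvis-invisible}}) the write $\evar'$ is visible to $\evar$, and being the $\po$-latest write to $\keyvar$ in $\txnvar$ it is the one $\evar$ reads, so $\op(\evar') = \writeevent(\keyvar, \valvar)$.

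The one place that needs a little more care, and which I expect to be the main obstacle, is verifying that the \scalg{}-specific read path does not disturb this reasoning. In particular, the retry loop of \rsread{} (lines~\code{\ref{alg:rs-primary}}{\ref{line:rsread-call-wtread}}--\code{\ref{alg:rs-primary}}{\ref{line:rsread-while-status}}) keeps reading until it returns a value not tagged $\wtprepareinprogress$, and newly committed transactions may be prepended to the store between the times $\evar'$ and $\evar$ execute. I must argue two things: first, that any entry prepended after $\txnvar$ started lies outside the fixed set $\scvis(\txnvar)$ and is therefore skipped, so the first \emph{visible} entry encountered is identical for $\evar'$ and $\evar$; and second, that when $\evar$ reads a key previously written by $\txnvar$ itself, the transaction's own update is not in the $\wtprepareinprogress$ phase during the transaction's active execution (that phase is set only at 2PC-prepare time, after all of $\txnvar$'s read/update operations have completed), so the retry terminates immediately on $\txnvar$'s own write. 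Once these two observations are in place, the case analysis above goes through verbatim and the remainder is identical to the WiredTiger and replica-set arguments.
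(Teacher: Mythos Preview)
Your proposal is correct and follows essentially the same approach as the paper: fix an internal read $\evar$ and its $\po$-latest predecessor $\evar'$ on the same key, invoke Lemma~\ref{lemma:scvis-welldefined} to conclude that both events see the same set of visible update transactions, and then argue (by case analysis on whether $\evar'$ is a read or a write) that $\evar$ returns the same value as $\evar'$. The paper's own proof is actually terser than the analogous Lemmas~\ref{lemma:wt-intaxiom} and~\ref{lemma:rs-intaxiom}: it collapses the two cases into the single observation that ``no additional update operations on $\keyvar$ (from other transactions) are visible to $\evar$'' and does not spell out the \scalg{}-specific retry-loop/prepare-phase subtlety that you took care to address; your extra paragraph on that point is a genuine improvement in rigor over the paper's treatment.
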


\begin{proof} \label{proof:sc-intaxiom}
  Consider a transaction $\txnvar \in \SCTXN$.
  Let $\evar = (\_, \readevent(\kvar, \vvar))$
  be an \emph{internal} $\rsread$ operation in $\txnvar$.
  Moreover, $\po^{-1}(\evar) \cap \HEvent_{\evar} \neq \emptyset$.
  Let $\evar' \triangleq \max_{\po}(\po^{-1}(e) \cap \HEvent_{\kvar})$.
  By Lemma~\ref{lemma:scvis-welldefined}, compared with $\evar'$,
  no additional update operations on $\kvar$ (from other transactions)
  are visible to $\evar$ at line~\code{\ref{alg:wt-db}}{\ref{line:wtread-call-txnvis}}.
  Thus, $e' = (\_, \_(\kvar, \vvar))$.
\end{proof}
%%%%%%%%%%%%%%%

%%%%%%%%%%%%%%%
\begin{lemma} \label{lemma:sc-extaxiom}
  $\ae \models \extaxiom.$
\end{lemma}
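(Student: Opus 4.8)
The plan is to mirror the proofs of Lemma~\ref{lemma:wt-extaxiom} and Lemma~\ref{lemma:rs-extaxiom}, reducing the external-read obligation to the WiredTiger list-order reasoning. First I would fix a transaction $\txnvar \in \SCTXN$ and an event $\evar$ with $\op(\evar) = \readevent(\keyvar, \valvar)$ that is an \emph{external} $\rsread$ operation in $\txnvar$, and set $W \triangleq \vissc^{-1}(\txnvar) \cap \WriteTx_{\keyvar}$, the set of visible writers of $\keyvar$. If $W = \emptyset$, then $\evar$ returns the initial value $\keyvar_{0}$ and the claim holds vacuously, so I assume $W \neq \emptyset$.

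For the nonempty case, I would invoke Lemma~\ref{lemma:vissc-scvis} to rewrite $W = \scvis(\txnvar) \cap \WriteTx_{\keyvar}$; that is, $W$ is exactly the set of update transactions that pass the visibility check $\txnvis$ (line~\code{\ref{alg:wt-db}}{\ref{line:wtread-call-txnvis}}) and write $\keyvar$. Since every member of $W$ writes $\keyvar$, the transactions in $W$ pairwise conflict, so Lemma~\ref{lemma:sc-conflict-arsc-tid} applies and shows that $\arsc|_{W}$ coincides with the increasing $\tid$ order on $W$. This $\tid$ order is precisely the order in which the versions of $\keyvar$ are laid out in $\store[\keyvar]$ (newest $\tid$ at the front), so the first visible version scanned by $\wtread$ is the one written by $\max_{\arsc} W$. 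Hence $\evar$ reads from $\max_{\arsc} W$, and therefore $\max_{\arsc} W \vdash \writeevent(\keyvar, \valvar)$, as required.

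The step needing the most care is justifying that this single-store argument is legitimate in the sharded setting. All writes to a fixed key $\keyvar$ reside in the WiredTiger instance of the one shard responsible for $\keyvar$, so the list-order reasoning localizes to that single store and the distribution of $\txnvar$ across shards is irrelevant to a single-key read. I also need to confirm that the Lamport-clock tie-breaking built into $\vissc$ and $\arsc$ does not disturb the conclusion: it does not, because the members of $W$ pairwise conflict and Lemma~\ref{lemma:sc-conflict-arsc-tid} pins down $\arsc$ among them by $\tid$ alone, independently of how ties in commit timestamps are resolved. Finally, because an external $\scalg{}$ read retries until it obtains a value without the $\wtprepareinprogress$ flag (line~\code{\ref{alg:rs-primary}}{\ref{line:rsread-call-wtread-in-while}}) and the sharded-cluster 2PC variant always commits, every prepared writer in $W$ is guaranteed to finalize; thus the version eventually returned is the committed version written by $\max_{\arsc} W$ rather than a transient prepared one.
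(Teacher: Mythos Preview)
Your proposal is correct and follows essentially the same approach as the paper: fix an external read, set $W = \vissc^{-1}(\txnvar) \cap \WriteTx_{\keyvar}$, handle the empty case, invoke Lemma~\ref{lemma:vissc-scvis} to identify $W$ with the transactions passing $\txnvis$, then use Lemma~\ref{lemma:sc-conflict-arsc-tid} to align $\arsc|_{W}$ with the $\tid$ (list) order and conclude. Your additional paragraphs on why the single-shard localization is sound, why Lamport tie-breaking is irrelevant on $W$, and why the $\wtprepareinprogress$ retry loop guarantees a committed value are useful elaborations that the paper leaves implicit.
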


\begin{proof} \label{proof:sc-extaxiom}
  Consider a transaction $\txnvar \in \SCTXN$.
  Let $e = (\_, \readevent(k, v))$
  be an \emph{external} $\scread$ operation in $\txnvar$.
  Consider the set
  $W = \vissc^{-1}(\txnvar) \cap \set{\txnvar \mid \txnvar \vdash \Write k : \_}$
  of transactions that update key $\keyvar$ and are visible to $\txnvar$.
  If $W$ is empty, obviously $e$ obtains the initial value of $\keyvar$.
  Now suppose that $W \neq \emptyset$.
  By Lemma~\ref{lemma:vissc-scvis},
  it is also the set of transactions that pass the check of $\txnvis(\txnvar, \_)$
  for key $\keyvar$ (line~\code{\ref{alg:wt-db}}{\ref{line:wtread-call-txnvis}}).

  % By Lemma~\ref{lemma:rs-noconflictaxiom},
  % the $\visrs$ order among the transactions in $W$ are total.
  % By Lemma~\ref{lemma:visrs-arrs},
  % $\visrs|_{W} = \arrs|_{W}$.
  By Lemma~\ref{lemma:sc-conflict-arsc-tid},
  the $\arsc|_{W}$ order is consistent with the increasing $\tid$ order of the transactions in $W$,
  which is the list order at line~\code{\ref{alg:wt-db}}{\ref{line:wtread-call-txnvis}}.
  Therefore, $\max_{\arsc} W \vdash \Write \kvar : \vvar$.
\end{proof}
%%%%%%%%%%%%%%%

%%%%%%%%%%%%%%%
\begin{theorem} \label{thm:scalg-sessionsi}
  $\scalg \models \sessionsi.$
\end{theorem}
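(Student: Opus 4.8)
The plan is to follow the same assembly strategy already used for Theorems~\ref{thm:wtalg-strongsi} and~\ref{thm:rsalg-rtsi}. Unfolding Definitions~\ref{def:sessionsi} and~\ref{def:si}, the target reduces to $\sessionsi = \intaxiom \land \extaxiom \land \prefixaxiom \land \noconflictaxiom \land \sessionaxiom$, i.e.\ the four classic \si{} axioms plus the session axiom. For an arbitrary history $\h$ of $\scalg$ (restricted to committed transactions), I would construct the abstract execution $\ae = (\h, \vissc, \arsc)$ with $\vissc$ and $\arsc$ as in Definitions~\ref{def:vissc} and~\ref{def:arsc}, where read-only transactions are assigned commit timestamps by Definition~\ref{def:sc-committs} and Lamport clocks break ties.

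First I would check that $\ae$ is a well-formed abstract execution (Definition~\ref{def:ae}): $\vissc$ is acyclic by Lemma~\ref{lemma:vissc-acyclic}, $\arsc$ is a strict total order by Lemma~\ref{lemma:arsc-total}, and $\vissc \subseteq \arsc$ by Lemma~\ref{lemma:vissc-arsc}. Then I would discharge the five axioms by invoking the lemmas of this subsection: $\intaxiom$ from Lemma~\ref{lemma:sc-intaxiom}, $\extaxiom$ from Lemma~\ref{lemma:sc-extaxiom} (which in turn rests on the visibility characterization of Lemma~\ref{lemma:vissc-scvis} and the conflict-arbitration characterization of Lemma~\ref{lemma:sc-conflict-arsc-tid}), $\prefixaxiom$ from Lemma~\ref{lemma:sc-prefixaxiom}, $\noconflictaxiom$ from Lemma~\ref{lemma:sc-noconflictaxiom}, and $\sessionaxiom$ from Lemma~\ref{lemma:sc-session}. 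Conjoining these yields $\ae \models \sessionsi$, and since $\h$ is arbitrary, $\scalg \models \sessionsi$.

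The step I would flag as the crux is the session axiom $\sorel \subseteq \vissc$, since it is the single ingredient separating $\sessionsi$ from plain $\si$. Its justification (Lemma~\ref{lemma:sc-session}) relies on the cluster-time propagation discipline described in Section~\ref{sss:rs-hlc}: a client ships its latest cluster time along its session and every node advances its clock to dominate any incoming time, so a transaction's read timestamp is forced above the commit timestamp of its session predecessor. Equally important is to explain why I do \emph{not} attempt the stronger real-time axioms here. Unlike $\rsalg$, a $\scalg$ read timestamp is chosen by a mongos from loosely synchronized HLCs, and the pending-commit and clock-skew delays of Section~\ref{sss:sc-snapshot} can let a transaction observe one that has not returned before it started; thus $\rbaxiom$ (and hence $\cbaxiom$ and $\inrbaxiom$) may genuinely fail. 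This is precisely why $\scalg$ settles at $\sessionsi$ rather than $\rtsi$ or $\strongsi$. Consequently the real weight of the argument lies not in this theorem but in the supporting lemmas --- especially Lemmas~\ref{lemma:scvis-welldefined}, \ref{lemma:vissc-scvis}, and~\ref{lemma:sc-conflict-arsc-tid} --- which lift the WiredTiger visibility and $\tid$ reasoning through the sharded-cluster timestamp assignment.
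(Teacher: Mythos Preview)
Your proposal is correct and matches the paper's proof essentially line-for-line: the paper also constructs $\ae = (\h, \vissc, \arsc)$ from Definitions~\ref{def:vissc} and~\ref{def:arsc} and then invokes exactly Lemmas~\ref{lemma:vissc-acyclic}, \ref{lemma:sc-session}, \ref{lemma:sc-noconflictaxiom}, \ref{lemma:arsc-total}, \ref{lemma:vissc-arsc}, \ref{lemma:sc-prefixaxiom}, \ref{lemma:sc-intaxiom}, and \ref{lemma:sc-extaxiom} to conclude. Your additional remarks on why $\sessionaxiom$ is the crux and why the real-time axioms are dropped are accurate context but go beyond what the paper records in the proof itself.
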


\begin{proof} \label{proof:scalg-sessionsi}
  For any history $\h$ of $\scalg$,
  we construct an abstract execution $\ae = (\h, \vissc, \arsc)$,
  where $\vissc$ and $\arsc$ are given in Definitions~\ref{def:vissc}
  and \ref{def:arsc}, respectively.
  By Lemmas~\ref{lemma:vissc-acyclic}, \ref{lemma:sc-session}, \ref{lemma:sc-noconflictaxiom},
  \ref{lemma:arsc-total}, \ref{lemma:vissc-arsc}, \ref{lemma:sc-prefixaxiom},
  \ref{lemma:sc-intaxiom}, and \ref{lemma:sc-extaxiom},
  $\ae \models \sessionsi$.
  Since $\h$ is arbitrary, $\scalg \models \sessionsi$.
\end{proof}
%%%%%%%%%%%%%%%
%%%%%%%%%%%%%%%%%%%%%%%%%%%%%%

%%%%%%%%%%%%%%%%%%%%%%%%%%%%%%

%%%%%%%%%%%%%%%%%%%%%%%%%%%%%%%%%%%%%%%%%%%%%%%%%%%%%%
\end{document}